\newcommand{\indep}{\rotatebox[origin=c]{90}{$\models$}}
\pgfplotsset{compat=1.5}
\title{\bfseries\Large Paper Title \vspace*{-1ex}}
\author{\large\itshape Author Name \thanks{}}
\date{\footnotesize \number\month\ $\cdot$\ \number\day\ $\cdot$\ \number\year}
\titleformat{\section}[block]{\centering\large\bfseries}{\thesection.}{0.5em}{}
\titleformat{\subsection}[block]{\flushleft\bfseries}{\thesubsection.}{0.5em}{}
\titleformat{\subsubsection}[runin]{\normalsize\itshape}{\bfseries\thesubsubsection.}{0.5em}{}[.\:]
\renewcommand{\thesubsubsection}{\arabic{section}.\arabic{subsection}.\alph{subsubsection}}
\titlespacing{\section}{0ex}{6ex}{3ex}
\titlespacing{\subsection}{0in}{3ex}{1.5ex}
\titlespacing{\subsubsection}{0mm}{2ex}{0.5em}
\renewcommand{\linespread}[1]{\setstretch{1}}
\newenvironment{keyword}{\noindent\textsc{Keywords:} }{}
\newmdtheoremenv[style=myenvs]{prop}{Proposition}[section]
\newtheorem{ass}{Assumption}
\theoremstyle{definition}
\newtheorem*{definition}{Definition}
\newmdtheoremenv[style=myenvs]{theorem}{Theorem}[section]
\newmdtheoremenv[style=myenvs]{lemma}{Lemma}[section]
\newmdtheoremenv[style=myenvs]{alg}{Algorithm}[section]
\newtheorem{remark}{Remark}[section]
\newtheorem{example}{Example}[section]
\newmdtheoremenv[style=myenvs]{cor}{Corollary}
\newtheoremstyle{named}{}{}{\itshape}{}{\bfseries}{.}{.5em}{#1 \thmnote{#3}}
\theoremstyle{named}
\renewcommand{\P}{\mathbb{P}}
\newcommand{\E}{\mathbb{E}}
\title{\vspace{-2cm} Generalized Optimal Transport\thanks{We thank Andres Santos, Denis Chetverikov, Bulat Gafarov, Xiaohong Chen, Rosa Matzkin, Jinyong Hahn, Kirill Ponomarev, Grigory Franguridi, Manu Navjeevan and Daniel Ober-Reynolds for the valuable discussions and criticisms.}}
\author{Andrei Voronin\thanks{Department of Economics, UCLA.  Email: \url{avoronin@ucla.edu}.}}
\date{July 29, 2025}
\begin{document}
\renewcommand{\abstractname}{\vspace{-\baselineskip}}
\doparttoc % Tell to minitoc to generate a toc for the parts
\faketableofcontents % Run a fake tableofcontents command for the partocs
\parttoc
\maketitle
\begin{abstract}
Many causal and structural parameters in economics can be identified and estimated by computing the value of an optimization program over all distributions consistent with the model and the data. Existing tools apply when the data is discrete, or when only disjoint marginals of the distribution are identified, which is restrictive in many applications. We develop a general framework that yields sharp bounds on a linear functional of the unknown true distribution under i) an arbitrary collection of identified joint subdistributions and ii) structural conditions, such as (conditional) independence. We encode the identification restrictions as a continuous collection of moments of characteristic kernels, and use duality and approximation theory to rewrite the infinite-dimensional program over Borel measures as a finite-dimensional program that is simple to compute. Our approach yields a consistent estimator that is $\sqrt{n}$-uniformly valid for the sharp bounds. In the special case of empirical optimal transport with Lipschitz cost, where the minimax rate is $n^{2/d}$, our method yields a uniformly consistent estimator with an asymmetric rate, converging at $\sqrt{n}$ uniformly from one side.\\

\vspace{.5cm}
\begin{keyword}
optimal transport, causality, partial identification, bounds, infinite-dimensional optimization. 
\end{keyword}

\end{abstract}

%%%%%%%%%%%%%%%%%%%%%%%%%%%%%%%%%%%%%%%%%%%%%%%%%%%%%%%%%%%%%%%%%%%%%%%%%
%%%% Main text entry area:
%%%%%%%%%%%%%%%%%%%%%%%%%%%%%%%%%%%%%%%%%%%%%%%%%%%%%%%%%%%%%%%%%%%%%%%%%
\newpage
\section{Introduction}\label{s1}

Many causal and structural parameters in economics can be identified and estimated by solving an optimization problem over all distributions consistent with the economic model and the empirical evidence. For example, one may wish to find the smallest value of a long-term average treatment effect that is consistent with both experimental and observational datasets (as in \citet{athey2020combining}, \citet{aizer2024lifetime}, and \citet{obradovic2024identification}), or to assess matching efficiency in a marriage market (as in \citet{chiappori2017partner}) by comparing the realized surplus to the maximum achievable surplus given the observed marginal distributions of partner characteristics. In entry games (e.g. \citet{tamer2003incomplete}, \citet{gurussell}), English auctions (\citet{auctions}) and network formation models (e.g. \citet{miyauchi2016structural}, \citet{de2018identifying}), assessing whether a structural parameter is compatible with the observed distribution of equilibria amounts to checking whether the minimum violation of the model over all data-consistent distributions is zero. In each case, the parameter of interest is the value of an optimization problem over a set of distributions constrained by model structure and data.

This paper introduces a general identification and estimation framework for such problems, which we refer to as \textit{Generalized Optimal Transport} (GOT). Let $T \in \mathbb{R}^d$ denote the vector of variables, both observed and unobserved, that is restricted by an an economic model to a known subset $\mathcal{T} \subseteq \mathbb{R}^d$. Suppose $T$ is distributed according to a true unknown probability measure $\P$, and the parameter of interest is of the form $\E_\P[b(T)]$ for some identified function $b$. Using the theory of characteristic kernels (see \citet{steinwart2021strictly}), we show that two common types of identification conditions -- i) identification of certain joint distributions of subvectors of $T$, and ii) (conditional) independence restrictions on components of $T$ -- can both be represented as moment equalities satisfied by the true distribution $\P$. Sharp bounds on $\E_\P[b(T)]$ can then be obtained by solving optimization problems over all distributions supported on $\mathcal{T}$ that satisfy these conditions. 

GOT accommodates empirical settings that go beyond those typically handled by existing tools, such as linear programming (e.g. \citet{honoreadriana2006}, \citet{santos}, \citet{laffers2019}) or classical optimal transport (\citet{ober2023estimating}, \citet{torous2024optimal}). In particular, it allows for continuously distributed variables, overlapping identified marginals (e.g., $(T_1, T_2)$ and $(T_2, T_3)$, but not their joint), and structural assumptions such as (conditional) independence. These features are central to many modern empirical applications, including \citet{athey2020combining}, \citet{aizer2024lifetime}, \citet{obradovic2024identification}, and \citet{luo2024selecting}.

We develop a computationally simple procedure for finding the value of GOT. First, using duality theory for linear programs (LP) in Banach spaces (\citet{shapiro2001duality}) and the exact penalization approach of \citet{voronin2025linear}, we rewrite GOT as an unconstrained convex program over an $\mathcal{L}^2$ space. Projecting onto a subspace of polynomials of order up to $J$, we obtain a finite-dimensional semi-infinite LP that can be easily solved using cutting-plane methods (see \citet{reemtsen1998numerical}). Crucially, we exploit self-adjointness of the projection operator to derive a Jackson-type bound on the approximation bias of the form $C J^{-r}$, where $C, r > 0$ depend only on known parameters and kernel smoothness.

When the identified components of the model are estimated from data, we construct a computationally simple and consistent estimator of the sharp upper bound\footnote{The same conclusions hold for the lower bound, since $\min_P \E_P[b(T)]=-\max_P-\E_P[b(T)]$.} on $\E_\P[b(T)]$. This estimator converges from below at the uniform rate $\sqrt{n}$, i.e., it is $\sqrt{n}$-uniformly valid for the upper bound, and pointwise consistent from above. The lack of $\sqrt{n}$-uniform consistency from above reflects a positive, vanishing bias term, closely related to the ill-posed inverse problem (see \citet{horowitz2011applied}). Fortunately, the bias is conservative—our estimator does not underestimate the upper bound and overestimate the lower bound uniformly.

GOT nests several important special cases. When $T$ is discrete, it reduces to a linear program and recovers classical identification results (see \citet{balke1994counterfactual, balke1997bounds}, \citet{santos}, \citet{torgovitsky2019partial}, \citet{laffers2019}). When $T$ is supported on a product space and only its disjoint marginals are identified, GOT coincides with multimarginal optimal transport (OT), which has seen growing use in economics (e.g., \citet{galichon2011}, \citet{galichon2016optimal}, \citet{carlier2016vector}, \citet{ober2023estimating}, \citet{gunsilius2025primer}). \citet{manole2024sharp} show that empirical two-marginal OT with a Lipschitz cost suffers from the curse of dimensionality, achieving the minimax rate of $n^{2/d}$. Our estimator remains valid for the upper bound, given by the OT value, at rates up to $\sqrt{n}$. Moreover, for any $\eta \in (0,1]$ chosen by the researcher, we provide an estimator that converges to the OT value at $n^{(1- \eta d/(d+2))/2}$ from below and $n^{\eta/(d+2)}$ from above uniformly.

This work builds on the insight, developed in discrete settings by \citet{balke1994counterfactual, balke1997bounds}, \citet{santos}, \citet{torgovitsky2019partial}, and \citet{laffers2019}, that partial identification problems can often be formulated as optimization problems over distributions. While GOT is also related to the literature on moment (in)equalities (\citet{andrews2013inference, ANDREWS2017275}, \citet{armstrong2014weighted, armstrong2016multiscale}, \citet{chernozhukov2019inference, andrews2023}, \citet{chernozhukovneweysantos2023}), in our case a continuum of moment restrictions involve probability measures as their infinite-dimensional parameters, precluding the application of previously developed methods. There is extensive literature on estimation and inference in the special cases of GOT: the empirical LP (\citet{bhattacharya2009inferring}, \citet{santos}, \citet{semenova2023adaptive}, \citet{chorussel2023}, \citet{gafarov2024simple}, \citet{voronin2025linear}) and the empirical optimal transport (\citet{ober2023estimating}, \citet{manole2024sharp}, \citet{sadhu2024stability}, \citet{hundrieser2024unifying}). More broadly, GOT is related to the literature on partial identification (\citet{MP2000, MP2009}, \citet{beresteanu2011sharp}, \citet{santos}) and estimation of partially identified models (\citet{beresteanu2008asymptotic}, \citet{CHT}, \citet{CLR}).

The rest of the paper is organized as follows. Section \ref{section_identif} introduces the setup, discusses key applications, and develops general identification theory. Section \ref{section_estimation} develops the estimation theory. Section \ref{section_bias} analyzes the vanishing conservative bias term and examines the special case of optimal transport. Section \ref{section_simulation} presents simulation evidence.

\section{Setup and identification}\label{section_identif}
Let $T \in \mathbb{R}^{d}$ be the vector of all economic variables in the model, distributed according to an unknown true probability measure $\mathbb{P}$. The object of interest is the value of a linear functional $\mathbb{E}_P[b(T)]$ evaluated at $P = \mathbb{P}$, where $b$ is an identified cost function. We suppose that two types of identification conditions are available. Firstly, the researcher restricts the support of $T$ to a known compact subset $\mathcal{T} \subseteq \mathbb{R}^d$. Rescaling $T$ if needed, we suppose that $\mathcal{T} \subseteq [-1,1]^d$ without loss of generality. Secondly, for a separable measure space $(S, \mathcal{S}, \nu)$ and identified maps $a$ and $c$, there is a collection of moment restrictions $\mathbb{E}_P[a(T)(s)] = c(s), ~ \nu-\text{a.s. in } s \in S$ that are assumed to be satisfied at $P = \mathbb{P}$. 

The value $\mathbb{E}_{\mathbb{P}}[b(T)]$ of the functional of interest may not be point-identified under the imposed restrictions.  Let $\mathcal{P}$ be the collection of all Borel probability measures supported on $\mathcal{T}$, and define $\theta = (a,b,c)$. Moreover, suppose that $\beta(\theta)$ is the sharp upper bound on the value of interest. It follows that
\begin{align*}
    \text{(GOT)}: \quad \beta(\theta) = \sup_{P \in \mathcal{P}}\mathbb{E}_P[b(T)],  \quad \text{s.t.:} ~ \mathbb{E}_P[a(T)(s)] = c(s) ~ \text{for } s \in S, ~  \nu-\text{a.s.}
\end{align*}
We call the problem (GOT) the generalized optimal transport problem. The rest of this section discusses the applications of (GOT). Any restrictions introduced for that purpose do not apply to the rest of the paper. 
% \begin{example}
%     Consider estimating the effect of a treatment $D$ on an outcome $Y$ using an instrument $Z$ that affects $Y$ only through $D$ and is independent of unobserved heterogeneity. When $(Y, D, Z) \in \{0,1\}^3$, \citet{balke1994counterfactual, balke1997bounds} recast the problem using a response type vector $R = (R_y, R_d) \in \{0,1,2,3\}^2$, where $R_d$ indexes latent compliance types (e.g. \textit{compliers}, \textit{defiers}), and $R_y$ potential responses of $Y$ to $D$. If $R_y = 1$ denotes individuals whose outcomes switches from $0$ to $1$ when treated, while $R_y = 2$ those whose outcome switches from $1$ to $0$, the average causal effect (ACE) is $\mathbb{E}_P[\mathds{1}\{R_y = 1\} - \mathds{1}\{R_y = 2\}]$. The sharp upper bound on ACE is the value of a linear program over $P(R) = (P(R_d = i, R_y = j))_{i, j}$, subject to i) $8$ restrictions of form $m_{ydz}'P(R) = (\mathbb{P}[Y=y, D =d| Z=z])_{y,d,z} $, with known $m_{ydz}$, ii) total probability is $1$, and iii) non-negativity (\citet{balke1994counterfactual}). This setup can be nested in (GOT) by setting $T = (\mathds{1}\{R_k = j\})_{k \in \{y,d\}, j},$ $S = (s_{ydz})_{y,d,z} =[16],$ with counting measure $\nu$, $b(T) = \mathds{1}\{R_y = 1\} - \mathds{1}\{R_y = 2\},$ $c(s_{ydz}) = \P[Y = y, D= d|Z = z],$ and $a(T)(s_{ydz}) = m'_{ydz}(\mathds{1}\{R_d = i\} \mathds{1}\{R_y =j\})_{i,j}.$
% \end{example}
\subsection{Examples}
The simplest applications of (GOT) arise in discrete settings. In case the sets $\mathcal{T}$ and $S$ are finite, (GOT) reduces to a finite-dimensional linear program (LP), a tool long used for partial identification of treatment effects (see, e.g., \citet{balke1994counterfactual, balke1997bounds}, \citet{santos}, \citet{torgovitsky2019partial}, \citet{laffers2019},  \citet{voronin2025linear}).
\begin{example}[LP]\label{example_lp}
    \citet{balke1994counterfactual, balke1997bounds} study identification of the effect of a treatment $D$ on an outcome $Y$, using an instrument $Z$ that affects $Y$ only through $D$ and is independent of unobserved heterogeneity. When $(Y,D,Z) \in \{0,1\}^3$, they characterize agents by a response type $R = (R_y, R_d) \in \{0,1,2,3\}^2$, where $R_d$ indexes latent compliance types (e.g. \textit{compliers}, \textit{defiers}), and $R_y$ potential responses of $Y$ to $D$. For instance, $R_y = 1$ for individuals whose outcome switches from $0$ to $1$ when treated, while $R_y = 2$ captures the reverse. The average causal effect (ACE) is then $\mathbb{E}_\P[\mathds{1}\{R_y = 1\} - \mathds{1}\{R_y = 2\}]$. The sharp upper bound on ACE is the value of a linear program over $p = (P(R_d = i, R_y = j))_{i, j}$, subject to: i) $8$ restrictions of the form $m_{ydz}'p = \mathbb{P}[Y=y, D =d| Z=z] $, with known $m_{ydz}$, ii) total probability one, and iii) non-negativity. This setup can be nested in (GOT) by setting $T = (\mathds{1}\{R_k = j\})_{k \in \{y,d\}, j},$ $S = (s_{ydz})_{y,d,z} =[16],$ with counting measure $\nu$, $b(T) = \mathds{1}\{R_y = 1\} - \mathds{1}\{R_y = 2\},$ $c(s_{ydz}) = \P[Y = y, D= d|Z = z],$ and $a(T)(s_{ydz}) = m'_{ydz}(\mathds{1}\{R_d = i\} \mathds{1}\{R_y =j\})_{i,j}.$
\end{example}
As we show below, (GOT) nests multimarginal optimal transport (OT), which has a variety of economic applications (e.g., \citet{galichon2011}, \citet{galichon2016optimal}, \citet{carlier2016vector}, \citet{luo2024selecting}). Unlike in Example \ref{example_lp}, OT settings often involve uncountable $\mathcal{T}$ and $S$. We defer (GOT) characterizations of the remaining examples to the sequel.
\begin{example}[OT]\label{example_ot}
    Let \( X \in \mathbb{R}^{d_x} \) and \( Y \in \mathbb{R}^{d_y} \) denote buyer and seller characteristics in a market over contracts \( Z \in \mathcal{Z} \), a compact metric space. Buyers with characteristics \( x \) derive utility \( u(x,z) \) from the consumption of $z$, and sellers with characteristics \( y \) incur cost \( v(y,z) \) to produce it. Assume \( u \in C(\mathcal{X} \times \mathcal{Z}) \) and \( v \in C(\mathcal{Y} \times \mathcal{Z}) \) are identified. Suppose also that the marginal distributions \( \mu_X, \mu_Y \) of \( X, Y \) with known compact supports \( \mathcal{X}, \mathcal{Y} \) are identified, and the joint distribution \( \mu^* \) of matched pairs \( (X, Y) \) is also identified. A natural question is whether the realized allocation is a \textit{hedonic} equilibrium, or whether the observed matching is \textit{stable}. \citet{chiapporimccannnesheim} show that both are equivalent to \( \mu^* \) being \textit{efficient}, i.e., achieving the maximal surplus: $
        \int \sup_z \big[ u(x,z) - v(y,z) \big] \, \mathrm{d}\mu^*(x,y) = \sup_{\mu \in \mathrm{M}(\mu_X, \mu_Y)} \int \sup_z \big[ u(x,z) - v(y,z) \big] \, \mathrm{d}\mu(x,y),$
    where \( \mathrm{M}(\mu_X, \mu_Y) \) denotes the set of distributions on \( \mathcal{X} \times \mathcal{Y} \) with marginals \( \mu_X, \mu_Y \). The maximal surplus on the right-hand-side can be estimated using (GOT).
\end{example}
% While linear programming (LP) and optimal transport (OT) are nested in (GOT), the results in this paper are most useful in the scenarios, where the generality of LP and OT is insufficient. This is true in the setups where $T$ is not discrete, and the identification pattern does not collapse to the identification of two or more disjoint marginals of $T$. The added generality of (GOT) allows to obtain sharp bounds in the problems where \textit{overlapping} joint distributions are observed on different parts of the sample, and where the model includes additional distributional restrictions on $\P,$ such as (conditional) independence of the components of $T$. 

While linear programming and optimal transport are both nested within (GOT), our framework is most useful when their generality falls short—namely, when $T$ is continuously distributed and the identification pattern does not reduce to observing disjoint marginals. (GOT) accommodates settings with \emph{overlapping} identified marginals—e.g., the joint distributions of $T_1, T_2$, and $T_2,T_3$ are identified—and additional structure on $\mathbb{P}$, such as (conditional) independence between components of $T$ (see \citet{galichon2011}, \citet{athey2020combining}, \citet{aizer2024lifetime}, \citet{obradovic2024identification}, \citet{luo2024selecting}).
\begin{example}[Mixed matching]\label{example_mixedmatching}
    As shown by \citet{galichon2011}, many econometric models can be written as $Y \in G(U,X,\alpha_0) ~ \text{a.s.},$ where $Y$ and $X$ are the observed outcome and covariates resp., $U$ is unobserved heterogeneity, $\alpha_0$ is an unknown parameter, and $G(\cdot)$ is a known correspondence. Let $T = (Y,X,U)$ be distributed on a known compact set $\mathcal{T}$. Then $\alpha_0$ is consistent with the model iff there exists a distribution $P_{Y, X, U},$ which agrees with the identified marginal $\P_{Y,X} = P_{Y,X},$ and is supported on a set $\{(y,x,u) \in \mathcal{T}: y \in G(u,x,\alpha_0) \}$. This is equivalent to checking whether the value of the problem, which falls within (GOT), $\inf_{P \in \mathcal{P}} \int  b(y,x,u) \text{d}P(y, x, u), \text{ s.t.: } \P_{Y,X} = P_{Y,X},$ is equal to $0$, where $b(T) = \text{dist}(Y, G(U, X, \alpha_0))$. (GOT) also allows for additional distributional restrictions: for example, assuming that for $X = (X_1, X_2),$ we have $X_1 \indep U$. For optimal transport, this becomes infeasible, when $T$ is continuous (\citet{luo2024selecting}).
\end{example}
\begin{example}[Combining datasets]\label{example_atheychetty}
    Building on \citet{athey2020combining}, suppose the sample consists of two parts: (i) a short-term experimental sample where the joint distribution of a short-term outcome $W \in \mathcal{W}$, treatment $D \in \{0,1\}$, and instrument $Z \in \mathcal{Z}$ is observed, and (ii) a long-term observational sample identifying the joint distribution of the long-term outcome $Y \in [\underline{Y}, \overline{Y}]$ and $D, W$, but not $Z$. Assume unobserved potential outcomes $(V(d))_{d=0,1}$, with $V = DV(1) + (1-D)V(0)$ for $V = W, Y,$ and instrument exogeneity, $Z \indep (Y(j), W(j) )_{j=0,1}$. Potential target parameters include the long-term ATE, $\E[Y(1) - Y(0)]$, the ATE on the treated, $\E[Y(1) - Y(0) \mid D=1]$, and the proportion benefiting from treatment, $\P[Y(1) \geq Y(0)]$. Sharp bounds on each can be obtained using (GOT).
\end{example}
\subsection{Cost function and support restrictions}
Support restrictions in the form of a known compact\footnote{Compactness of $\mathcal{T}$ is not easily relaxed: it underlies the identification of the dual of the space of continuous functions with the space of finite Borel measures, $C^*(\mathcal{T}) \simeq \mathcal{M}(\mathcal{T}),$ on which we rely heavily in Section 3.} support $\mathcal{T}$ may reflect discreteness or boundedness of some components of $T$, and incorporate structural relations. The choice of cost function $b(T)$ determines the target functional of interest $\mathbb{E}_\P[b(T)]$.

\addtocounter{example}{-1}
\begin{example}[cont'd]
    Collect the observed and unobserved economic variables in $T = ((V, (V(j))_{j=0,1})_{V = Y,W}, Z, D)$. $\mathcal{T}$ needs to incorporate both the ex-ante support restrictions on the variables, and structural relationships between them. Thus, we set \begin{align*}
        \mathcal{T} = \{((v, (v(j))_{j=0,1})_{v = y,w}, z, d) \in  [\underline{Y}, \overline{Y}]^3 \times \mathcal{W}^3  \times \mathcal{Z} \times \{0,1\}:& \\
     v = d v(1) + (1-d) v(0), v = y,w \},&
    \end{align*} which is a compact set if $\mathcal{W}, \mathcal{Z}$ are compact. To target the long-run ATE, one sets $b(T) = Y(1) - Y(0)$, while the long-run ATT obtains under $b(T) = \P[D= 1]^{-1}D(Y(1) - Y(0))$. The latter function is usually unknown, but identified. 
\end{example}
% \begin{example}[cont'd]
%     Collect the observed and unobserved economic variables in $T = ((V, (V(j))_{j=0,1})_{V = Y,W}, Z, D)$. $\mathcal{T}$ needs to incorporate both the ex-ante support restrictions on the variables, and structural relationships between them. Thus, we set \begin{align*}
%         \mathcal{T} = \{((v, (v(j))_{j=0,1})_{v = y,w}, z, d) \in  [\underline{Y}, \overline{Y}]^3 \times \mathcal{W}^3  \times \mathcal{Z} \times \{0,1\}:& \\
%      v = d v(1) + (1-d) v(0), v = y,w \},&
%     \end{align*} which is a compact set if $\mathcal{W}, \mathcal{Z}$ are compact. To target the long-run ATE, set $b(T) = Y(1) - Y(0)$; to evaluate the share of agents who benefit from treatment, let $b(T) = \mathds{1}\{Y(1) \geq Y(0)\}$. Finally, to consider ATT, use $b(T) = \P[D= 1]^{-1}D(Y(1) - Y(0))$. The latter is usually unobserved, but identified. 
% \end{example}
% \begin{example}[cont'd]
    
% \end{example}
% Continuing the potential outcomes example, suppose that all variables $Y$ and $Y(d)$ for $d = 0, 1$ are known to lie in a bounded interval $[Y_l,Y_u] \subseteq [-1,1]$. Then the support is defined through $\mathcal{T} = \{(Y, (Y(d))_d, D) \in [Y_l, Y_u]^3 \times \{0,1\}:  Y = DY(1) + (1-D)Y(0)\}$, which is compact. 

The estimation procedure for (GOT) will be developed under the assumption of a continuous $b$ -- however, discontinuous functions can often be accommodated by augmenting the vector $T$ and redefining the support accordingly. 
\addtocounter{example}{-1} 
\begin{example}[cont'd]
    To evaluate the proportion of agents who benefit from treatment, one may introduce an auxiliary variable $U = \mathds{1}\{ Y(1) \geq Y(0)\} \in \{0,1\}$, define $T = ((V, (V(j))_{j=0,1})_{V = Y,W}, Z, D, U)$, and let $b(T) = V$, which is continuous. One also augments $\mathcal{T}$ with an additional margin $\{0,1\},$ and a restriction $U = \mathds{1}\{Y(1) \geq Y(0)\}$ \text{a.s.}  
\end{example}
% \begin{example}[cont'd]
%     To evaluate the proportion of agents who benefit from treatment, one may introduce an auxiliary variable $U = \mathds{1}\{ Y(1) \geq Y(0)\}$, define $T = ((V, (V(j))_{j=0,1})_{V = Y,W}, Z, D, U)$ and let
% \begin{align*}
%     \mathcal{T} = \{((v, (v(j))_{j=0,1})_{v = y,w}, z, d, u) \in  [\underline{Y}, \overline{Y}]^3 \times \mathcal{W}^3 \times \mathcal{Z} \times \{0,1\}^2:& \\
%      v = d v(1) + (1-d) v(0), \text{for }v = y,w, \text{ and } u = \mathds{1}\{y(1) \geq y(0)\}\}.&
%     \end{align*}
% In this representation, the cost function $b(T) = V$ is continuous.  
% \end{example}
\subsection{Identified marginals}\label{subs_identifmarg}
    Some parts of the distribution of the random vector $T$ may be identified. This information is straightforwardly incorporated into (GOT) in the discrete setting, as shown in Example \ref{example_lp}. We now discuss the more general case. Suppose there is a collection of sets of indices $\mathcal{I} = \{I_i\}^q_{i=1} \subseteq 2^{d}$, such that for any $I \in \mathcal{I}$, the marginal distribution $\mathbb{P}_{I} \equiv \text{proj}_I \mathbb{P}$ is identified. In what follows, we also denote by $T_I$ the subvector of $T$ comprised of variables with indices in $I$. Unlike in the OT setting (e.g. \citet{manole2024sharp} and \citet{sadhu2024stability}), in (GOT) identified marginals need not be disjoint, nor collectively exhaustive, and $q$ may be greater than $2$. 
    
    Let us first consider the case of a single identified marginal distribution $\P_I$. To rewrite this identification restriction in the form of a moment condition in (GOT), we leverage the concept of a characteristic kernel\footnote{We refer the reader to \citet{Wendland_2004}, Chapter 10 for the discussion of kernels and their Reproducing Kernel Hilbert Spaces. Also see \citet{steinwart2021strictly} for a detailed discussion of characteristic kernels.}. 

\begin{definition}
    Suppose $\mathcal{U}$ is a compact metrizable space, $K: \mathcal{U}^2 \to \mathbb{R}$ is a continuous kernel with the Reproducing Kernel Hilbert Space $\mathcal{H}$, and $\mathcal{M}_1(\mathcal{U})$ is the set of all Borel probability measures over $\mathcal{U}$. $K$ is called characteristic if the map $\Phi(P) \equiv \int K(\cdot, \omega)\text{d}P(\omega)$ is injective as an operator $\Phi: \mathcal{M}_1(\mathcal{U}) \to \mathcal{H}$.
\end{definition}

    % \begin{remark}
    %  The function $K(u_1,u_2) = \exp(u_1'u_2)$ is a characteristic kernel. In this case, $c(s) = \mathbb{E}_{\mathbb{P}_I}[\exp(s'T)]$ is the moment generating function (MGF) of $T$. The procedure below may be viewed as an extension of the fact that the distribution of a compactly supported random vector is uniquely determined by the values of its MGF in an open vicinity of $0$. Note that $\exp(\cdot)$ is analytic, so the class $\{K(x, \cdot)| x \in \mathbb{R}^q: ||x|| \leq r\},$ restricted to a ball of radius $r$, is uniformly Donsker (see Proposition 1 in \citet{nickl2007bracketing}). 
    % \end{remark}

        \begin{remark}
     The function $K(u_1,u_2) = \exp(u_1'u_2)$ is a characteristic kernel. In this case, $\Phi(P)$ gives the moment generating function (MGF) of $U \sim P$. The procedure below may be viewed as an extension of the fact that the distribution of a compactly supported random vector is uniquely determined by the values of its MGF in an open vicinity of $0$. 
    \end{remark}

    % \begin{remark}
    %     Another convenient characteristic kernel is Matérn kernel, $K: \mathbb{R}^q \to \mathbb{R}$, with $K(x,y) = \frac{2^{1-\nu}}{\Gamma(\nu)}\left(\sqrt{2}\nu ||x-y||\right)^{\nu}K_\nu\left(\sqrt{2}\nu ||x-y||\right)$, where $K_\nu(\cdot)$ is the modified Bessel function of second kind and $\nu > 0$ is a parameter. It is well-known (e.g. \citet{multires_matrix}) that $K(x,\cdot) \in H^{\nu + d/2}(\mathbb{R}^d)$ for any $x \in \mathbb{R}^d$, and therefore the class $\{K(x, \cdot)| x \in \mathbb{R}^q: ||x|| \leq r\}$ restricted to a ball of radius $r$ is uniformly Donsker (see Proposition 1 in \citet{nickl2007bracketing}). 
    % \end{remark}
    
    \begin{remark}
        In the sequel, we also refer to the Matérn kernel, $K: \mathbb{R}^{2q} \to \mathbb{R}$, with $K(x,y) = \frac{2^{1-\eta}}{\Gamma(\eta)}\left(\sqrt{2}\eta||x-y||\right)^{\eta}K_\eta\left(\sqrt{2}\eta ||x-y||\right)$, where $K_\eta(\cdot)$ is the modified Bessel function of second kind and $\eta> 0$ is a parameter. \citet{sriperumbudur2010hilbert} show that it is characteristic.
    \end{remark}
    Intuitively, a characteristic kernel is a device that `separates' probability measures. Let $\lambda$ be the Lebesgue measure over $\mathbb{R}^d$. The following Lemma relies on the fact that continuous functions on a closed cube are equal $\lambda-$a.e. iff they are equal everywhere on it. It is convenient to denote the $d-$cube with side $2$, centered at $t_0$, by $H_d(t_0) \equiv [-1 + t_0 ,1+ t_0]^d$.

    \begin{lemma}\label{lemma_identif}
        Suppose $\mathcal{U} \subseteq \mathbb{R}^{d}$ is a non-empty compact set, and $K: H^2 \to \mathbb{R}$ is a continuous characteristic kernel for a closed cube $H \supseteq \mathcal{U}$ with $\lambda(H)>0$. Then, for $P_1, P_2 \in \mathcal{M}_1(\mathcal{U})$, we have $P_1 = P_2$ iff $\mathbb{E}_{P_1}[K(s,U)] = \mathbb{E}_{P_2}[K(s,U)]$ for $s \in H$ $\lambda-$a.e.  
    \end{lemma}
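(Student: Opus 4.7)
The ``only if'' direction is immediate by definition. For the ``if'' direction, the plan is to reduce the $\lambda$-a.e.\ equality of the kernel mean embeddings to pointwise equality on $H$, and then to invoke the defining injectivity of the characteristic kernel.

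First, I would define $g_i(s) \equiv \mathbb{E}_{P_i}[K(s,U)] = \int_{\mathcal{U}} K(s,u)\,\mathrm{d}P_i(u)$ for $i=1,2$. Since $K$ is continuous on the compact set $H^2$, it is uniformly continuous and bounded there, so by dominated convergence $g_i$ is a continuous function on $H$. The hypothesis gives $g_1 = g_2$ $\lambda$-a.e.\ on $H$; but for continuous functions on a set with nonempty interior relative to $\mathbb{R}^d$ (which $H$ has since $\lambda(H) > 0$ implies the cube has positive side length), the set $\{s \in H : g_1(s) \neq g_2(s)\}$ is open in $H$ and has Lebesgue measure zero, hence is empty. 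Thus $g_1 \equiv g_2$ on $H$.

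Next, I would pass from pointwise equality on $H$ to equality of the kernel mean embeddings $\Phi(P_i)$ as elements of the RKHS $\mathcal{H}$ associated with $K$. Viewing $P_1, P_2$ as Borel probability measures on $H$ (trivially extending by zero outside $\mathcal{U}$), the embeddings $\Phi(P_i) = \int K(\cdot,u)\,\mathrm{d}P_i(u)$ lie in $\mathcal{H}$, and since elements of the RKHS are identified with their pointwise values on $H$ through the reproducing property, $g_1 = g_2$ on $H$ means $\Phi(P_1) = \Phi(P_2)$ in $\mathcal{H}$. The characteristic property of $K$ — injectivity of $\Phi : \mathcal{M}_1(H) \to \mathcal{H}$ — then yields $P_1 = P_2$ as measures on $H$, and restricting to $\mathcal{U}$ completes the proof.

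The only subtlety I anticipate is the bookkeeping between $\mathcal{U}$ and $H$: the characteristic property is stated for measures on $H$, but $P_1, P_2$ are measures on $\mathcal{U} \subseteq H$. This is handled by the trivial extension, which preserves all integrals against the continuous kernel $K$ on $H^2$. The step that ``continuous functions equal $\lambda$-a.e.\ on a cube are equal everywhere'' is the workhorse that converts the a.e.\ moment condition into the pointwise equality needed to invoke injectivity; this is where the assumption $\lambda(H) > 0$ enters crucially.
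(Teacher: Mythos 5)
Your proof is correct and follows essentially the same route as the paper: continuity of the kernel mean embeddings, the observation that continuous functions agreeing $\lambda$-a.e.\ on a cube of positive measure agree everywhere, and then injectivity of $\Phi$ from the characteristic property. The extra bookkeeping you supply (extending $P_i$ from $\mathcal{U}$ to $H$ and verifying continuity of $g_i$ via dominated convergence) is exactly what the paper leaves implicit.
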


    % A characteristic kernel thus `separates' probability measures. Fix some continuous characteristic kernel $K: (\text{proj}_I \mathcal{T})^2 \to \mathbb{R}$. We can then rewrite the restriction $P_I = \mathbb{P}_I$, as 
    % \begin{align}\label{eq_k}
    %     \mathbb{E}_{P}[K(s,T_I)] = \mathbb{E}_{\mathbb{P}_I}[K(s,T_I)] ~ \text{for all } s \in \text{proj}_I \mathcal{T}.
    % \end{align}
    % Denote by $\lambda$ the Lebesgue measure over $\mathbb{R}^d$. Because $K$ is a continuous kernel, $\Phi_K(\mathcal{P}) \subseteq \mathcal{C}(\text{proj}_I\mathcal{T})$, so the set of measures $P \in \mathcal{P}$ that satisfy \eqref{eq_k} is equal to that which satisfy
    %     \begin{align*}
    %     \mathbb{E}_{P}[K(s_I,T_I)] = \mathbb{E}_{\mathbb{P}_I}[K(s_I,T_I)] ~ \lambda-\text{a.e. in } s \in \mathcal{T}.
    % \end{align*}
Fix a continuous characteristic kernel $K: (\text{proj}_I \mathcal{T})^2 \to \mathbb{R}$. Using Lemma 1, observe that 
       \begin{align*}
        P_I = \mathbb{P}_I \iff\mathbb{E}_{P}[K(s_I,T_I)] = \mathbb{E}_{\mathbb{P}_I}[K(s_I,T_I)] ~ \text{for $s \in H_{d}(0)$} ~\lambda-\text{a.e.}
    \end{align*}
    Thus, a restriction $P_I = \mathbb{P}_I$ can be introduced to the problem (GOT) by setting $S \equiv H_d(0)$, with $\mathcal{S}$ being its Borel $\sigma-$algebra, $\nu$ being the Lebesgue measure $\lambda$, and $a(T)(s) \equiv K(s_I, T_I)$ with $c(s) \equiv \mathbb{E}_{\mathbb{P}_I} [K(s_I, T_I)]$ for $s \in S$. 
    
    \addtocounter{example}{-1}
    \begin{example}[cont'd]
       Suppose w.l.g. that $\mathcal{T} \subseteq [-1,1]^d$. To nest the restriction that $\P_{Y,X}$ is identified, let $S = H_{d}(0),$ with $\mathcal{S}$ being its Borel $\sigma-$algebra, and $\nu = \lambda$. Suppose that $I \subseteq 2^{[d]}$ indexes the subvector $(Y,X)$ of $T$, and define $a(T)(s) = K(s_I, (Y,X)),$ and $c(s) = \E_{\P_I}[K(s_I, (Y,X))]$ for $s \in S$, where $K: H_{2|I|}(0) \to \mathbb{R}$ is a continuous characteristic kernel. Observe that such $a$ is known, $c$ is identified, and for any $P \in \mathcal{P},$ we have $P_I = \P_I$ if and only if $\E_P[a(T)(s)] = c(s) ~\nu - \text{a.s.}$ Moreover, if $K(s_I, T_I) = \exp(s_I' T_I),$ and $s_I = (s_x, s_y),$ then $c(s) = \E_{\P_{I}}[\exp( s'_y Y+s'_x X)]$, which is the joint MGF of $Y,X$.
    \end{example}

    We denote by $\iota$ the vector of ones, with dimension inferred from the context. To incorporate many identified marginals simultaneously, one simply constructs $S \equiv \bigcup^{q}_{j = 1} H_{d}(3j)$, with $\mathcal{S}$ being a Borel $\sigma-$algebra over $S$, lets $\nu \equiv \mathcal{\lambda}$, and defines $a(T)(s) \equiv K_j(s_{I_j} - 3\iota j, T_{I_j})$, and $c(s) \equiv \mathbb{E}_{\mathbb{P}_{I_j}}[a(T)(s)]$ for $s \in H_{d}(3j)$ and $T \in \mathcal{T}$, where $K_j: \mathbb{R}^{2|I_j|}\to \mathbb{R}$ is a continuous characteristic kernel, and $j \in [q]$.

    % The selection of a specific characteristic kernel is postponed to Section X. Let us, however, briefly mention an important special case. 

    \subsection{Independence restrictions}\label{subs_indeprestr}
In some applications, such as in Examples \ref{example_mixedmatching} and \ref{example_atheychetty}, it may be reasonable to assume independence between subvectors of $T$. The result below allows to incorporate such restrictions into (GOT). Our construction requires identification of the marginal distribution of at least one subvector involved in the independence condition.
    % \begin{lemma}
    %     Suppose $\mathcal{U} \subseteq \mathbb{R}^d$ is a non-empty compact set, and for non-empty $I_1, I_2 \subseteq [d]$, we have $ I_1\cap I_2 = \emptyset$. Fix some $\tilde{P} \in \mathcal{M}_1(\mathcal{U})$ and $\delta > 0$. Then, for any $P \in \mathcal{M}_1(\mathcal{U})$, such that $P_{I_2} = \tilde{P}_{I_2}$, $U_{I_{1}} \indep_P U_{I_{2}}$ iff, for  $\text{ for }s \in \delta H_d(0), ~ \lambda-\text{a.e.}$,
    %     \begin{align*}
    % \mathbb{E}_P\left[\exp(\sum^2_{j=1}s'_{I_j} U_{I_j}) - \mathbb{E}_{\tilde{P}_{I_2}}[\exp(s'_{I_2}U_{I_2})]\exp(s'_{I_1}U_{I_1})\right] = 0. 
    %     \end{align*}
    % \end{lemma}
    \begin{lemma}
        Suppose $\mathcal{U} \subseteq \mathbb{R}^q$ is a non-empty compact set, $H\supseteq \mathcal{U}$ is a closed ball with $\lambda(H)>0$, and for non-empty $I_1, I_2 \subseteq [q]$, we have $I_1 \cap I_2 = \emptyset$ and $I_1 \cup I_2 = [q]$. Suppose that $K: H^2 \to \mathbb{R}$ is a continuous characteristic kernel. Then, for any $P \in \mathcal{M}_1(\mathcal{U})$, we have $U_{I_{1}} \indep_P U_{I_{2}}$ iff, for  $\text{ for }s \in H, ~ \lambda-\text{a.e.}$,
        \begin{align*}
    \mathbb{E}_P\left[K(s, U) - \int K(s, (U_{I_1},U_{I_2}) )\text{d}P_{I_2}(U_{I_2})\right] = 0. 
        \end{align*}
    \end{lemma}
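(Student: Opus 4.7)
The plan is to reduce Lemma~2 to Lemma~1 applied to the pair $(P, Q)$ with $Q := P_{I_1} \otimes P_{I_2}$, by rewriting the given moment condition as equality of the two kernel mean embeddings $s \mapsto \mathbb{E}_P[K(s,\cdot)]$ and $s \mapsto \mathbb{E}_Q[K(s,\cdot)]$, and then invoking the standard fact that $U_{I_1} \indep_P U_{I_2}$ is by definition equivalent to $P = Q$.

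The first step is to simplify the inner term via Fubini. Since $K$ is continuous on the compact set $H^2$ it is bounded, so $(u_1, u_2') \mapsto K(s, (u_1, u_2'))$ is integrable with respect to any product measure on $H^2$. Taking the outer $\mathbb{E}_P$ of the $P_{I_1}$-measurable function $u_1 \mapsto \int K(s, (u_1, u_2'))\,\text{d}P_{I_2}(u_2')$ and using that the $U_{I_1}$-marginal of $P$ equals $P_{I_1}$, one obtains
\begin{equation*}
\mathbb{E}_P\left[\int K(s, (U_{I_1}, u_2'))\,\text{d}P_{I_2}(u_2')\right] = \int \int K(s, (u_1, u_2'))\,\text{d}P_{I_2}(u_2')\,\text{d}P_{I_1}(u_1) = \mathbb{E}_Q[K(s, U)].
\end{equation*}
Hence the displayed moment condition is exactly $\mathbb{E}_P[K(s, U)] = \mathbb{E}_Q[K(s, U)]$ for $\lambda$-a.e.\ $s \in H$. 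Both sides are continuous in $s$ by dominated convergence (using continuity and boundedness of $K$), so $\lambda$-a.e.\ equality on $H$ (a set of positive Lebesgue measure, hence one whose agreement set is dense) extends to pointwise equality throughout $H$. Lemma~1 then delivers $P = Q$, whence $U_{I_1} \indep_P U_{I_2}$. The converse implication is the same computation in reverse: independence gives $P = Q$, which trivially yields equality of kernel means for every $s \in H$ and in particular $\lambda$-a.e.

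The main obstacle is that Lemma~1 as stated treats two probability measures supported on a common compact set $\mathcal{U}$, whereas here only $P \in \mathcal{M}_1(\mathcal{U})$, while $Q = P_{I_1} \otimes P_{I_2}$ is only guaranteed to be supported on $\widetilde{\mathcal{U}} := \text{proj}_{I_1}\mathcal{U} \times \text{proj}_{I_2}\mathcal{U}$, which in general strictly contains $\mathcal{U}$ (e.g. when $\mathcal{U}$ is a ball). I would therefore apply Lemma~1 with the ambient compact set enlarged to $\mathcal{U} \cup \widetilde{\mathcal{U}} \subseteq H$, relying on the fact that the characteristic kernels used in the paper, such as the exponential and Matérn kernels, remain characteristic on every compact subset of their domain. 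Once $P$ and $Q$ live on a common compact set inside $H$, the characteristic property applies and the argument closes.
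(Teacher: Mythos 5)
Your proof is correct and follows essentially the same route as the paper: reduce the moment condition via Fubini to equality of the kernel mean embeddings of $P$ and $Q = P_{I_1}\otimes P_{I_2}$, then invoke injectivity of the embedding together with the continuity argument from Lemma 1 to conclude $P = Q$. Your treatment is in fact slightly more careful than the paper's, which simply asserts $\tilde{P}=P_{I_1}\times P_{I_2}\in\mathcal{M}_1(\mathcal{U})$ without addressing that the product measure may be supported outside $\mathcal{U}$ — the enlargement of the ambient compact set that you propose is the right fix.
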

    
    Suppose that vector $T$ contains disjoint subvectors $T_1$ and $T_2$ with $(T_1, T_2) \in \mathbb{R}^q$, and $s_1, s_2$ are the projections of $s \in \mathbb{R}^d$ onto their respective coordinates. Moreover, the marginal distribution $\mathbb{P}_2$ of $T_2$ is identified. Fixing a continuous characteristic kernel $K: H_{2q}(0)\to \mathbb{R}$, observe that the map $(s,t_1) \to \E_{\P}[K((s_1,s_2),(t_1,T_2))]$ is then also identified. By Lemma 2, to embed the restriction that $T_1 \indep_{\mathbb{P}} T_2$ into (GOT), one may set $S = H_d(0)$, and let $\mathcal{S}$ be the corresponding Borel $\sigma-$algebra, $\nu = \lambda$, and $a(T)(s) \equiv K((s_1,s_2), (T_1, T_2)) - \int K((s_1,s_2),(T_1,T_2))\text{d}\P_2(T_2)$, while $c(s) = 0$ for $s \in S$. Multiple independence restrictions, as well as their combinations with other identifying conditions are accommodated by analogy with the construction in Section \ref{subs_identifmarg}. 
    \setcounter{example}{2}
    \begin{example}[cont'd] Suppose $I_1$ indexes $(Y, D)$, $I_2$ indexes $(W,Z,D)$, $I_3$ indexes $Z$, and $I_4$ stands for $(Y(j), W(j))_{j=0,1}$. To incorporate all identification conditions, let $S = \bigcup^3_{j = 1} H_d(3j)$ with $\mathcal{S}$ being its Borel $\sigma-$algebra, and $\nu = \lambda$. For $j = 0,1$ and $s \in H_d(3j),$ let $a(T)(s) = K(s_{I_j} - 3\iota j,T_{I_j}),$ $c(s) = \E_{\P_{I_j}}[a(T)(s)]$, and for $s \in H_d(0)$, let $a(T)(s +6\iota) = \exp(\sum^4_{j=3}s_{I_{j}}'T_{I_j}) - \exp(s_{I_4}'T_{I_4})\mathbb{E}_{\P_{I_3}}[\exp(s_{I_3}'Z)]$, with $c(s + 6 \iota) = 0$. Thus, the cubes $H_d(0)$ and $H_d(3)$ are used to pin down the identified marginals, while $H_d(6)$ ensures the independence of $(Y(j), W(j))_{j=0,1}$ with $Z$ using the exponential kernel.
    \end{example}
    
    \subsection{Conditional independence restrictions}\label{subs_cond_indep_restr}
    The researcher may be willing to impose conditional independence between subvectors of $T$ - for instance, if a conditionally exogenous instrument is available.  The following Lemma allows to embed conditional independence restrictions into (GOT). As in the unconditional case, the construction requires identification of certain marginal distributions.
    
    % Recall that if $U_1, U_2, U_3$ are random vectors on the same probability space, we have that $U_1 \indep U_2 | U_3$ iff $\mathbb{E}[\exp(s_1'U_1)| U_3] = \mathbb{E}[\exp(s_1'U_1)|U_2, U_3] ~ \text{a.s.}$ for any $s_1$ in some open ball around $0$. This fact can be used to deduce the following Lemma. 

    \begin{lemma}\label{lemma_cond_indep}
        Suppose $\mathcal{U} \subseteq \mathbb{R}^d$ is a non-empty compact, and for $\{I_j\}^3_{j=1} \subseteq [d]$, we have $I_j \cap I_{k} = \emptyset$ for $j \ne k$, and fix any $\delta > 0$. Then, for any $P \in \mathcal{M}_1(\mathcal{U})$, we have $U_{I_1} \indep_P U_{I_2} | U_{I_3}$ iff, for $s \in \delta H_d(0),~ \lambda - \text{a.e.}$,     \begin{align*}\mathbb{E}_P[\exp(\sum^3_{j=1}s'_{I_j} U_{I_j}) - \mathbb{E}_{P}[\exp(s'_{I_1}U_{I_1})|U_{I_3}] \exp(\sum^3_{j=2}s'_{I_j} U_{I_j})] = 0.
        \end{align*}
    \end{lemma}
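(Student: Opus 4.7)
The plan is to reduce the displayed moment condition, by varying $s$ coordinate by coordinate, to the statement that the conditional moment generating function of $U_{I_1}$ given $(U_{I_2}, U_{I_3})$ equals the one given only $U_{I_3}$, and then use that these conditional MGFs determine the conditional laws because $\mathcal{U}$ is compact.

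\textbf{Forward direction.} Under $U_{I_1} \indep_P U_{I_2} \mid U_{I_3}$, we have $\mathbb{E}_P[\exp(s'_{I_1} U_{I_1}) \mid U_{I_2}, U_{I_3}] = \mathbb{E}_P[\exp(s'_{I_1} U_{I_1}) \mid U_{I_3}]$ $P$-a.s. Conditioning the first term inside the expectation on $(U_{I_2}, U_{I_3})$ via the tower property and using this equality makes the integrand vanish pointwise, giving the moment identity for all $s$.

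\textbf{Backward direction.} Fix $s_{I_1}$ in the open cube $\delta H_{|I_1|}(0)$ and define the (real-valued, bounded) function
\[
\phi_{s_{I_1}}(u_{I_2}, u_{I_3}) \;=\; \mathbb{E}_P[\exp(s'_{I_1} U_{I_1}) \mid U_{I_2} = u_{I_2}, U_{I_3} = u_{I_3}] - \mathbb{E}_P[\exp(s'_{I_1} U_{I_1}) \mid U_{I_3} = u_{I_3}],
\]
chosen via a regular conditional probability (which exists since $\mathcal{U}$ is a compact subset of $\mathbb{R}^d$). Applying the tower property to the displayed equation rewrites it as
\[
\int \phi_{s_{I_1}}(u_{I_2}, u_{I_3})\, \exp(s'_{I_2} u_{I_2} + s'_{I_3} u_{I_3})\, \mathrm{d}P_{I_2,I_3}(u_{I_2}, u_{I_3}) \;=\; 0,
\]
which says that the bilateral MGF of the finite signed measure $\mathrm{d}\mu_{s_{I_1}} := \phi_{s_{I_1}}\, \mathrm{d}P_{I_2,I_3}$ vanishes for $\lambda$-a.e.\ $(s_{I_2}, s_{I_3})$ such that $s \in \delta H_d(0)$.

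\textbf{From vanishing MGF to vanishing measure.} Since $\mu_{s_{I_1}}$ is compactly supported, its MGF is entire in $(s_{I_2}, s_{I_3})$; vanishing on a set of positive Lebesgue measure therefore forces it to vanish identically. Equivalently, the characteristic function of $\mu_{s_{I_1}}$ vanishes (by analytic continuation), so $\mu_{s_{I_1}} = 0$ by Fourier uniqueness; this is the signed-measure analogue of Lemma 1 for the exponential kernel. Hence $\phi_{s_{I_1}} = 0$ $P_{I_2,I_3}$-a.s.

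\textbf{Upgrading to conditional independence.} Using Fubini on $\delta H_{|I_1|}(0) \times \mathcal{U}$ and continuity (in fact analyticity) of $s_{I_1} \mapsto \phi_{s_{I_1}}(u_{I_2}, u_{I_3})$, there is a single $P$-null set $N$ off which the equality
\[
\mathbb{E}_P[\exp(s'_{I_1} U_{I_1}) \mid U_{I_2}, U_{I_3}] \;=\; \mathbb{E}_P[\exp(s'_{I_1} U_{I_1}) \mid U_{I_3}]
\]
holds for all $s_{I_1} \in \delta H_{|I_1|}(0)$ simultaneously. Because $U_{I_1}$ takes values in a compact set, each side is the MGF (in a neighborhood of the origin) of a compactly supported probability measure, and such a measure is uniquely determined by these values. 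Therefore the regular conditional laws $P(U_{I_1} \in \cdot \mid U_{I_2}, U_{I_3})$ and $P(U_{I_1} \in \cdot \mid U_{I_3})$ coincide $P$-a.s., which is exactly $U_{I_1} \indep_P U_{I_2} \mid U_{I_3}$.

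\textbf{Main obstacle.} The routine parts are the tower-property manipulations and the MGF-to-measure uniqueness. The delicate step is the ``upgrading'' paragraph: one must pass from the quantifier ordering ``for a.e.\ $s_{I_1}$, $P$-a.s.'' to ``$P$-a.s., for all $s_{I_1}$,'' which requires selecting a jointly measurable version of the conditional MGFs and invoking their analyticity in $s_{I_1}$ on a cube of positive measure. Once this is in place, the identification of conditional laws from conditional MGFs on an open neighborhood of $0$ (valid because of compact support) closes the argument.
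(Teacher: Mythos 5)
Your proof is correct, but it takes a genuinely different route from the paper's. The paper argues globally: by the tower property the right-hand side of the moment condition is precisely the joint MGF of the auxiliary probability measure $\tilde{P}=\tilde{P}_{I_1|I_3}\times P_{I_2,I_3}$ (same $(U_{I_2},U_{I_3})$-marginal, same conditional law of $U_{I_1}$ given $U_{I_3}$, but with conditional independence built in), so the displayed identity says $M_P=M_{\tilde P}$ $\lambda$-a.e.; injectivity of the MGF map for compactly supported measures (the exponential kernel being characteristic, plus the continuity argument of Lemma 1) then gives $P=\tilde P$, under which $U_{I_1}\indep U_{I_2}\mid U_{I_3}$ holds by construction. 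You instead slice in $s_{I_1}$, identify the vanishing of the signed measure $\phi_{s_{I_1}}\,\mathrm{d}P_{I_2,I_3}$ by Fourier/MGF uniqueness, and then perform a Fubini-plus-analyticity quantifier exchange to equate the regular conditional laws. Both are sound, but note that the step you flag as the main obstacle --- passing from ``for a.e.\ $s_{I_1}$, $P$-a.s.'' to ``$P$-a.s., for all $s_{I_1}$,'' with its attendant joint-measurability bookkeeping for the conditional MGFs --- is exactly what the paper's one-shot comparison of two probability measures on $\mathcal{U}$ avoids entirely. What your route buys in exchange is a more explicit conclusion: you directly exhibit the coincidence of the regular conditional distributions $P(U_{I_1}\in\cdot\mid U_{I_2},U_{I_3})$ and $P(U_{I_1}\in\cdot\mid U_{I_3})$, rather than inferring conditional independence from equality with an auxiliary coupling.
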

    Suppose that vector $T$ contains disjoint subvectors $T_1, T_2, T_3$, and denote by $s_i$ the projections of $s \in \mathbb{R}^d$ onto their respective coordinates. Moreover, suppose that the distribution $\mathbb{P}_{1,3}$ of $(T_1, T_3)$ is identified, so that the conditional MGF of $T_1$ given $T_3$ is also identified. Using Lemma 3, to embed the restriction $T_1 \indep_{\mathbb{P}} T_2 | T_3$ into (GOT), one may set $a(T)(s) = \exp(\sum^3_{j=1}s'_{j} T_{j}) - \mathbb{E}_{\mathbb{P}}[\exp(s'_{1}T_{1})|T_{3}] \exp(\sum^3_{j=2}s'_{j} T_{j})$ and $c(s) = 0$ for an appropriate range of $s$ with Lebesgue measure over it, as discussed in previous sections.

\section{Estimation}\label{section_estimation}
This section develops estimation theory for the problem (GOT). In what follows it will be convenient to reformulate (GOT) as a problem solved over the space of finite positive Borel measures over $\mathcal{T}$, which we denote by $\mathcal{M}^+(\mathcal{T})$. We refer to the following problem as the dual problem throughout the paper:
\begin{align*}
    \text{(D)}: \quad \beta^*(\theta) \equiv \sup_{\lambda \in \mathcal{M}^+(T)} \int b(t) \text{d} \lambda(t), \quad \text{s.t.: ~ ~ } c =  \int a(t) \text{d} \lambda(t)~  \text{ $\nu-$a.s.} 
\end{align*}
Denote the set of solutions of (D) by $\mathcal{A}^*(\theta)$. Further, consider the Hilbert space $X \equiv \mathcal{L}^2(S,\mathcal{S},\nu)$ with the dot product $\langle f, g\rangle \equiv \int f(s)g(s) \text{d}\nu(s)$ for $f, g \in X$. 

\begin{ass}
    i) $a: \mathcal{T} \to X$ and $b: \mathcal{T} \to \mathbb{R}$ are continuous; ii) For some $s^* \in S$ with $\{s^*\} \in \mathcal{S}$ and $\nu(\{s^*\}) > 0$, $a(t)(s^*) = c(s^*) =1$ for all $t \in \mathcal{T}$; iii) (D) is feasible. 
\end{ass}

Assumption 1.i requires the cost function to be continuous, and demands that the constraint map be continuous in $\mathcal{L}^2$ norm. Note that it does not require $a(t)(\cdot)$ to be continuous for all $t \in \mathcal{T}$.  Assumption 1.ii ensures that the problem (D) is equivalent to (GOT), as every feasible measure in (D) should satisfy $\int a(t)(s^*)\text{d}\mu(t) = c(s^*) \iff \int 1\text{d}\mu(t) = 1$. Assumption 1.iii means that the imposed identifying conditions cannot be rejected. 
\subsection{Duality and exact penalization}
Recall that the parameter $\theta$ is not known. In what follows, it will be estimated via some $\hat{\theta}$. The sampling uncertainty in $\hat{\theta}$ renders the direct analysis of (D) complicated, motivating us to approach it using duality for generalized linear programs. Consider the program
\begin{align*}
    \text{(P)}: \quad \beta(\theta) \equiv  \inf_{x \in X} ~ \langle c,x\rangle, \quad \text{s.t.: }  \langle a(t), x \rangle - b(t) \geq 0, ~ \forall t \in \mathcal{T},
\end{align*}
which we refer to as the primary program. 

Under Assumption 1, the problem (D) is the dual to (P)\footnote{Observe that (P) is not necessarily the dual of (D), because the Banach space $\mathcal{C}(\mathcal{T})$ is not reflexive.}. Assumption 1.ii enforces the Slater condition\footnote{See \citet{shapiro2001duality} for the definitions.} in problem (P), ensuring strong duality and solvability of (D) (see e.g. \citet{shapiro2001duality}). One can then adapt the exact penalization idea in \citet{voronin2025linear} to reformulate (P) as an unconstrained optimization problem. Namely, for some $w > 0$, define 
\begin{align}\label{def_unc}
    \beta_\infty(\theta) \equiv \inf_{x \in X} \langle c, x\rangle + w \left(\max_{t \in \mathcal{T}} b(t) - \langle a(t), x \rangle \right)^+.
\end{align}
The objective function in \eqref{def_unc} is called the penalty function, with $\beta_\infty(\theta) \leq \beta(\theta)$ in general. However, if $w$ is larger than the norm of at least one Lagrange multiplier of (P), defined as the solution of (D), then the penalty term $ w \left(\max_{t \in \mathcal{T}} b(t) - \langle a(t), x \rangle \right)^+$ is `large enough' for $x$ that violate the constraints of (P) to ensure that $\beta_\infty(\theta) = \beta(\theta)$. Recall that the solutions of (D) are probability measures under Assumption 1.ii, so their total variation is upper bounded by $1$. Combining these observations leads to the following result. 
\begin{theorem}\label{theorem_dual}
    Let $w = 1$, and suppose that $X, \theta$ satisfy Assumption 1. Then,
    \begin{align*}
        \beta^*(\theta) = \beta(\theta) = \beta_\infty(\theta). 
    \end{align*}
\end{theorem}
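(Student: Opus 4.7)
The plan is to prove the two equalities $\beta^{*}(\theta)=\beta(\theta)$ and $\beta(\theta)=\beta_{\infty}(\theta)$ separately. The first is a standard strong-duality result for an infinite-dimensional LP, while the second is an exact-penalization argument in which the crucial point is that the ``repair direction'' supplied by Assumption 1.ii lets the coefficient $w=1$ suffice.

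For $\beta^{*}(\theta)=\beta(\theta)$, I would first observe that (D) is the Lagrangian dual of (P). Forming $L(x,\lambda)=\langle c,x\rangle+\int (b(t)-\langle a(t),x\rangle)\,d\lambda(t)$ for $\lambda\in\mathcal{M}^{+}(\mathcal{T})$ and minimizing over $x\in X=\mathcal{L}^{2}(\nu)$ forces $c=\int a(t)\,d\lambda(t)$ $\nu$-a.e.\ and leaves $\int b\,d\lambda$, which is precisely the objective of (D); weak duality $\beta^{*}(\theta)\le\beta(\theta)$ follows automatically. Assumption 1.ii then delivers a Slater point for (P): since $\{s^{*}\}$ has positive $\nu$-mass, the element $x_{N}=N\mathds{1}_{\{s^{*}\}}/\nu(\{s^{*}\})\in X$ satisfies $\langle a(t),x_{N}\rangle=N\,a(t)(s^{*})=N$ for every $t$, so picking $N>\max_{t\in\mathcal{T}}b(t)$ (finite by continuity of $b$ on the compact $\mathcal{T}$) yields strict feasibility. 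The strong-duality theorem of \citet{shapiro2001duality} then gives $\beta^{*}(\theta)=\beta(\theta)$ together with solvability of (D).

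For $\beta(\theta)=\beta_{\infty}(\theta)$, the inequality $\beta_{\infty}(\theta)\le\beta(\theta)$ is immediate, since any $x$ feasible for (P) zeroes out the penalty. For the reverse, I would repair an arbitrary $x\in X$ using the same direction: set $M(x)=\bigl(\max_{t\in\mathcal{T}}b(t)-\langle a(t),x\rangle\bigr)^{+}$, which is finite and attained by Assumption 1.i and compactness of $\mathcal{T}$, and define $\tilde x=x+M(x)\mathds{1}_{\{s^{*}\}}/\nu(\{s^{*}\})$. By Assumption 1.ii, $\langle a(t),\tilde x\rangle=\langle a(t),x\rangle+M(x)\ge b(t)$ for all $t$, so $\tilde x$ is feasible for (P), while $\langle c,\tilde x\rangle=\langle c,x\rangle+M(x)c(s^{*})=\langle c,x\rangle+M(x)$ since $c(s^{*})=1$. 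With $w=1$, the penalized objective at $x$ therefore equals $\langle c,\tilde x\rangle\ge\beta(\theta)$, and taking the infimum over $x$ gives $\beta_{\infty}(\theta)\ge\beta(\theta)$.

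The main obstacle is essentially bookkeeping: confirming that the hypotheses of Shapiro's Banach-space strong-duality theorem match the pairing between $\mathcal{L}^{2}(\nu)$ and $\mathcal{M}^{+}(\mathcal{T})$ determined by the constraint map $a$, and that the pointwise evaluations $a(t)(s^{*})$ and $c(s^{*})$ are well-defined on equivalence classes (which they are because $\nu(\{s^{*}\})>0$). The choice $w=1$ is a priori delicate, but the direct repair argument above bypasses the usual rule ``$w\ge\|\lambda^{*}\|_{TV}$''; it is consistent with it a posteriori, since the $s^{*}$-moment condition $\int 1\,d\mu=c(s^{*})=1$ forces every dual optimum to be a probability measure of total variation one.
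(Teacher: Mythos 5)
Your proposal is correct. The strong-duality half is essentially the paper's argument: the paper likewise identifies (D) as the Shapiro dual of (P) (with some extra care in Proposition B.1 to verify that the adjoint of the constraint operator is $\mu \mapsto \int a(t)\,\mathrm{d}\mu(t)$ via Bochner integration), and then produces exactly your Slater point $x = x^{*}\mathds{1}_{\{s^{*}\}}$ with $x^{*}$ large enough to dominate $\sup_{t}b(t)$, invoking Theorem 2.4 of Shapiro for $\beta(\theta)=\beta^{*}(\theta)$ and solvability of (D).

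Where you genuinely diverge is the exact-penalization half. The paper argues through the dual optimum: it takes $\mu^{*}\in\mathcal{A}^{*}(\theta)$ (whose existence is guaranteed by the first half), writes $\beta^{*}(\theta)\le\langle c,x\rangle+\int\bigl(b(t)-\langle a(t),x\rangle\bigr)\mathrm{d}\mu^{*}(t)$ for every $x$, and bounds the integral by $\sup_{t}\bigl(b(t)-\langle a(t),x\rangle\bigr)^{+}\int\mathrm{d}\mu^{*}$, using Assumption 1.ii to conclude $\int\mathrm{d}\mu^{*}=1$ so that $w=1$ suffices; this yields $\beta^{*}(\theta)\le\beta_{\infty}(\theta)$ and closes the chain via strong duality. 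Your argument is instead purely primal: you repair an arbitrary $x$ by adding $M(x)\mathds{1}_{\{s^{*}\}}/\nu(\{s^{*}\})$, observe that the repaired point is feasible for (P) and that its objective value equals the penalized objective at $x$, and conclude $\beta_{\infty}(\theta)\ge\beta(\theta)$ directly. This buys something: your penalization step does not rely on solvability of (D) (and hence not on the duality half at all), only on Assumption 1.i--ii, whereas the paper's version needs the dual attainment delivered by the Slater condition. The paper's route, on the other hand, is the generic ``$w$ exceeds the multiplier norm'' mechanism and makes transparent why $w=1$ is exactly the total-variation bound forced by the unit-mass constraint — the point you note a posteriori. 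Both are sound; yours is the more elementary and self-contained of the two for this particular step.
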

We assume that $w \equiv 1$ in what follows.
\begin{remark}
    The fact that the penalty term can be selected at a fixed level, independent of the data generating process, is in sharp contrast with the case of simple LP, where the minimum $w$ that suffices for exact penalization varies with the DGP (see \citet{voronin2025linear}). 
\end{remark}
\subsection{Weak-compactification of $\beta_\infty$}
Unfortunately, even when strong duality holds, $\beta(\theta) = \beta^*(\theta)$ and (D) is solvable, both (P) and the unconstrained problem defined in \eqref{def_unc} may fail to have a solution. To circumvent that, we consider a version of the penalized problem solved over a weak-compact subset of $X$. 
For any $\gamma > 0$, we define $B_\gamma \equiv \{x \in X: ||x||\leq \gamma\}$, and consider
\begin{align*}
    \beta_{\infty}(\theta; \gamma) \equiv \inf_{x \in B_\gamma} ~ \langle c, x\rangle + \left(\max_{t \in \mathcal{T}} b(t) - \langle a(t), x \rangle \right)^+,
\end{align*}
and let the solution set of the above problem be $\mathcal{A}(\theta; \gamma)$. Moreover, suppose the distance between the compactified and the unconstrained versions is $\Delta(\theta; \gamma) \equiv \beta_{\infty}(\theta; \gamma) - \beta(\theta)$. Banach-Alaoglu Theorem and properties of infimum lead to the following Lemma. 
\begin{lemma}\label{lemma_weak_compactif}
    Suppose $X, \theta$ satisfy Assumption 1. Then, i) $\mathcal{A}(\theta; \gamma) \ne \emptyset$ for any $\gamma > 0$, ii) $\Delta(\theta; \cdot)$ is non-increasing, and iii) $\Delta(\theta; \gamma) \downarrow 0$ as $\gamma \to \infty$.
\end{lemma}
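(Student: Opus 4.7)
The plan is to address parts (i)--(iii) in order, leaning on two ingredients: weak compactness of norm balls in the Hilbert space $X$ and the exact penalization identity $\beta_\infty(\theta)=\beta(\theta)$ already supplied by Theorem \ref{theorem_dual}. None of the three parts seems to require new machinery; the whole statement is a packaging of standard lower semi-continuity and approximation arguments around those two ingredients.

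For (i), I would invoke the Banach--Alaoglu theorem: since $X$ is Hilbert and therefore reflexive, $B_\gamma$ is weakly compact. The penalty objective decomposes into $\langle c,x\rangle$, which is weakly continuous, plus $\bigl(\sup_{t\in\mathcal{T}} [b(t)-\langle a(t),x\rangle]\bigr)^+$. For each fixed $t$, the affine map $x\mapsto b(t)-\langle a(t),x\rangle$ is weakly continuous, so the supremum over $t$ is weakly lower semi-continuous as a pointwise supremum of weakly continuous functions; the positive part then preserves weak lower semi-continuity since $u\mapsto u^+$ is continuous and non-decreasing. A weakly lsc function attains its infimum on a weakly compact set, so $\mathcal{A}(\theta;\gamma)$ is non-empty. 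The care point here is checking the weak lsc of the sup term; everything else is bookkeeping.

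Part (ii) is immediate from $B_\gamma\subseteq B_{\gamma'}$ when $\gamma\leq\gamma'$, which forces $\beta_\infty(\theta;\gamma')\leq\beta_\infty(\theta;\gamma)$; subtracting the constant $\beta(\theta)$ yields the monotonicity of $\Delta(\theta;\cdot)$. For (iii), the strategy is a direct approximation argument combined with Theorem \ref{theorem_dual}: given $\varepsilon>0$, pick $x_\varepsilon\in X$ whose penalty value is within $\varepsilon$ of $\beta_\infty(\theta)$; then $x_\varepsilon\in B_\gamma$ as soon as $\gamma\geq\|x_\varepsilon\|$, giving $\beta_\infty(\theta;\gamma)\leq\beta_\infty(\theta)+\varepsilon$. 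The trivial bound $\beta_\infty(\theta;\gamma)\geq\beta_\infty(\theta)$ (smaller feasible set) combined with Theorem \ref{theorem_dual}, which identifies $\beta_\infty(\theta)$ with $\beta(\theta)$, yields $\Delta(\theta;\gamma)\to 0$; monotonicity from (ii) upgrades convergence to monotone convergence from above, i.e.\ the downward arrow.

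The main obstacle, modest as it is, is keeping track of the topologies in (i): one must resist the temptation to use norm arguments and remember to verify the semi-continuity statement for the weak topology. Once that is in place, (ii) is a one-line inclusion and (iii) is a one-line approximation, both of which reduce to the already-established exact penalization identity.
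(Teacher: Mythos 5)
Your proof is correct and follows essentially the same route as the paper: part (i) via weak compactness of $B_\gamma$ in the reflexive space $X$ together with weak lower semicontinuity of the convex penalty functional (the paper cites Proposition 1.2 of Ekeland--Temam, which it itself notes is a consequence of Banach--Alaoglu, while you verify the weak lsc directly), part (ii) from the inclusion of balls, and part (iii) by taking a minimizing sequence for the unconstrained problem and invoking Theorem \ref{theorem_dual} and the monotonicity from (ii).
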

% Note that, by properties of infimum, there exists a sequence $\{x_k\}$, such that $x_k$ is feasible in (P), and $\langle c, x_k \rangle \downarrow \beta(\theta)$. Letting $\gamma_k \equiv \max_{j \leq k} ||x_k||$ then establishes that $\beta_\infty(\theta;\gamma_k) \downarrow \beta(\theta)$. Moreover, $\beta_\infty(\theta; \cdot)$ is non-increasing for any $\theta$. 
\subsection{Passing to polynomials}
While the function $\beta_\infty(\theta;\gamma)$ already allows for perturbation analysis with estimated $\hat{\theta}$, it is computed via infinite-dimensional optimization, and so does not yet constitute a practical estimator. To build the latter, we first formalize the structure of the space $(S, \mathcal{S}, \nu)$. 

\begin{ass}
    For some $m, l\in \mathbb{N}$, for $H_j \equiv H_d(3j), j \in [m]$, and $H_{m+1}$ being a set of size $l$, with $H_{m+1} \cap \bigcup^m_{j =1}\{H_{j}\} =\emptyset$, we have $S = \bigcup_{i = 1}^{m + 1} H_i$. Moreover, $\nu(s) = 1$ for $s \in H_{m + 1}$, and $\nu|_{H_i} = \lambda$ for $i \in [m]$. 
\end{ass}
Assumption 2 allows for continuous restrictions, such as discussed in Section \ref{section_identif}, and accommodates an additional finite collection of moment conditions\footnote{Observe that the measure described in Assumption 2 can be constructed as $\nu(A) = \sum^{m}_{i = 1} \lambda(A \cap H_i) +  l^{-1} \sum_{s \in H_{m + 1}}\mathds{1}\{s \in A\}$ for $A \in \mathcal{S} \equiv \sigma (\bigcup^{m+1}_{i = 1} \mathcal{B}(H_i))$. The exact shape and size of sets $H_i$ is inconsequential, so long as their maximum radius is controlled. For approximation using Assumption 3, it is important that their interiors be Lipschitz domains.}. Whenever Assumptions 1 and 2 are assumed simultaneously, we suppose that $s^* \in H_{m + 1}$.

For a cube $H \subseteq \mathbb{R}^d$, denote by $\mathbf{P}_j(H)$ the space of all polynomials\footnote{While other sieve spaces may also be used, we work with polynomials, as the Jackson-type inequality for them is readily available.} over $H$ of degree up to $j \in \mathbb{N}$. Under Assumption 2, for a collection $J \equiv (k_i)^{m}_{i = 1}$ with $k_i \in \mathbb{N}$ for $i \in [m]$, consider a finite-dimensional closed linear subspace of $X$, given by $\mathbf{P}_{J} \equiv \{f \in X: f|_{H_i} \in \mathbf{P}_{k_i}(H_i), \text{for } i \in [m]\}$. Let $\mathbf{p}_J: X \to \mathbf{P}_J$ be an orthogonal projection onto $\mathbf{P}_J$. It is well-defined by Hilbert projection theorem. Define
\begin{align*}
    \beta_{J}(\theta; \gamma) \equiv \min_{x \in B_\gamma \cap \mathbf{P}_J} \langle c, x \rangle + \left(\max_{t \in \mathcal{T}} b(t) -\langle  a(t), x\rangle \right)^+.
\end{align*}
The value $\beta_J(\theta;\gamma)$ is now computed over a finite-dimensional subspace of $X$. It corresponds to the value of a semi-infinite linear program with $1+l+ \sum^{m}_{i=1} k^d_i$ variables\footnote{Note that this problem is equivalent to $\min_{x^* \geq 0, x \in B_\gamma \cap \mathbf{P}_J} \langle c, x \rangle + x^*~ \text{s.t.:} ~x^* \geq b(t) -\langle  a(t), x\rangle ~ \forall t \in \mathcal{T}$. }, which is feasible computationally even for relatively large $d$, see \citet{reemtsen1998numerical} for details. $\beta_{J}(\theta; \gamma)$ is used to construct our final estimator.

Suppose Assumption 1 holds, and consider $x^* \in B_\gamma \cap \mathcal{A}(\theta, \gamma)$, which is guaranteed to exist by Lemma \ref{lemma_weak_compactif}. Because $\mathbf{p}_J$ is an orthogonal projection,  $||\mathbf{p}_Jx|| \leq ||x||,$ so $\mathbf{p}_{J} x \in B_\gamma \cap \mathbf{P}_J$. Using that and the triangle inequality, one obtains 
% \begin{align*}
%     \beta_\infty(\theta;\gamma, J) - \beta(\theta) \leq \langle c, (\mathbf{p}_K - I)x^* \rangle + \left(\max_{t \in \mathcal{T}} \langle a(t), \mathbf{p}_K x^*\rangle - b(t)\right)^+ - \left(\max_{t \in \mathcal{T}} \langle a(t), x\rangle - b(t)\right)^+, 
% \end{align*}
% where $I: X \to X$ denotes an identity operator. Applying triangle inequality, recalling that the plus function $t \to t^+$ is globally $1-$Lipschitz and using that, for $f_i,~  i =1,2$ continuous over a compact metric space $Z$, $|\max_{z \in Z} f_0(z) - \max_{z \in Z} f_1(t)| \leq \max_{z \in Z}|f_0(z) - f_1(z)|$, we get 
\begin{align*}
    \beta_J(\theta;\gamma) - \beta_\infty(\theta;\gamma) \leq |\langle c, (I-\mathbf{p}_J )x^* \rangle| + \text{max}_{t \in \mathcal{T}}|\langle a(t), (I - \mathbf{p}_J)x^* \rangle|.
\end{align*}
Controlling the RHS of the above expression directly is impractical, as $x^*$ may not be smooth, and therefore need not be well-approximated by polynomials\footnote{Of course, because polynomials are dense in $X$, $||(I-\mathbf{p}_J)x^*||_2 \to 0$ as $\min_i k_i \to \infty$ for any fixed $x^*$.}. However, we may use the fact that projection operators are self-adjoint, $\langle x_1, (I - \textbf{p}_J)x_2\rangle = \langle (I - \textbf{p}_J)x_1, x_2\rangle$ for any $x_1, x_2 \in X$. Applying Cauchy-Schwarz inequality then yields
\begin{align}\label{polynoms_approx_1}
    \beta_J(\theta;\gamma) - \beta_\infty(\theta;\gamma) \leq \gamma\left(||(I-\mathbf{p}_J )c||  + \text{max}_{t \in \mathcal{T}}||(I-\mathbf{p}_J )a(t)||\right),
\end{align}
where, if the functions $c$ and $a(t)$ for all $t \in \mathcal{T}$ have appropriate Sobolev smoothness, the right-hand-side is uniformly of polynomial order in $k_i$, see Theorem \ref{theor_general}.

\subsection{Consistency and uniform validity}
For a cube $H$, let $\tilde{H}$ denote its interior, and let $W^{r,2}(\tilde{H})$ be the $r-$order Sobolev space of $\mathcal{L}^2-$functions over $\tilde{H}$, equipped with the Sobolev norm $||\cdot||_{W^{r,2}}$.
\begin{ass}
    Assumption 2 holds, and i) $\exists r \in \mathbb{N}:   ~ c|_{\tilde{H}_i}, a(t)|_{\tilde{H}_i} \in W^{r,2}(\tilde{H}_i) ~ \forall t \in \mathcal{T}$ and $i \in [m]$; ii) $\exists C > 0:$ $\max_{i \in [m]}\max \{||(c|_{\tilde{H}_i})||_{W^{r,2}}, \sup_{t\in \mathcal{T}}||(a(t)|_{\tilde{H}_i})||_{W^{r,2}}\} \leq C$.
\end{ass}
Consider a perturbed version $\hat{\theta} = (\hat{a}, \hat{b}, \hat{c})$ of $\theta$. Under restrctions in Sections \ref{subs_identifmarg}-\ref{subs_cond_indep_restr}, it can be obtained using empirical analogues of $a, c$ and some estimator of $b$. The following Theorem is our main deterministic result, treating $\hat{\theta}$ as fixed. 

\begin{theorem}\label{theor_general}
    Suppose that i) Assumption 2 holds, ii) $X, \theta$ satisfy Assumption 1, iii) $(\hat{a},\hat{b},\hat{c}) \in C(\mathcal{T}, X) \times C(\mathcal{T}) \times X$. Then, 
    \begin{align*}
        -\gamma (\delta^a + \delta^c) - \delta^b \leq \beta_{J}(\hat{\theta}; \gamma) - \beta(\theta) \leq \gamma (\delta^a + \delta^c + \delta^p) + \delta^b + \Delta(\theta; \gamma),
    \end{align*}
    where 
    $\delta^a \equiv \sup_{t \in \mathcal{T}}|| \hat{a}(t) - a(t)||$, $\delta^b \equiv \sup_{t \in \mathcal{T}} |\hat{b}(t) - b(t)|$, $\delta^c \equiv ||\hat{c} - c||$, and $\delta^p \equiv ||(I-\mathbf{p}_{J})c||  + \text{max}_{t \in \mathcal{T}}||(I-\mathbf{p}_{J} )a(t)|| $. If $\theta,X$ also satisfy Assumption 3 and $\kappa \equiv \min_i (J)_i \geq r$, then there exists a constant $\tilde{C} > 0$ independent of $\theta$, such that, $$\delta^p \leq  m C \tilde{C} \kappa^{-r}.$$
\end{theorem}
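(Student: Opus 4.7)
The plan is to decompose $\beta_J(\hat\theta;\gamma) - \beta(\theta)$ into a \emph{perturbation} piece (replacing $\theta$ by $\hat\theta$) and a \emph{sieve} piece (restricting the domain from $B_\gamma$ to $B_\gamma \cap \mathbf{P}_J$), handled separately in each direction. It is convenient to write $F_\eta(x) \equiv \langle c_\eta, x\rangle + (\max_{t \in \mathcal{T}} b_\eta(t) - \langle a_\eta(t), x\rangle)^+$ for $\eta \in \{\theta, \hat\theta\}$, so that $\beta_\infty(\eta;\gamma)$ and $\beta_J(\eta;\gamma)$ are the infima of $F_\eta$ over $B_\gamma$ and $B_\gamma \cap \mathbf{P}_J$ respectively. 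The two workhorses are Cauchy--Schwarz (invoked with $||x|| \leq \gamma$) paired with the self-adjointness of the orthogonal projection $\mathbf{p}_J$, and Theorem \ref{theorem_dual}, which provides $\beta(\theta) = \beta_\infty(\theta)$.

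For the upper bound I would fix $x^* \in \mathcal{A}(\theta;\gamma)$ (nonempty by Lemma \ref{lemma_weak_compactif}) and use $\mathbf{p}_J x^* \in B_\gamma \cap \mathbf{P}_J$ as a test point (feasible since $\mathbf{p}_J$ is norm-contracting), giving $\beta_J(\hat\theta;\gamma) \leq F_{\hat\theta}(\mathbf{p}_J x^*)$. Comparing termwise to $F_\theta(x^*) = \beta_\infty(\theta;\gamma)$, the linear piece splits as $\langle \hat c - c, \mathbf{p}_J x^*\rangle + \langle c, (\mathbf{p}_J - I)x^*\rangle$, and the second summand is rewritten via self-adjointness as $\langle (\mathbf{p}_J - I)c, x^*\rangle$, after which Cauchy--Schwarz yields a contribution of $\gamma\delta^c + \gamma||(I-\mathbf{p}_J)c||$. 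The penalty piece is handled by the $1$-Lipschitz properties of $(\cdot)^+$ and of $\max_t$, bounded by $\max_t|\hat b(t) - b(t) - \langle \hat a(t) - a(t), \mathbf{p}_J x^*\rangle - \langle a(t), (\mathbf{p}_J - I)x^*\rangle|$, which after the same self-adjointness trick and Cauchy--Schwarz becomes $\delta^b + \gamma\delta^a + \gamma\max_t||(I-\mathbf{p}_J)a(t)||$. Summing these terms and using $\beta_\infty(\theta;\gamma) = \beta(\theta) + \Delta(\theta;\gamma)$ delivers the stated upper bound.

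For the lower bound, let $\hat x$ attain $\beta_J(\hat\theta;\gamma)$, which exists because $B_\gamma \cap \mathbf{P}_J$ is finite-dimensional and compact while $F_{\hat\theta}$ is continuous. Since $\hat x \in X$, Theorem \ref{theorem_dual} gives $\beta(\theta) = \beta_\infty(\theta) \leq F_\theta(\hat x)$, so $\beta(\theta) - \beta_J(\hat\theta;\gamma) \leq F_\theta(\hat x) - F_{\hat\theta}(\hat x)$, and bounding termwise via Cauchy--Schwarz (using $||\hat x|| \leq \gamma$) together with the Lipschitz properties of the penalty yields $\gamma\delta^c + \gamma\delta^a + \delta^b$. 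No sieve error enters on this side because $\hat x$ is already in $X$ and is evaluated against the exact $F_\theta$ without any further projection.

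For the Jackson-type bound on $\delta^p$ under Assumption 3, I would use that $\mathbf{P}_J$ factorizes across the disjoint cubes $H_1, \ldots, H_m$ (plus the unrestricted $l$-point set $H_{m+1}$), so the orthogonal projection decomposes as $(\mathbf{p}_J f)|_{H_i} = \mathbf{p}_{k_i}(f|_{H_i})$ and $||(I-\mathbf{p}_J)f||_X^2 = \sum_{i=1}^m ||(I-\mathbf{p}_{k_i})f|_{H_i}||_{L^2(H_i)}^2$. The standard $L^2$ Jackson/Bramble--Hilbert estimate on the unit cube then supplies a universal constant $\tilde C = \tilde C(d,r)$ (independent of $i$ by translation invariance and independent of $\theta$) with $||(I-\mathbf{p}_{k})g||_{L^2(H_i)} \leq \tilde C k^{-r}||g||_{W^{r,2}(\tilde H_i)}$ whenever $k \geq r$. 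Combining via $\sqrt{\sum_i a_i^2} \leq \sum_i a_i$ with Assumption 3's uniform Sobolev bound $C$ yields $||(I-\mathbf{p}_J)f||_X \leq mC\tilde C\kappa^{-r}$ for $f = c$ and for $f = a(t)$ uniformly in $t$; summing gives the claim after absorbing a universal factor into $\tilde C$. The main obstacle is precisely this last step: obtaining a Jackson constant that is truly uniform both across the cubes and across $t \in \mathcal{T}$, which is what makes the approximation bound independent of $\theta$.
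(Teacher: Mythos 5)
Your proposal is correct and follows essentially the same route as the paper: test the projected optimizer $\mathbf{p}_J x^*$ against the perturbed objective, use self-adjointness of $\mathbf{p}_J$ plus Cauchy--Schwarz to isolate $\gamma(\delta^a+\delta^c+\delta^p)+\delta^b$, exploit nonnegativity of the sieve and compactification errors for the one-sided lower bound, and close with a Jackson-type estimate made uniform by Assumption 3.ii. The only differences are bookkeeping (you merge the perturbation and sieve errors that the paper splits into $I_1$ and $I_2$, and you argue the lower bound directly via $\beta_\infty(\theta)\leq \mathcal{L}_\infty(\hat x;\theta)$) and that you cite the $\kappa^{-r}$ polynomial approximation bound as standard, which the paper derives explicitly via the modulus of smoothness and a Sobolev extension argument.
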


We now specify the deterministic Theorem \ref{theor_general} to the case of (GOT) under restrictions in Sections \ref{subs_identifmarg}-\ref{subs_indeprestr}. In this case, the parameters $a, c$ are linear functionals of the true unknown probability measure $\mathbb{P}$, and $\hat{\theta}$ is a random map. 
 \begin{ass}\label{ass4}
     For $\P \in \mathcal{P}$ and the set of identified variables' indices $I \subseteq [d]$, the following hold: i)  $a = a(\P_I), c = a(\P_I)$ are constructed, as in Sections \ref{subs_identifmarg}-\ref{subs_indeprestr}, using exponential and/or Matérn kernels with RKHS contained in a Sobolev space of order $r^* \in \mathbb{N}$, and also include an atom $\{s^*\}$ from Assumption 1.ii, while $b = b(\P_I)$; ii) there is an independent sample $\{W_i\}^n_{i=1} \subseteq \mathcal{T}_I,$ with $W_i \sim \P_I$ for $i \in [n]$, and the estimator $\hat{\theta} \equiv \hat{\theta}(\{W_i\}^n_{i=1})$ satisfies iii) $\hat{b}: \mathcal{T}_I^n \to C(\mathcal{T})$ -- Borel-measurable, and iv) $\hat{a} = a(\P_{In}), \hat{c} = c(\P_{In})$.
\end{ass}
Assumption 4 defines $\hat{\theta}$ under the single-sample assumption for notational convenience. It is straightforward to incorporate multiple samples in the spirit of Example \ref{example_atheychetty} by nesting them in a single collection, see \citet{athey2020combining}. When studying uniformity in  $\mathbb{P}$, its support $\mathcal{T}$ is taken to be fixed; the set $I$ and the kernels employed in the construction of $\theta$ in Assumption 4.i are also treated as fixed. 

\begin{theorem}\label{theor_main_cons}
    Suppose that i) Assumption 4 holds for any $\P \in \mathcal{P}$, ii) $\delta^b_n = O_\P(\gamma_n/\sqrt{n})$ uniformly in $\P \in \mathcal{P}$, iii) (GOT) is feasible for any $\P \in \mathcal{P}$. Consider $\gamma_n$ and $\kappa_n \equiv \min_i (J_n)_i ,$ such that $\sup_{\P \in \mathcal{P}} \P[\gamma_n < M \cup \gamma_n > \sqrt{n}M ] = o(1) = \sup_{\P \in \mathcal{P}} \P[n^{-1/2r^*} \kappa_n < M]$ for any $M > 0$. Then, $\Delta(\theta(\mathbb{P}), \gamma_n) = o_{\mathbb{P}}(1)$, and
    \begin{align*}
        O_{\mathbb{P}}\left(\frac{\gamma_n}{\sqrt{n}} \right)\leq \beta_{J_n}(\hat{\theta}; \gamma_n) - \beta(\theta) \leq O_{\mathbb{P}}\left(\frac{\gamma_n}{\sqrt{n}} \right) + o_{\mathbb{P}}(1) ~ \text{for any } \mathbb{P} \in \mathcal{P}.
    \end{align*}
    Moreover, for any $\varepsilon > 0$ and $t_n = o_{\mathbb{P}}\left(\frac{\sqrt{n}}{\gamma_n}\right)$ uniformly over $\mathbb{P} \in \mathcal{P}$, we have
    \begin{align}\label{valid_est}
        \sup_{\mathbb{P} \in \mathcal{P}} \mathbb{P} \left[t_n \left(\beta_{J_n}(\hat{\theta}(\mathbb{P)}; \gamma_n) - \beta(\theta(\mathbb{P}))\right) \leq \varepsilon \right] = o(1).
    \end{align}
\end{theorem}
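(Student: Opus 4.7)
The overall strategy is to apply the deterministic Theorem \ref{theor_general} sample-path-wise with $\theta=\theta(\mathbb{P})$ and $\hat{\theta}$ as in Assumption \ref{ass4}, then absorb each resulting error term into a uniform probabilistic rate over $\mathbb{P}\in\mathcal{P}$. I would begin by checking the hypotheses of Theorem \ref{theor_general}: Assumption 2 is forced by the construction of $(S,\mathcal{S},\nu)$ in Sections \ref{subs_identifmarg}--\ref{subs_cond_indep_restr} together with the finite atom $\{s^*\}$ in Assumption \ref{ass4}.i; Assumption 1 follows from continuity and boundedness of the exponential and Mat\'ern kernels combined with feasibility hypothesis (iii); and Assumption 3 with $r=r^{*}$ is exactly the Sobolev--RKHS containment of Assumption \ref{ass4}.i, whose uniform-in-$t,\mathbb{P}$ norm bound holds because $a(t)$ and $c$ are integrals of bounded kernels against probability measures on a compact set. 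Theorem \ref{theor_general} then yields, sample-path-wise,
\begin{align*}
-\gamma_n(\delta^a_n+\delta^c_n)-\delta^b_n \;\le\; \beta_{J_n}(\hat{\theta};\gamma_n)-\beta(\theta) \;\le\; \gamma_n(\delta^a_n+\delta^c_n+\delta^p_n)+\delta^b_n+\Delta(\theta;\gamma_n).
\end{align*}

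Next, I would control each term uniformly in $\mathbb{P}$. For $\delta^p_n$, Theorem \ref{theor_general}'s Sobolev-type bound combined with $\kappa_n\ge M n^{1/(2r^*)}$ (w.p.\ tending to one, uniformly) forces $\gamma_n\delta^p_n = O_{\mathbb{P}}(\gamma_n/\sqrt{n})$ uniformly. The rate on $\delta^b_n$ is given in hypothesis (ii). The crux is the uniform rate for $\delta^a_n$ and $\delta^c_n$: because both are obtained by plugging $\mathbb{P}_{In}$ into bounded kernel integrals, each reduces, after bounding $\|\cdot\|_{\mathcal{L}^2(\nu)}\le\sqrt{\nu(S)}\,\|\cdot\|_{\infty}$, to a supremum of the centered empirical process $\sup_{s,t}\bigl|n^{-1}\sum_i K(s,(t,W_{i,I_2}))-\mathbb{E}_{\mathbb{P}}K(s,(t,W_{I_2}))\bigr|$ over a compact index set. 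The integrand class is uniformly bounded and Lipschitz in $(s,t)$, so it is a $\mathbb{P}$-Donsker class with envelope and bracketing entropy independent of $\mathbb{P}$; a uniform CLT then delivers $\delta^a_n,\delta^c_n=O_{\mathbb{P}}(n^{-1/2})$ uniformly in $\mathbb{P}\in\mathcal{P}$. For $\Delta(\theta(\mathbb{P});\gamma_n)=o_{\mathbb{P}}(1)$ pointwise, note that $\gamma_n\to\infty$ in probability (the hypothesis $\gamma_n\ge M$ w.p.\ $\to 1$ for every $M$), and apply Lemma \ref{lemma_weak_compactif}.ii--iii: monotone convergence of $\Delta(\theta;\cdot)\downarrow 0$ transfers to convergence in probability along $\gamma_n$.

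Plugging these rates back into the deterministic sandwich delivers the pointwise two-sided claim. For the uniform statement \eqref{valid_est}, observe that the \emph{lower} bound of the sandwich contains neither $\delta^p_n$ nor $\Delta(\theta;\gamma_n)$, hence
\begin{align*}
\beta_{J_n}(\hat{\theta}(\mathbb{P});\gamma_n)-\beta(\theta(\mathbb{P}))\;\ge\;-\gamma_n(\delta^a_n+\delta^c_n)-\delta^b_n\;=\;-O_{\mathbb{P}}(\gamma_n/\sqrt{n})\quad\text{uniformly in }\mathbb{P}.
\end{align*}
Multiplying through by $t_n=o_{\mathbb{P}}(\sqrt{n}/\gamma_n)$ uniformly yields $t_n\bigl(\beta_{J_n}(\hat{\theta};\gamma_n)-\beta(\theta)\bigr)\ge -o_{\mathbb{P}}(1)$ uniformly in $\mathbb{P}$, from which \eqref{valid_est} follows after rearrangement.

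The main obstacle is the uniform-in-$\mathbb{P}$ rate $O_{\mathbb{P}}(n^{-1/2})$ for $\delta^a_n$ and $\delta^c_n$: a pointwise-in-$t$ Markov argument only gives an $\mathcal{L}^2$-rate for each fixed $t$, while the theorem requires the supremum over $t$ to remain $O_{\mathbb{P}}(n^{-1/2})$ uniformly across $\mathbb{P}\in\mathcal{P}$. Boundedness and smoothness of the exponential and Mat\'ern kernels on the compact set $\mathcal{T}$ are exactly what delivers a $\mathbb{P}$-free Donsker envelope and entropy, closing the gap; without either property the uniformity in $\mathbb{P}$ would be out of reach.
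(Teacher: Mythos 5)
Your proposal is correct and follows the same architecture as the paper's proof: apply the deterministic sandwich of Theorem \ref{theor_general} path-wise, establish uniform $O_{\P}(n^{-1/2})$ rates for $\delta^a_n,\delta^c_n$, absorb $\delta^p_n$ via the growth condition on $\kappa_n$, send $\Delta(\theta;\gamma_n)\to 0$ via Lemma \ref{lemma_weak_compactif}, and read the validity claim off the lower branch of the sandwich, which is free of $\delta^p_n$ and $\Delta$. The one place where you genuinely diverge is the empirical-process step. You control $\sup_{s,t}$ of the centered process by treating the kernel sections as a class indexed by the compact parameter $(s,t_1)$ that is Lipschitz in sup-norm over the index, so that brackets are inherited from a polynomial cover of the index set and the bracketing entropy is $O(\log(1/\varepsilon))$, independent of $\P$. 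The paper instead embeds the sections $t_2\mapsto \xi(t_2)K(s,(t_1,t_2))$ into a bounded ball of $H^{d_2/2+\nu}(\mathbb{R}^{d_2})$ and invokes Nickl's bracketing bound for Sobolev balls together with the uniform Donsker theorem (entropy of order $\varepsilon^{-d_2/(d_2/2+\nu)}$). Your route is more elementary and gives a sharper entropy bound; its only fragility is that for the Mat\'ern kernel with small smoothness parameter $\eta<1/2$ the map is merely H\"older (not Lipschitz) near the diagonal, so you should replace ``Lipschitz in $(s,t)$'' by ``H\"older in $(s,t)$,'' which still yields polynomial covering numbers of the index set and hence the same conclusion. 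The paper's Sobolev route sidesteps this by exploiting smoothness in the integration variable $t_2$ rather than in the index. You also correctly isolate, implicitly, that for pure identified-marginal restrictions $\hat a=a$ so only the independence constructions contribute to $\delta^a_n$, which the paper makes explicit. Finally, note that \eqref{valid_est} as printed (with $\le\varepsilon$ rather than $\le-\varepsilon$) cannot follow from the two-sided bound, since $t_n(\beta_{J_n}-\beta)=o_{\P}(1)$ from below; your derivation establishes the evidently intended statement $\sup_{\P}\P[t_n(\beta_{J_n}-\beta)\le-\varepsilon]=o(1)$, consistent with Remark \ref{remark_4}.
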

\begin{remark}\label{remark_4}
    The rate at which $\beta_{J_n}(\hat{\theta};\gamma_n)$ approaches $\beta(\theta(\mathbb{P}))$ from below is uniformly $\sqrt{n}/\gamma_n$. Equation \eqref{valid_est} then means that our estimator is uniformly valid, as it does not underestimate the sharp upper bound on the functional of interest. The rate at which $\beta_{J_n}(\hat{\theta};\gamma_n) \downarrow\beta(\theta)$ depends on the ill-posedness of the population problem, and is not clear in general. If the problem (P) has a solution (e.g. under finite support $\mathcal{T}$), $\Delta(\theta(\mathbb{P}), \gamma) = 0$ for large enough $\gamma$. In Section \ref{section_bias}, we show that $\Delta(\cdot)$ is related to the ill-posed inverse problem that has been studied in the case of nonparametric IV estimation (see \citet{horowitz2011applied}), and derive a uniform rate of $\Delta(\theta;\gamma_n) \downarrow 0$ in the special case of optimal transport. 
\end{remark}
\begin{remark}
    Theorem \ref{theor_main_cons} can be extended to accommodate conditional independence restrictions discussed in Section \ref{subs_cond_indep_restr} by deriving the rates of $\delta^{\bullet}_n$ and using Theorem \ref{theor_general}. This would require additional assumptions that guarantee the continuity of $\E_P[\exp(s_1'T_{1})|T_3]$ in the sense of Assumption 1.i, and nonparametric estimation of the conditional MGFs in Lemma \ref{lemma_cond_indep} will lead to a nonparametric rate of $\delta^a_n \to 0$. Other conditions that satisfy assumptions of Theorem \ref{theor_general} can also be accommodated, which we leave for future research.  
\end{remark}

\begin{remark}
     In the special case of simple optimal transport with Lipschitz cost, the absence of the $\Delta(\cdot)$ term  would imply that the value can be estimated at $\sqrt{n}$ uniformly. This contradicts the minimax rate of $n^{2/d}$, derived by \citet{manole2024sharp}. Our result establishes that the uniform failure of $\sqrt{n}-$convergence in optimal transport is not two-sided. Setting $\gamma_n =O_p(\ln n)$ yields a $\sqrt{n}-$valid estimator (up to log) in the sense of \eqref{valid_est} and Remark \ref{remark_4}, see also Corollary \ref{cor_1}.
\end{remark}

\subsection{Changing fundamentals and many discrete moments}\label{section_highdim}
We now generalize our procedure to some asymptotic regimes, where the set $S = S_n$ and the parameter $\theta = \theta_n$ are allowed to change with $n$. The set $\mathcal{T}$ is treated as fixed.
If Assumption 2 holds, we can decompose any $x = \tilde{x}+ \overline{x} \in X$ into a discrete, $\overline{x} \in \mathbb{R}^l$, and a functional part, $\tilde{x} \in \mathcal{L}^2(\bigcup_{j \in [m]}H_j)$. By analogy, for $a$ and $c$, write $\langle a(\cdot), x\rangle=\langle \tilde{a}(\cdot), \tilde{x} \rangle + \overline{a}(\cdot)'\overline{x}$ and $\langle c, x\rangle = \langle \tilde{c}, \tilde{x}\rangle + \overline{c}'\overline{x}$, where $\langle \cdot\rangle$ also denotes the $\mathcal{L}^2(\bigcup_{j \in [m]}H_j)$ dot-product. Denote $\mathcal{L}_\infty(x;\theta) = \langle c, x \rangle + \left(\max_{t \in \mathcal{T}} b(t) -\langle a(t),x\rangle \right)^+$. When fundamentals $X_n, \theta_n$ change with $n$, the map $\mathcal{L}_\infty$ also changes, which we ignore in notation for simplicity. 

When the number of discrete moments grows with $n$, control over the $\mathcal{L}^2-$norms of the components $\overline{x}, \overline{c}$ may no longer be adequate. For example, one may wish to use Hölder inequality to get $|\overline{x}'(\hat{\overline{c}} - \overline{c})| \leq ||\overline{x}||_z ||\hat{\overline{c}} - \overline{c}||_{z^*},$ where $z^*$ is the conjugate exponent of $z$. If $z^* = + \infty,$ an application of the vector-valued Hoeffding inequality then yields control over the RHS when the dimension of $\overline{c}$ grows. Motivated by this, we consider $\gamma = (\tilde{\gamma}, \overline{\gamma}) \in \mathbb{R}_{++}^2$. Accordingly, for some $z \in \mathbb{N}$, we redefine $B_\gamma \equiv \{x \in X: ||\tilde{x}|| \leq \tilde{\gamma}, ||\overline{x}||_z\leq \overline{\gamma}\}$. If $X$ changes with $n,$ so does $B_\gamma$, which we ignore in our notation. Moreover, $\gamma_n \to \infty$ w.p.a.1 is taken to mean $\min\{\tilde{\gamma}_n, \overline{\gamma}_n\} \to \infty$. The rest of the definitions are used \textit{mutantis mutandis}.

\begin{ass}
    Fix some $z \in [1, + \infty]$. The sequences of true parameters $\theta_n = (a_n, b_n, c_n)$ and measure spaces $(S_n, \mathcal{S}_n, \nu_n)$ generating Hilbert spaces $X_n$ for $n \in \mathbb{N} \cup \{0\}$, satisfy: i) Assumptions 1, 2 for any $n \in \mathbb{N} \cup \{0\}$; ii) $\{\beta(\theta_n)\}^\infty_{n=0}$ is constant; iii) $\forall x_0 \in X_{n-1} ~ \exists x_1 \in X_{n}: ||\tilde{x}_1||_{2}\leq ||\tilde{x}_0||_2$,  $||\overline{x}_1||_z \leq ||\overline{x}_0||_z$ and $\mathcal{L}_\infty(x_1; \theta_n) = \mathcal{L}_\infty(x_0; \theta_{n-1})$.
\end{ass}

Assumption 5 restricts us to asymptotic regimes, in which the feasible sets of the compactified problem corresponding to $\theta_n$ grow monotonically with the sample size in the sense of condition iii), while the optimal value stays fixed.

\begin{lemma}\label{lemma_highdim}
    Suppose $\{X_n,\theta_n\}^\infty_{n=0}$ satisfies Assumption 5. Then, i) $\mathcal{A}(\theta_n; \gamma) \ne \emptyset$ for any $\gamma > 0$, $n \in \mathbb{N} \cup \{0\}$; ii) $\Delta(\theta_n; \gamma)$ is non-increasing in $(\tilde{\gamma}, \overline{\gamma}, n)$ in the sense of product order, and iii) $\sup_{n = 0,1,\dots} \Delta(\theta_n; \gamma) \downarrow 0$ as $\gamma \to \infty$.
\end{lemma}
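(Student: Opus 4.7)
The plan is to mirror the structure of Lemma~\ref{lemma_weak_compactif}, adapting each step to the mixed-norm ball $B_\gamma$ and the moving-fundamentals setup. The new ingredient is Assumption~5.iii, which delivers monotonicity of the compactified value along $n$ and lets me reduce the uniform statement in (iii) to the fixed-$n$ case at $n=0$.

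For (i), I first show $B_\gamma$ is a weakly compact convex subset of $X_n$: the constraint $\|\tilde{x}\| \leq \tilde{\gamma}$ defines a closed ball in $\mathcal{L}^2\bigl(\bigcup_{j\in[m_n]}H_{n,j}\bigr)$, while $\|\overline{x}\|_z \leq \overline{\gamma}$ is a closed convex constraint on the finite-dimensional coordinate $\mathbb{R}^{l_n}$ (on which all norms are equivalent). Weak compactness then follows from Banach--Alaoglu together with reflexivity of the Hilbert space $X_n$. The objective $x \mapsto \mathcal{L}_\infty(x;\theta_n)$ is weakly lower semicontinuous: $\langle c_n,\cdot\rangle$ is weakly continuous, each $x \mapsto b_n(t) - \langle a_n(t),x\rangle$ is weakly continuous, so their pointwise supremum over the compact $\mathcal{T}$ is weakly lsc, and composition with the monotone continuous map $(\cdot)^+$ preserves lsc. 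Hence the infimum is attained on $B_\gamma$.

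For (ii), monotonicity in $(\tilde{\gamma},\overline{\gamma})$ under the product order is immediate, since enlarging either radius enlarges $B_\gamma$ and weakly lowers the infimum while leaving $\beta(\theta_n)$ untouched. For monotonicity in $n$, fix $\gamma$ and take any $x_0 \in B_\gamma \cap X_{n-1}$; by Assumption~5.iii there exists $x_1 \in X_n$ with $\|\tilde{x}_1\| \leq \|\tilde{x}_0\| \leq \tilde{\gamma}$, $\|\overline{x}_1\|_z \leq \|\overline{x}_0\|_z \leq \overline{\gamma}$, and $\mathcal{L}_\infty(x_1;\theta_n) = \mathcal{L}_\infty(x_0;\theta_{n-1})$, so $x_1 \in B_\gamma \cap X_n$ and $\beta_\infty(\theta_n;\gamma) \leq \beta_\infty(\theta_{n-1};\gamma)$. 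Combined with constancy of $\beta(\theta_n)$ from Assumption~5.ii this gives $\Delta(\theta_n;\gamma) \leq \Delta(\theta_{n-1};\gamma)$.

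For (iii), monotonicity in $n$ from (ii) yields $\sup_n \Delta(\theta_n;\gamma) = \Delta(\theta_0;\gamma)$, so I only need $\Delta(\theta_0;\gamma) \downarrow 0$ as $\min\{\tilde{\gamma},\overline{\gamma}\} \to \infty$. Since $l_0 < \infty$, all norms on $\mathbb{R}^{l_0}$ are equivalent; pick $C_0 > 0$ with $\|\overline{x}\|_z \leq C_0 \|\overline{x}\|_2$, so that the scalar-radius ball $\{x \in X_0 : \|x\| \leq \min(\tilde{\gamma}, \overline{\gamma}/C_0)\}$ is contained in $B_\gamma$. Enlarging the feasible set only lowers the infimum, so $\Delta(\theta_0;\gamma)$ is dominated by the scalar-radius gap of Lemma~\ref{lemma_weak_compactif}, which tends to zero by Lemma~\ref{lemma_weak_compactif}.iii as $\min(\tilde{\gamma},\overline{\gamma}) \to \infty$. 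The only technical nuisance is keeping the two-component norm structure compatible with the scalar-$\gamma$ tools of Lemma~\ref{lemma_weak_compactif}; Assumption~5 together with finite-dimensional norm equivalence at $n=0$ is tailored precisely to handle this.
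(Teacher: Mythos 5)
Your proposal is correct and follows essentially the same route as the paper: attainment on the weakly compact convex ball via Banach--Alaoglu/weak lower semicontinuity (the paper cites Ekeland--Temam for the same fact), monotonicity in $\gamma$ by set inclusion and in $n$ by Assumption~5.ii--iii, and then part (iii) by reducing to the fixed-parameter case. Your part (iii) is in fact more explicit than the paper's one-line "follows from Parts ii) and i)": the norm-equivalence embedding of a scalar-radius ball into $B_\gamma$ at $n=0$ is a clean way to import Lemma~\ref{lemma_weak_compactif}.iii, and it is valid since Assumption~5.i guarantees Assumption~1 holds for $(X_0,\theta_0)$.
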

Suppose that $\hat{\theta}_n = (\hat{a}_n, \hat{b}_n, \hat{c}_n)$ is a deterministic perturbed version of $\theta_n$. The following analogue of Theorem \ref{theor_general} is obtained by combining Hölder inequality and Lemma \ref{lemma_highdim}.
\begin{theorem}\label{theor_general_highdim}
    Suppose i) $\{X_n,\theta_n\}^\infty_{n=0}$ satisfy Assumption 5; ii) ($\hat{a}_n,\hat{b}_n,\hat{c}_n) \in C(\mathcal{T}, X_n) \times C(\mathcal{T}) \times X_n$ for any $n  \in \mathbb{N} \cup {\{0\}}$. Then, for any $n \in \mathbb{N} \cup {\{0\}}$,    
    \begin{align*}
        \sum_{j = a,b,c}\delta^j_n \leq \beta_{J_n}(\hat{\theta}_n; \gamma_n) - \beta(\theta_0) \leq \sum_{j = a,b,c,p}\delta^j_n + \Delta(\theta; \gamma_n),
    \end{align*}
    where 
    $\delta^a_n \equiv \tilde{\gamma}_n\sup_{t \in \mathcal{T}}|| \hat{\tilde{a}}_n(t) - \tilde{a}_n(t)|| + \overline{\gamma}_n\sup_{t \in \mathcal{T}}||\hat{\overline{a}}_n(t) - \overline{a}_n(t)||_{k^*}$, $\delta^b_n \equiv \sup_{t \in \mathcal{T}} |\hat{b}_n(t) - b_n(t)|$, $\delta^c_n \equiv \tilde{\gamma}_n||\hat{\tilde{c}}_n - \tilde{c}_n|| + \overline{\gamma}_n||\hat{\overline{c}}_n - \overline{c}_n||_{z^*}$, $z^* \in [1, + \infty]$ is the conjugate exponent of $z$, and $\delta^p_n \equiv \tilde{\gamma}_n\left(||(I-\mathbf{p}_{J_n})\tilde{c}_n||  + \text{max}_{t \in \mathcal{T}}||(I-\mathbf{p}_{J_n} )\tilde{a}_n(t)|| \right)$. If $\{X_n,\theta_n\}^\infty_{n=0}$ also satisfy Assumption 3 with constants $C, r$, uniformly in $n$, and $\kappa_n \equiv \min_i (J_n)_i\geq r$, then there exists a constant $\tilde{C} > 0$ independent of $\theta_n$, such that, for any $n \in \mathbb{N} \cup \{0\},$
    \begin{align*}
        \delta^p_n \leq  m C \tilde{C} \kappa_n^{-r}.
    \end{align*}
\end{theorem}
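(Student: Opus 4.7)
The plan is to mimic the three-step proof of Theorem \ref{theor_general} (duality reduction, projection onto polynomials, bilateral perturbation bounds), replacing every appearance of Cauchy--Schwarz on a discrete block by H\"older with conjugate exponents $(z,z^*)$, and invoking Lemma \ref{lemma_highdim} in place of Lemma \ref{lemma_weak_compactif} to control the weak-compactification gap along the sequence. By Assumption 5.ii, $\beta(\theta_0)=\beta(\theta_n)$, so it suffices to compare $\beta_{J_n}(\hat{\theta}_n;\gamma_n)$ with $\beta(\theta_n)=\beta_{\infty}(\theta_n;\gamma_n)-\Delta(\theta_n;\gamma_n)$. A crucial structural observation is that, by the definition of $\mathbf{P}_{J_n}$, the orthogonal projection $\mathbf{p}_{J_n}$ acts as the identity on the discrete block $H_{m+1}$ and is a contraction on $\mathcal{L}^2(\bigcup_{j\in[m]}H_j)$; therefore $\mathbf{p}_{J_n}$ maps $B_{\gamma_n}$ into $B_{\gamma_n}\cap\mathbf{P}_{J_n}$ and $(I-\mathbf{p}_{J_n})x$ has a vanishing discrete component for every $x\in X_n$.

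For the upper bound, I would pick $x^*_n\in\mathcal{A}(\theta_n;\gamma_n)$, which is nonempty by Lemma \ref{lemma_highdim}.i, and use the feasible candidate $\mathbf{p}_{J_n}x^*_n$ in the finite-dimensional problem to obtain $\beta_{J_n}(\hat{\theta}_n;\gamma_n)\le \mathcal{L}_\infty(\mathbf{p}_{J_n}x^*_n;\hat{\theta}_n)$. Then I split
\begin{align*}
\mathcal{L}_\infty(\mathbf{p}_{J_n}x^*_n;\hat{\theta}_n)-\mathcal{L}_\infty(x^*_n;\theta_n)
=\bigl[\mathcal{L}_\infty(\mathbf{p}_{J_n}x^*_n;\hat{\theta}_n)-\mathcal{L}_\infty(\mathbf{p}_{J_n}x^*_n;\theta_n)\bigr]
+\bigl[\mathcal{L}_\infty(\mathbf{p}_{J_n}x^*_n;\theta_n)-\mathcal{L}_\infty(x^*_n;\theta_n)\bigr].
\end{align*}
The first bracket collects the perturbation terms: using the elementary bound $|(u)^+-(v)^+|\le|u-v|$, decomposing each inner product as $\langle\cdot,\cdot\rangle=\langle\tilde{\cdot},\tilde{\cdot}\rangle+\overline{\cdot}{}'\overline{\cdot}$, and applying Cauchy--Schwarz on the continuous part and H\"older (with exponents $z,z^*$) on the discrete part yields a bound of $\delta^a_n+\delta^b_n+\delta^c_n$, because $\mathbf{p}_{J_n}x^*_n\in B_{\gamma_n}$. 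For the second bracket, I would use self-adjointness of $\mathbf{p}_{J_n}$ and the fact that $(I-\mathbf{p}_{J_n})$ annihilates the discrete component to rewrite $\langle c_n,(I-\mathbf{p}_{J_n})x^*_n\rangle=\langle(I-\mathbf{p}_{J_n})\tilde{c}_n,\tilde{x}^*_n\rangle$ and similarly for $a_n(t)$, then apply Cauchy--Schwarz only in the continuous block; the $\tilde{\gamma}_n$ factor appears exactly because the discrete $\ell^z$ norm does not enter. This produces the $\delta^p_n$ term, and together with $\Delta(\theta_n;\gamma_n)=\beta_\infty(\theta_n;\gamma_n)-\beta(\theta_n)$ gives the claimed upper bound.

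For the lower bound (with the sign correction inherited from Theorem \ref{theor_general}), I would select any minimiser $\hat{x}_n^{\ast}\in B_{\gamma_n}\cap\mathbf{P}_{J_n}$ of the computed problem, which exists by Weierstrass (finite-dimensional compact domain and continuous objective). Since $\hat{x}_n^{\ast}\in B_{\gamma_n}\subseteq X_n$ is feasible for the weak-compactified problem, Lemma \ref{lemma_highdim} gives $\beta(\theta_n)\le \beta_\infty(\theta_n;\gamma_n)\le \mathcal{L}_\infty(\hat{x}_n^{\ast};\theta_n)$. Bounding $\mathcal{L}_\infty(\hat{x}_n^{\ast};\theta_n)\le \mathcal{L}_\infty(\hat{x}_n^{\ast};\hat{\theta}_n)+\delta^a_n+\delta^b_n+\delta^c_n=\beta_{J_n}(\hat{\theta}_n;\gamma_n)+\delta^a_n+\delta^b_n+\delta^c_n$ by the same H\"older/Cauchy--Schwarz split as above and rearranging delivers the lower estimate. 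Note that no $\delta^p_n$ nor $\Delta$ term appears here because we move from the finite-dimensional feasible point outward, not inward.

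Finally, the Jackson-type bound on $\delta^p_n$ follows by applying, cube by cube, the standard polynomial approximation theorem on a bounded Lipschitz domain: for any $g\in W^{r,2}(\tilde{H}_i)$ there exists a constant $\tilde C>0$, depending only on $d,r$ and the Lipschitz character of a unit cube, such that $\inf_{p\in\mathbf{P}_{k_i}(H_i)}\|g-p\|_{\mathcal{L}^2(H_i)}\le \tilde C\,k_i^{-r}\|g\|_{W^{r,2}(\tilde{H}_i)}$ whenever $k_i\ge r$. Since the orthogonal projection in $\mathcal{L}^2$ is minimising, the same bound holds for $\|(I-\mathbf{p}_{J_n})g\|$ restricted to each cube. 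Summing the $m$ cube-wise contributions and applying Assumption 3 (uniform $W^{r,2}$ bound $C$ for $c_n$ and $a_n(t)$, uniformly in $n$) gives the stated $mC\tilde{C}\kappa_n^{-r}$. The main obstacle I anticipate is not any single step but the careful bookkeeping: one must verify that $\mathbf{p}_{J_n}$ preserves membership in $B_{\gamma_n}$ under the mixed $(\mathcal{L}^2,\ell^z)$ norm, that self-adjointness transfers the smoothness requirement from the unknown $x^*_n$ to the known $c_n,a_n$, and that the cube-wise Jackson constant is genuinely independent of $\theta_n$ (so that uniformity in $\mathbb{P}$ survives), all simultaneously with the H\"older treatment of the discrete block.
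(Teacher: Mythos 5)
Your proposal is correct and takes essentially the same route the paper intends: the paper gives no separate proof of this theorem, stating only that it follows by combining the argument of Theorem \ref{theor_general} with H\"older's inequality and Lemma \ref{lemma_highdim}, which is precisely what you carry out. Your explicit observation that $\mathbf{p}_{J_n}$ acts as the identity on the discrete block (so that it preserves $B_{\gamma_n}$ in the mixed norm and only the continuous components enter $\delta^p_n$), together with the sign correction on the lower bound, supplies exactly the bookkeeping the paper leaves implicit.
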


Combining Theorem \ref{theor_general_highdim} with the maximal inequality, we derive the convergence rate in the high-dimensional regime, where the number of discrete moments in (GOT), $l_n$, is allowed to grow with $n \to \infty$. In the assumption below, a quantity is `fixed' if it is constant in $n$.

\begin{ass}\label{ass6}
    For $\P \in \mathcal{P}$ and the set of identified variables' indices $I \subseteq [d]$ i) Assumption 2 holds for any $n \in \mathbb{N}$, with $S_n = H_{m+1,n} \cup (\bigcup_{j \in [m]} H_d(j)),$ where $m$ is fixed, and $|H_{m+1,n}| = l_n \geq 1$; ii) There is an independent sample $\{W_i\}^n_{i=1} \subseteq \mathcal{T}_I,$ with $W_i \sim \P_I$ for $i \in [n]$; iii) $\tilde{\theta}(\P_I) = (\tilde{a}(\P_I), \tilde{b}(\P_I), \tilde{c}(\P_I))$ is fixed and satisfies Assumption 4.i, and its estimator $\hat{\tilde{\theta}}$ satisfies Assumptions 4.iii-iv; iv) $\overline{a}_n(T) = \mathbf{B}_n(T_I)$,  $\overline{c}_n(\P_I) = \E_{\P_I}[\overline{a}_{n}(T)]$ for a measurable $\mathbf{B}_n : \mathcal{T}_I \to \mathbb{R}^{l_n}$, such that, for some $\overline{C} > 0$ and all $n \in \mathbb{N}$ and $\P \in \mathcal{P}$,  $||\overline{a}_n(T) - \overline{c}_n(\P_I)||_\infty < \overline{C}$ $\P - $a.s. and $(\mathbf{B}_n(\cdot))_{l_n} = 1$; v)$\hat{\overline{a}}_n = \overline{a}_n$, while $\hat{\overline{c}}_n = \overline{c}_n(\P_{In})$;
\end{ass}
Assumption \ref{ass6} formalizes the construction of $\theta_n$ that combines a fixed collection of conditions in Assumption \ref{ass4} with a growing discrete collection of moments of some known bounded functions of $T$. Boundedness of $\textbf{B}_n$ may likely be relaxed to a restriction on tail behavior, which we leave for future research.
\begin{theorem}\label{theor_main_cons_highdim}
 Let $R_n \equiv (\tilde{\gamma}_n + \overline{\gamma}_n \log l_n)/\sqrt{n}.$ Suppose that i) Assumption 6 holds for any $\P \in \mathcal{P}$, ii) $\delta^b_n = O_\P(R_n)$ uniformly in $\P \in \mathcal{P}$, iii) (GOT) is feasible for any $\P \in \mathcal{P}$ and $n \in \mathbb{N}$, and iv) $\{\beta(\theta_n)(\P)\}_{n\in\mathbb{N}}$ is a constant sequence for any $\P \in \mathcal{P}$. Suppose $z = 1$, while $\gamma_n$ and $\kappa_n \equiv \min_i (J_n)_i ,$, are such that $\sup_{\P \in \mathcal{P}} \P[\tilde{\gamma}_n < M \cup R_n > M ] = o(1) = \sup_{\P \in \mathcal{P}} \P[n^{-1/2r^*} \kappa_n < M]$ for any $M > 0$.  Then, $\Delta(\theta_n(\mathbb{P}), \gamma_n) = o_{\mathbb{P}}(1),$ and
    \begin{align*}
        O_{\mathbb{P}}\left(R_n \right)\leq \beta_{J_n}(\hat{\theta}; \gamma_n) - \beta(\theta_1) \leq O_{\mathbb{P}}\left(R_n\right) + o_{\mathbb{P}}(1) ~ \text{for any }\P \in \mathcal{P}.
    \end{align*}
    Moreover, for any $\varepsilon > 0$ and $t_n = o_{\mathbb{P}}\left(R_n\right)$ uniformly over $\mathbb{P} \in \mathcal{P}$, we have
    \begin{align*}
        \sup_{\mathbb{P} \in \tilde{\mathcal{P}}} \mathbb{P} \left[t_n \left(\beta_{J_n}(\hat{\theta}(\mathbb{P)}; \gamma_n) - \beta(\theta_1(\mathbb{P}))\right) \leq \varepsilon \right] = o(1).
    \end{align*}
\end{theorem}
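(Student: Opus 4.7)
The plan is to specialize the deterministic envelope in Theorem \ref{theor_general_highdim} to the random $\hat\theta_n$, to bound each of the four perturbation terms $\delta^a_n, \delta^b_n, \delta^c_n, \delta^p_n$ uniformly over $\P\in\mathcal{P}$, and to dispatch the compactification gap $\Delta(\theta_n;\gamma_n)$ via Lemma \ref{lemma_highdim}. The preliminary step is to verify Assumption 5 for the sequence $(X_n,\theta_n)$: its conditions (i)–(ii) follow from Assumption \ref{ass6} and hypothesis (iv); condition (iii) holds because only the discrete block of $(\overline a_n,\overline c_n)$ grows with $n$ (the kernel-based continuous part $\tilde\theta$ is fixed by Assumption \ref{ass6}.iii), so any $x_{n-1}\in X_{n-1}$ lifts to $x_n\in X_n$ by padding a zero into the added coordinate, which preserves both the $\mathcal{L}^2$ and $\ell^1$ norms and leaves $\mathcal{L}_\infty$ unchanged. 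Assumption 3 for $\theta_n$ (with smoothness index $r=r^*$ and a constant $C$ independent of $\P$ and $n$) then follows from Assumption \ref{ass4}.i since every kernel RKHS sits inside $W^{r^*,2}$ and the kernels are bounded on $\mathcal{T}$.

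Next, I would bound the four perturbation terms uniformly in $\P$. Assumption \ref{ass6}.v eliminates the discrete part of $\delta^a_n$; for its continuous part, a Hilbert-space Hoeffding / kernel-mean-embedding argument (the kernels are uniformly bounded on $\mathcal{T}$) delivers $\sup_{t\in\mathcal{T}}\|\hat{\tilde a}_n(t) - \tilde a_n(t)\| = O_\P(n^{-1/2})$ uniformly over $\P$, so $\delta^a_n = O_\P(\tilde\gamma_n/\sqrt n) = O_\P(R_n)$. The same reasoning handles the continuous part of $\delta^c_n$. For its discrete block we use $z=1$, whence $z^* = +\infty$: by the $\overline C$-boundedness of $\overline a_n(T) - \overline c_n(\P_I)$ in Assumption \ref{ass6}.iv, Hoeffding's inequality combined with a union bound over the $l_n$ coordinates yields $\|\hat{\overline c}_n - \overline c_n\|_\infty = O_\P(\sqrt{\log l_n/n})$ uniformly, which is majorized by $\log l_n/\sqrt n$; multiplied by $\overline\gamma_n$ this is $O_\P(R_n)$, so $\delta^c_n = O_\P(R_n)$. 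The rate $\delta^b_n = O_\P(R_n)$ is hypothesis (ii). Finally, Theorem \ref{theor_general_highdim} yields $\delta^p_n \leq \tilde\gamma_n\, mC\tilde C\,\kappa_n^{-r^*}$ once $\kappa_n\geq r^*$ (which holds eventually); from $R_n = O_\P(1)$ uniformly we read off $\tilde\gamma_n = O_\P(\sqrt n)$, and the hypothesis $\kappa_n/n^{1/(2r^*)}\to\infty$ in probability gives $\kappa_n^{-r^*} = o_\P(n^{-1/2})$, hence $\delta^p_n = o_\P(1)$.

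For the compactification gap, Lemma \ref{lemma_highdim}.iii delivers $\sup_n \Delta(\theta_n;\gamma) \downarrow 0$ as $\gamma\to\infty$ in product order. The hypothesis $\sup_\P\P[\tilde\gamma_n<M]\to 0$ forces $\tilde\gamma_n\to\infty$ with probability approaching one; since the only active constraint on $\overline\gamma_n$ is $R_n = O_\P(1)$, any admissible choice with $\overline\gamma_n\to\infty$ (for instance $\overline\gamma_n = \sqrt n/(2\log l_n)$) gives $\Delta(\theta_n(\P);\gamma_n) = o_\P(1)$ pointwise in $\P$. Inserting all bounds into Theorem \ref{theor_general_highdim} and invoking the constancy $\beta(\theta_0) = \beta(\theta_1)$ then produces the stated two-sided envelope. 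The uniform validity clause follows from the uniform lower bound $\beta_{J_n}(\hat\theta;\gamma_n) - \beta(\theta_1)\geq -\sum_{j=a,b,c}\delta^j_n = -O_\P(R_n)$, in which no $\Delta$ term appears; rescaling by the admissible $t_n$ yields $t_n(\beta_{J_n} - \beta(\theta_1))\geq -o_\P(1)$ uniformly in $\P$, from which $\sup_{\P\in\tilde{\mathcal P}}\P[t_n(\beta_{J_n}-\beta(\theta_1))\leq\varepsilon] = o(1)$ for every $\varepsilon>0$.

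I expect the main obstacle to be the uniform-in-$\P$ handling of the $\ell^\infty$ concentration of $\hat{\overline c}_n - \overline c_n$ as $l_n$ grows: one must ensure that the Hoeffding constants depend only on the fixed envelope $\overline C$ (never on $\P$) and reconcile the sharp rate $\sqrt{\log l_n/n}$ with the looser $\log l_n/\sqrt n$ built into $R_n$. A secondary subtlety is that vanishing of $\Delta(\theta_n;\gamma_n)$ requires both components of $\gamma_n$ to diverge in the product order, while the hypotheses force only $\tilde\gamma_n\to\infty$ directly; this is resolved by selecting $\overline\gamma_n$ to also diverge within the rate budget imposed by $R_n=O_\P(1)$.
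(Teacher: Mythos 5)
Your proposal is correct and follows essentially the same route as the paper's proof: the deterministic envelope of Theorem \ref{theor_general_highdim}, the Donsker/kernel concentration argument from Theorem \ref{theor_main_cons} for the fixed continuous blocks of $\delta^a_n,\delta^c_n$, a coordinatewise Hoeffding bound with a union bound over the $l_n$ discrete moments (exploiting $z=1$, $z^*=\infty$), and Lemma \ref{lemma_highdim} for the compactification gap. Your explicit verification of Assumption 5 via zero-padding and your remark that $\overline\gamma_n$ must also be sent to infinity for $\Delta(\theta_n;\gamma_n)\to 0$ are points the paper's terser proof leaves implicit, and they are handled correctly.
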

\begin{remark}\label{remark10}
    Theorem \ref{theor_main_cons_highdim} allows to augment the procedure in Sections \ref{subs_identifmarg}-\ref{subs_indeprestr} with a growing discrete collection of moments of identified bounded functions. For example, if $H_{m+1} = \{s_{j}\}^{(l_n + 1)}_{j=1}$ with $s_{l_n +1} = s^*$ being the atom from Assumption 1.ii, and $T = (T_1, T_2)$, where $T_1$ is observed, one may set $a(\cdot)(s_j) = \mathrm{B}_j(\cdot)$, and $c(s_j) = \E_\P[\mathrm{B}_j(\cdot)]$ for $j \in [l_n]$, where $\mathbf{B}_{n} = (\mathrm{B}_j)^{[l_n]}_{j=1}$ is a spline basis over the support of $T_1$.
\end{remark}
\begin{remark}
    Condition iv) in Theorem \ref{theor_main_cons_highdim} requires the added discrete moments to be \textit{redundant for identification}. In the example in Remark \ref{remark10}, this is enforced by including the moments of a characteristic $K(\cdot)$ over the support of $T_1$ in $\tilde{a},\tilde{c}$. Such moments may be relevant for the $\Delta(\cdot)$ term, which we leverage in Section 4. Condition iv) may be relaxed to $\beta(\theta_n) \downarrow \beta(\theta_1)$ at the cost of an additional vanishing term -- however, its rate is unclear. 
\end{remark}
% \subsection{Uniformly valid inference}
% To be added. 
\section{Characterization of $\Delta(\theta,\gamma)$}\label{section_bias}
Define $\mathbf{A}: \mathcal{L}^2(\nu) \to \mathcal{C}(\mathcal{T})$ to be the bounded linear operator mapping $x \to \langle a(\cdot), x \rangle$, and let $R(\mathbf{A})$ be its range. Under Assumption 1, we can then rewrite the problem $(P)$ as 
\begin{align*}
    \inf_{x \in X} ~ \mathbb{E}_\P [ (\mathbf{A}x)(T)], \quad \text{s.t.: } ~ (\mathbf{A}x)(T) \geq b(T), ~ \forall T \in \mathcal{\mathcal{T}}.
\end{align*}
In the proposition below, we treat $\gamma$, and $B_\gamma$ as in Section \ref{section_highdim}, and $\iota$ is the vector of ones.
\begin{prop}\label{theor_charact}
    Suppose that i) Assumption 1 holds; ii) $\mathcal{T}$ has positive Lebesgue measure; iii) for some $\mathbb{P} \in \mathcal{P}$, we have $c(s) = \mathbb{E}_{\mathbb{P}}[a(T)(s)]$ ~ $\nu -$a.e., and iv) $\mathbb{P} \ll \lambda$, with the Radon-Nikodym derivative $\frac{\partial \P}{\partial \lambda} \in \mathcal{L}^2(\lambda)$. We have
    \begin{align}\label{eq_solvp}
        \beta(\theta) = \inf_{f \in \mathcal{L}^2( \lambda)} ~ \mathbb{E}_\P[f(T)], \quad \text{s.t.: } f \in \text{Cl}\left(R(\mathbf{A}) \cap \{f \in \mathcal{L}^2(\lambda): ~f \geq b~ ~ \lambda-\text{a.e.}\}\right), 
    \end{align}
    where the closure is in $\mathcal{L}^2(\lambda)$. If the infimum is attained by some $f^* \in \mathcal{L}^2(\lambda)$ and $\eta(\gamma) \equiv \inf_{x \in B_\gamma} \operatorname{ess}\sup |\mathbf{A} x - f^*|$, then, for any $\gamma >0$, we have 
    \begin{align*}
        \Delta(\theta, \gamma + \iota \nu(s^*) \eta(\gamma)) \leq \eta(\gamma)(\lambda(\mathcal{T}) + \nu(s^*)).
    \end{align*}
\end{prop}
\begin{remark}
    Proposition \ref{theor_charact} suggests that while the solution to (P) may exist in the closure of the range of the operator $\mathbf{A}$, the inverse $\mathbf{A}^{-1}f^*$ may not exist. This occurs, because the range of a bounded linear operator mapping into an infinite-dimensional space is not necessarily closed. This problem is known as the ill-posed inverse problem in the context of nonparametric IV estimation (see \citet{horowitz2011applied}). 
\end{remark}

\subsection{Special case: Optimal Transport}
Proposition \ref{theor_charact} suggests that to control $\Delta(\theta, \gamma)$ in terms of $\gamma$, one needs to study how well the solution $f^*$ (if it exists) can be approximated by functions in the range of $\textbf{A}$. We now provide further evidence on that issue in the special case of optimal transport. 

\begin{ass}
    i) $T = (T_1,T_2) \in \mathbb{R}^{d}$, where $T_i \in \mathbb{R}^{d_i}$ for $i = 1,2$ are two random subvectors, marginally distributed according to $\mathbb{P}_i$ and supported on compact subsets $\mathcal{T}_i \subseteq H_{d_i}(0)$, with $\mathcal{T} = \mathcal{T}_1 \times \mathcal{T}_2$; ii) for some $L  > 0$, $b$ is $L-$Lipschitz over $\mathcal{T}$.
\end{ass}
Under Assumption 7, let $\mathbf{M}(\mathbb{P}_1, \mathbb{P}_2) \subseteq \mathcal{T}$ be the set of all probability distributions over $\mathcal{T}= \mathcal{T}_1 \times \mathcal{T}_2$ with marginals $\mathbb{P}_1, \mathbb{P}_2$. The classical optimal transport problem is
\begin{align*}
    \text{(OT)} ~ \max_{P \in \mathbf{M}(\mathbb{P}_1,\mathbb{P}_2)} \mathbb{E}_P[b(T)].
\end{align*}
For a Lipschitz-continuous cost function $b(\cdot)$, because the measures $\mathbb{P}_i$ are compactly supported under Assumption 7, a standard duality result asserts that the value of (OT) is equal to the minimum of the following dual Kantorovich problem
\begin{align*}
    \text{(DOT)} \min_{f_i \in C(\mathcal{T}_i), ~ i = 1,2} \mathbb{E}_{\mathbb{P}_1}[f_1(T_1)] + \mathbb{E}_{\mathbb{P}_2}[f_2(T_2)], \quad \\\text{s.t.: } ~ b(T_1, T_2) \leq f_1(T_1) + f_2(T_2), ~ \forall (T_1, T_2) \in \mathcal{T},
\end{align*}
where the minimum is \textit{attained} by some Lipschitz-continuous dual potentials $f^*_{i}$ (see Theorem 5.10 and p.68 in \citet{villani2008optimal} and Lemma E9 in \citet{ober2023estimating}). Smoothness of these potentials can then be used to characterize the error $\Delta(\cdot)$ using arguments similar to those in Proposition \ref{theor_charact}. 

\subsubsection*{Nesting (OT) in (GOT) under Assumption 7} We use the procedure discussed in Section \ref{subs_identifmarg}, augmenting it with a collection of discrete moment conditions, as described in Section \ref{section_highdim}. Write $s = (s_1, s_2) \in \mathbb{R}^d$, with $s_i \in \mathbb{R}^{d_i}$. Let $\mathbf{B}^{(k)}_i: H_{d_i}(0) \to \mathbb{R}^{k^{d_i}}$ be the basis of $k^{d_i}$ equally-spaced normalized linear splines over $H_{d_i}(0)$. Suppose $K_i: \mathbb{R}^{2d_i} \to \mathbb{R}$ is a fixed continuous characteristic kernel, and $a(T)(s) = K_i(s_i - 3i,T_j)$ for $s \in H_d(3i)$, $T \in \mathcal{T}$. Moreover, fix $H = \{s^*\} \cup \{h_j\}^{l}_{j = 1} \subseteq \mathbb{R}^d$, such that i) $|H| = l + 1$, and ii) $\{H, \{H_d(3i)\}_{i=1,2}\}$ are disjoint, and iii) $l = k_1^{d_1} + k_2^{d_2}$ for some $k_1, k_2\in \mathbb{N} \cup \{0\}$. Let $(a(T)(h_j))^l_{j =1} = (\mathbf{B}^{(k_i)}_i(T_i))_{i=1,2}$, while the point $s^*$ is as in Assumption 1. Finally, construct $S$ as in Assumption 2, and set $c(\cdot)$ to be the identified expectation of $a(T)(\cdot)$. Understanding that $k_i = k_{in}$ for $i = 1,2$, let $\theta^{(ot)}_n$ and $\mathbf{A}^{(ot)}_n$ be the resulting parameter and the linear operator.
\begin{lemma}\label{lemma_splines}
    Under Assumption 7, if $k_{in} > 1$ for $i=1,2, n \in \mathbb{N}$, then for any $n \in \mathbb{N},$ and any $\varepsilon > 0$, we have, for any $\{\tilde{\gamma}_n\}_{n\in \mathbb{N}}$,
    \begin{align*}
        \Delta(\theta^{(ot)}_n, \gamma_n) \leq 2L\sum^2_{i=1} \frac{\sqrt{d_i}}{k_{in} - 1},
    \end{align*} where $\overline{\gamma}_n   = (2||b||_\infty + L)(2+\sum^2_{i=1}k_{in}^{d_i})$, and $\ell^1-$norm of $\overline{x}$ is controlled.
\end{lemma}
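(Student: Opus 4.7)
The plan is to construct an explicit feasible $x \in B_{\gamma_n}$ for the compactified penalized program $\beta_\infty(\theta^{(ot)}_n;\gamma_n)$ whose objective only exceeds $\beta(\theta^{(ot)}_n)$ by the claimed error, and to verify the $\ell^1$-norm constraint. The key input is that the dual OT problem (DOT) admits $L$-Lipschitz optimal potentials $f_1^*, f_2^*$ (\citet{villani2008optimal}, Theorem 5.10), which can be approximated uniformly by piecewise-linear splines on equispaced knots. Because the atoms $h_j$ in the construction of $\theta^{(ot)}_n$ directly encode the basis $\mathbf{B}_i^{(k_{in})}$, any such spline is realized by a discrete $\overline{x}$; the continuous kernel parts of $X_n$ will play no role and I will simply set $\tilde x = 0$, rendering the $\tilde\gamma_n$ constraint vacuous.

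First, shift $(f_1^*, f_2^*)$ by opposite constants (using complementary slackness at a point of the optimal coupling) so that $\|f_i^*\|_\infty \leq 2\|b\|_\infty + L$ for $i = 1, 2$. A standard mesh argument for $L$-Lipschitz functions on $[-1,1]^{d_i}$ with $k_{in}$ equispaced knots per dimension shows that the interpolant $\hat f_i(t_i) = \alpha_i' \mathbf{B}_i^{(k_{in})}(t_i)$, with $\alpha_{i,j} = f_i^*(\text{knot}_j)$, satisfies $\epsilon_i \equiv \|\hat f_i - f_i^*\|_\infty \leq L\sqrt{d_i}/(k_{in}-1)$. Now construct $x$ by placing $\alpha_{i,j}$ at the corresponding atoms $h_j$ and setting $x(s^*) = \epsilon_1 + \epsilon_2$.

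Using $c(s^*) = a(T)(s^*) = 1$ from Assumption 1.ii and the identification $c(h_j) = \mathbb{E}_{\mathbb{P}_i}[\mathbf{B}_{i,j}^{(k_{in})}(T_i)]$, direct computation gives
\begin{align*}
\langle a(T), x \rangle &= x(s^*) + \hat f_1(T_1) + \hat f_2(T_2) \geq f_1^*(T_1) + f_2^*(T_2) \geq b(T), \\
\langle c, x \rangle &= x(s^*) + \mathbb{E}_{\mathbb{P}_1}[\hat f_1] + \mathbb{E}_{\mathbb{P}_2}[\hat f_2] \leq 2(\epsilon_1+\epsilon_2) + \mathbb{E}_{\mathbb{P}_1}[f_1^*] + \mathbb{E}_{\mathbb{P}_2}[f_2^*].
\end{align*}
The first line kills the penalty term, and by strong OT duality the last two expectations sum to $\beta(\theta^{(ot)}_n)$, so $\Delta(\theta^{(ot)}_n,\gamma_n) \leq 2(\epsilon_1 + \epsilon_2) \leq 2L\sum_i \sqrt{d_i}/(k_{in}-1)$. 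For the norm control, $\|\overline{x}\|_1 \leq (\epsilon_1+\epsilon_2) + \sum_i k_{in}^{d_i}\|f_i^*\|_\infty \leq (\epsilon_1 + \epsilon_2) + (2\|b\|_\infty + L)\sum_i k_{in}^{d_i}$, and the $(\epsilon_1 + \epsilon_2)$ term is absorbed into the $2(2\|b\|_\infty + L)$ slack built into $\overline{\gamma}_n$.

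The main obstacle is the normalization yielding the clean bound $\|f_i^*\|_\infty \leq 2\|b\|_\infty + L$: this requires choosing the additive constants via complementary slackness at a point of the optimal coupling, rather than the naive Lipschitz-plus-diameter estimate that would retain an extra $\sqrt{d_i}$ factor and break the stated form of $\overline{\gamma}_n$. Everything else (penalty evaluation, OT duality, mesh approximation, and $\ell^1$ arithmetic) is routine.
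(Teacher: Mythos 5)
Your construction is essentially the paper's proof: build a feasible point from spline interpolants of the Lipschitz dual potentials, lift by the interpolation error at the atom $s^*$ to kill the penalty, and invoke OT duality plus the fact that the characteristic-kernel moments make $\beta(\theta_n^{(ot)})$ equal the OT value. Two small holes remain relative to the paper's argument. First, the dual potentials live on $\mathcal{T}_i$, while the spline knots are equispaced over all of $H_{d_i}(0) \supseteq \mathcal{T}_i$; you need a McShane (Lipschitz) extension of $f_i^*$ to the full cube before you can set $\alpha_{i,j} = f_i^*(\mathrm{knot}_j)$, and this extension is also where the paper's sup-norm bound $2\|b\|_\infty + L$ comes from (your complementary-slackness normalization is a fine alternative for the bound on $\mathcal{T}_i$, but does not by itself give values at knots outside $\mathcal{T}_i$). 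Second, your claim that $\epsilon_1 + \epsilon_2$ is ``absorbed'' into the slack $2M$ in $\overline{\gamma}_n$ fails when $L\sqrt{d_i}/(k_{in}-1) > M$; the paper patches this by replacing the interpolation error with $\min\{2L\sqrt{d_i}/(k_{in}-1), M\}$ (using $\alpha = 0$ in the degenerate case), which you should do as well so that the lift at $s^*$ never exceeds $2M$.
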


Lemma \ref{lemma_splines} thus yields a uniform decay rate for $\Delta(\cdot)$ term in the special case of optimal transport, provided we consider asymptotics $k_{i} \to \infty$.The following result is a straightforward combination of Lemma \ref{lemma_splines} and the results in Section \ref{section_highdim}. We develop it for $d_1 = d_2$ and $k_{1n} = k_{2n} \equiv k_n$. We also suppose that $\ell^1$ norm of $\overline{x}$ is controlled, i.e. $z = 1$.
\begin{cor}\label{cor_1}
    Suppose that i) Assumption 7 holds for any $\P \in \tilde{\mathcal{P}} \subseteq \mathcal{P},$ with the uniform bound  $\sup_{\P \in \tilde{\mathcal{P}}} ||\max \{b(\P), L(\P)\}||_\infty < \infty$; ii) $d_1 = d_2$; iii) $\kappa_n /n^{1/2r^*} \to \infty$, where $r^* \in \mathbb{N}$ is a lower bound on the Sobolev smoothness order of the used kernels; iv) for some $\eta \in (0;1],$ $\overline{\gamma}_n \asymp \tilde{\gamma}_n \asymp k^{d/2}_n \asymp n^{\eta d/2 (d+2)}$. If $\hat{\theta}^{(ot)}_n$ is an empirical analogue of $\theta^{(ot)}_n$, then, for any $\epsilon, \varepsilon > 0,$ 
    \begin{align*}
        \sup_{\P \in \tilde{\mathcal{P}}} ~  &\P\big[n^{\frac{1}{2}(1-\frac{\eta d}{d+2}) - \epsilon} (\beta_{J_n}(\hat{\theta}^{(ot)}_n;\gamma_n) - \beta(\theta^{(ot)}_1(\P)) > \varepsilon\big] +  \\
         &\P\big[n^{\frac{\eta}{d+2} - \epsilon} (\beta_{J_n}(\hat{\theta}^{(ot)}_n; \gamma_n) - \beta(\theta^{(ot)}_1(\P)) < - \varepsilon\big] = o(1).
    \end{align*}
\end{cor}
\begin{remark}
     By varying $\eta \in (0,1]$ in Corollary \ref{cor_1}, one can regulate the asymmetry of the rates at which estimator approaches $\beta(\theta_1(\P))$ from two sides. Importantly, if $\beta(\theta_1(\P))$ estimates an upper bound on the parameter, at  $\eta  \approx 0$ one gets an approximately $\sqrt{n}-$valid estimator. Setting $\eta = 1$ yields a symmetric uniform rate of $1/(d+2)$ up to log terms. However, it is known that the minimax two-sided rate of $2/d$ is achievable in this setting, so in case two-sided convergence is prioritized, one should rely on \citet{manole2024sharp} instead. Note that while the LP arising in empirical OT has $O(n^2)$ variables and $O(n)$ restrictions, our problem, albeit semi-infinite, has at most $O(n^{1/2})$ variables. In our simulations, it is computed at the cost of around $50$ LPs with few constraints.
\end{remark}
\begin{remark}
    Lemma \ref{lemma_splines} and the results of \citet{manole2024sharp} may be taken to suggest that using the characterization in Proposition \ref{theor_charact} may lead to a faster two-sided rate than in Theorem \ref{theor_main_cons_highdim}, or to a characterization of $\Delta(\cdot)$ in a more general case. Unfortunately, this is infeasible at this stage for two reasons. Firstly, it is only in the special case of (multimarginal) optimal transport that one can directly characterize the set $\text{Cl}\left(R(\mathbf{A}) \cap \{f \in \mathcal{L}^2(\lambda): ~f \geq b~ ~ \lambda-\text{a.e.}\}\right)$, because one knows that the range of $\mathbf{A}$ is dense in $\{f(t_1,t_2) = g(t_1) + h(t_2): g \in C(\mathcal{T}_1), h \in C(\mathcal{T}_2) \}$. Such characterization becomes infeasible, once the model includes, for example, independence restrictions, as in Sections \ref{subs_indeprestr}-\ref{subs_cond_indep_restr}. Secondly, both previous methods (e.g. \citet{ober2023estimating}), and our characterization of $\Delta(\cdot)$ in Lemma \ref{lemma_splines} rely on the smoothness of the solutions $f^*$ to \eqref{eq_solvp}, which has not been established as a general property of such problems outside of classical OT. For instance, only recently \citet{gladkov2019multistochastic} showed that in a multistochastic OT over three scalar variables, cost $xyz$, and fixed $U[0,1]^2$ pairwise marginals, there exist Lipschitz $f^*$.
\end{remark}

\section{Simulation evidence}\label{section_simulation}
Consider a simple example, in which
$T = (T_1, T_2, T_3) \in \mathcal{T} = [0,1]^3$. Accordingly, $s = (s_1, s_2, s_3) \in \mathbb{R}^3$. Suppose that the marginal distributions $\P_{1,2}$ and $\P_{2,3}$ are identified.  Namely, suppose that $T_i \sim U[0,1]$ for $i = 1,2,3$, and $T_1 \indep T_2$, and $T_2 \indep T_3$. Consider
\begin{align}\label{simple_problem}
    \sup_{P \in \mathcal{P}} \int T_3 - T_1 \text{d} P(T), \text{s.t.: } P_{1,2}  = \P_{1,2}, P_{2,3} = \P_{2,3}.
\end{align}
It is clear that the value of \eqref{simple_problem} is identified and equal to $\E_{\P_{3}} T_3 - \E_{\P_1} T_1 = 0$. For simplicity, we generate one i.i.d. sample $\{T_i\}^n_{i = 1}$ per simulation. Accordingly, we define the oracle estimator to be $\hat{\beta}^{oracle}\equiv \frac{1}{n}\sum^{n}_{i = 1} T_{3i} - T_{1i}$.

\begin{figure}[h!]
    \centering
    \includegraphics[width=0.7\linewidth]{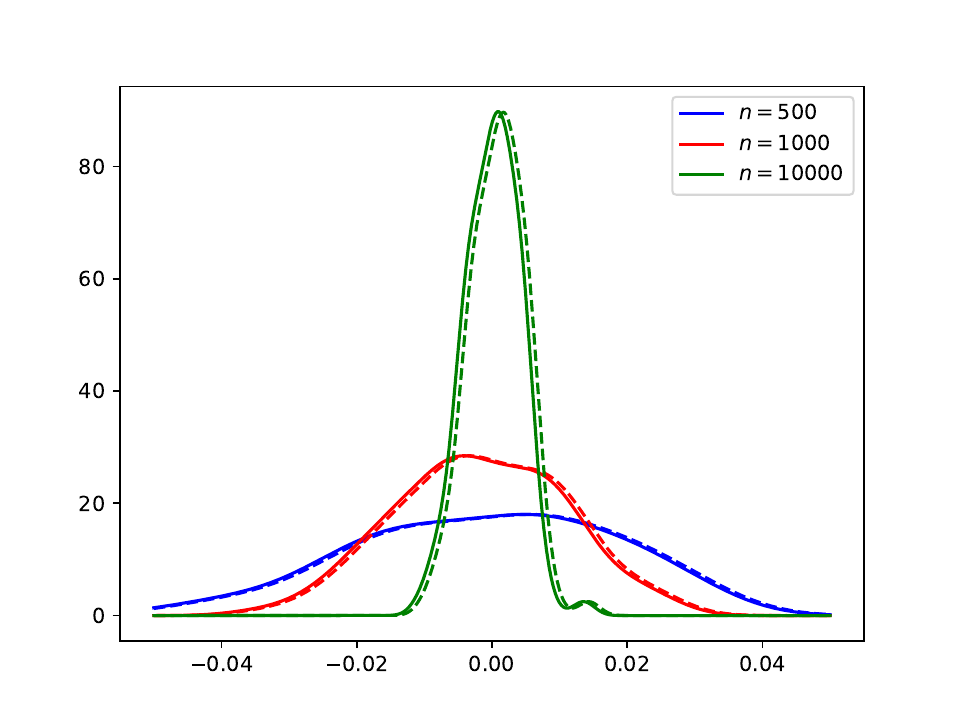}
    \caption{Example \eqref{simple_problem}, KDEs computed over a $100$ simulations for oracle (solid) and GOT (dashed), fixed $\gamma = 5$, and $J = 3$. GOT solved using cutting-plane method with tol = $10^{-6}$.}
    \label{fig:enter-label}
\end{figure}

To apply our estimation procedure, we set $b(T) = T_3 - T_1$, $a(T)(s) = \exp(s_1 T_1 + s_2T_2)$ for $s \in [-1,1]^3$, $a(T)(s) = \exp((s_2 - 3)T_2 + (s_3-3)T_2)$ for $s \in [-1,1]^3 + 3$. $c(s) = M_{1}(s_1,s_2) - $ the MGF for $s \in [-1,1]^3$, and $c(s) = M_2(s_1-3,s_2-3)$ for $s \in [-1,1]^3 + 3$. $\nu = \lambda$, and $s^* = (5,5)$. Denote $\psi_{ijk}(s) \equiv s^i_1s^j_2s_3^k$, and $H \equiv [-1,1]^3$. Performing the polynomial expansion of degree $J \in \mathbb{N}$ for each dimension of $T$, we obtain the problem 
\begin{align}\label{pop_prob}
    \beta_J(\theta; \gamma) \equiv 
    \min_{(x, y, x^*) \in B_\gamma, a \geq 0} \Big \{~ a + x^* +  \\\notag\sum_{(i,j,k) \in \overline{0,J}^3} x_{ijk} \int_{H} M_1(s_1,s_2) \psi_{ijk}(s)  \text{ds} + y_{ijk} \int_{H} M_2(s_2,s_3) \psi_{ijk}(s)  \text{ds} \Big \}, \\\notag
    \text{s.t. for }T \in [0,1]^3, ~ a + x^* \geq T_3 - T_2 -\\\notag
    \sum_{(i,j,k) \in \overline{0,J}^3} \left( x_{ijk} \int_{H} \exp(T_{1,2}'s_{1,2}) \psi_{ijk}(s)  \text{ds} + y_{ijk} \int_{H} \exp(T_{2,3}'s_{2,3}) \psi_{ijk}(s)  \text{ds} \right),
\end{align}
where $B_\gamma$ is obtained by restricting $||(x, y, x^*)||_\infty < \gamma$ for simplicity\footnote{The resulting space is still a convex bounded subset of $X$, and so all results in Section 3 apply.}. The sample analogue of the problem \eqref{pop_prob}, with the value of $\beta_J(\hat{\theta}; \gamma)$, is then obtained by substituting the true MGF. $M_1(s_1,s_2)$ with its plug-in estimator, $\hat{M}_{1}(s_1,s_2) \equiv n^{-1}\sum^n_{i = 1} \exp(T_{1i}s_1 + T_{2i} s_2)$ for all $s_{1}, s_2 \in [-1,1]^3$, and similarly for $M_2(\cdot)$. 

Using the exponential kernel allows to avoid numerical integration. We have 
\begin{align*}
    \int_{[-1,1]^3} \exp(T_{1,2}'s_{1,2}) \psi_{ijk}(s)\text{d} s = \frac{2\mathds{1}\{k \mod 2 = 0\}}{k+1} I_i(T_1) I_{j}(T_2), 
\end{align*}
where $I_j(t) \equiv \int^1_{-1} \exp(ts)s^{j}\text{d}s$. For $j \in \mathbb{N}$ and $t \ne 0$, one obtains 
\begin{align*}
    I_{j}(t) = \frac{1}{t} \left(e^t + (-1)^{\mathds{1}\{j \mod 2 = 0\}}e^{-t} \right) - \frac{j}{t}I_{j-1}(t),
\end{align*}
with $I_0 (t) = \frac{1}{t}(e^t - e^{-t})$. For $t = 0$, $I_j(0) = \frac{2}{j+1} \mathds{1} \{ j \mod 2 = 0\}$ for all $j \in \mathbb{N} \cup \{0\}$. 

%\newpage % Includes a new page
\selectlanguage{english}

% Changes page numbering to roman page numbers
%\bibliography{literature}
\nocite{*}

\addcontentsline{toc}{section}{6~~~~References.}

\bibliography{literature.bib} % Add the filename of your bibliography

\begin{thebibliography}{78}
\newcommand{\enquote}[1]{``#1''}
\expandafter\ifx\csname natexlab\endcsname\relax\def\natexlab#1{#1}\fi

\bibitem[\protect\citeauthoryear{Aizer, Early, Eli, Imbens, Lee, Lleras-Muney, and Strand}{Aizer et~al.}{2024}]{aizer2024lifetime}
\textsc{Aizer, A., N.~Early, S.~Eli, G.~Imbens, K.~Lee, A.~Lleras-Muney, and A.~Strand} (2024): \enquote{The Lifetime Impacts of the New Deal's Youth Employment Program,} \emph{The Quarterly Journal of Economics}, 139, 2579--2635.

\bibitem[\protect\citeauthoryear{Aliprantis and Border}{Aliprantis and Border}{2007}]{aliprantis2007infinite}
\textsc{Aliprantis, C. and K.~Border} (2007): \emph{Infinite Dimensional Analysis: A Hitchhiker's Guide}, Springer.

\bibitem[\protect\citeauthoryear{Andrews and Shi}{Andrews and Shi}{2013}]{andrews2013inference}
\textsc{Andrews, D.~W. and X.~Shi} (2013): \enquote{Inference based on conditional moment inequalities,} \emph{Econometrica}, 81, 609--666.

\bibitem[\protect\citeauthoryear{Andrews and Shi}{Andrews and Shi}{2017}]{ANDREWS2017275}
---\hspace{-.1pt}---\hspace{-.1pt}--- (2017): \enquote{Inference based on many conditional moment inequalities,} \emph{Journal of Econometrics}, 196, 275--287.

\bibitem[\protect\citeauthoryear{Andrews, Roth, and Pakes}{Andrews et~al.}{2023}]{andrews2023}
\textsc{Andrews, I., J.~Roth, and A.~Pakes} (2023): \enquote{{Inference for Linear Conditional Moment Inequalities},} \emph{The Review of Economic Studies}, 90, 2763--2791.

\bibitem[\protect\citeauthoryear{Aradillas-López, Gandhi, and Quint}{Aradillas-López et~al.}{2013}]{auctions}
\textsc{Aradillas-López, A., A.~Gandhi, and D.~Quint} (2013): \enquote{Identification and Inference in Ascending Auctions With Correlated Private Values,} \emph{Econometrica}, 81, 489--534.

\bibitem[\protect\citeauthoryear{Armstrong}{Armstrong}{2014}]{armstrong2014weighted}
\textsc{Armstrong, T.~B.} (2014): \enquote{Weighted KS statistics for inference on conditional moment inequalities,} \emph{Journal of Econometrics}, 181, 92--116.

\bibitem[\protect\citeauthoryear{Armstrong and Chan}{Armstrong and Chan}{2016}]{armstrong2016multiscale}
\textsc{Armstrong, T.~B. and H.~P. Chan} (2016): \enquote{Multiscale adaptive inference on conditional moment inequalities,} \emph{Journal of Econometrics}, 194, 24--43.

\bibitem[\protect\citeauthoryear{Athey, Chetty, and Imbens}{Athey et~al.}{2020}]{athey2020combining}
\textsc{Athey, S., R.~Chetty, and G.~Imbens} (2020): \enquote{Combining experimental and observational data to estimate treatment effects on long term outcomes,} \emph{arXiv preprint arXiv:2006.09676}.

\bibitem[\protect\citeauthoryear{Balke and Pearl}{Balke and Pearl}{1994}]{balke1994counterfactual}
\textsc{Balke, A. and J.~Pearl} (1994): \enquote{Counterfactual probabilities: Computational methods, bounds and applications,} in \emph{Uncertainty in artificial intelligence}, Elsevier, 46--54.

\bibitem[\protect\citeauthoryear{Balke and Pearl}{Balke and Pearl}{1997}]{balke1997bounds}
---\hspace{-.1pt}---\hspace{-.1pt}--- (1997): \enquote{Bounds on treatment effects from studies with imperfect compliance,} \emph{Journal of the American statistical Association}, 92, 1171--1176.

\bibitem[\protect\citeauthoryear{Beresteanu, Molchanov, and Molinari}{Beresteanu et~al.}{2011}]{beresteanu2011sharp}
\textsc{Beresteanu, A., I.~Molchanov, and F.~Molinari} (2011): \enquote{Sharp identification regions in models with convex moment predictions,} \emph{Econometrica}, 79, 1785--1821.

\bibitem[\protect\citeauthoryear{Beresteanu and Molinari}{Beresteanu and Molinari}{2008}]{beresteanu2008asymptotic}
\textsc{Beresteanu, A. and F.~Molinari} (2008): \enquote{Asymptotic properties for a class of partially identified models,} \emph{Econometrica}, 76, 763--814.

\bibitem[\protect\citeauthoryear{Bhattacharya}{Bhattacharya}{2009}]{bhattacharya2009inferring}
\textsc{Bhattacharya, D.} (2009): \enquote{Inferring optimal peer assignment from experimental data,} \emph{Journal of the American Statistical Association}, 104, 486--500.

\bibitem[\protect\citeauthoryear{Carlier, Chernozhukov, and Galichon}{Carlier et~al.}{2016}]{carlier2016vector}
\textsc{Carlier, G., V.~Chernozhukov, and A.~Galichon} (2016): \enquote{{Vector quantile regression: An optimal transport approach},} \emph{The Annals of Statistics}, 44, 1165 -- 1192.

\bibitem[\protect\citeauthoryear{Chernozhukov, Chetverikov, and Kato}{Chernozhukov et~al.}{2019}]{chernozhukov2019inference}
\textsc{Chernozhukov, V., D.~Chetverikov, and K.~Kato} (2019): \enquote{Inference on causal and structural parameters using many moment inequalities,} \emph{The Review of Economic Studies}, 86, 1867--1900.

\bibitem[\protect\citeauthoryear{Chernozhukov, Hong, and Tamer}{Chernozhukov et~al.}{2007}]{CHT}
\textsc{Chernozhukov, V., H.~Hong, and E.~Tamer} (2007): \enquote{Estimation and Confidence Regions for Parameter Sets in Econometric Models,} \emph{Econometrica}, 75, 1243--1284.

\bibitem[\protect\citeauthoryear{Chernozhukov, Lee, and Rosen}{Chernozhukov et~al.}{2013}]{CLR}
\textsc{Chernozhukov, V., S.~Lee, and A.~M. Rosen} (2013): \enquote{Intersection Bounds: Estimation and Inference,} \emph{Econometrica}, 81, 667--737.

\bibitem[\protect\citeauthoryear{Chernozhukov, Newey, and Santos}{Chernozhukov et~al.}{2023}]{chernozhukovneweysantos2023}
\textsc{Chernozhukov, V., W.~K. Newey, and A.~Santos} (2023): \enquote{Constrained Conditional Moment Restriction Models,} \emph{Econometrica}, 91, 709--736.

\bibitem[\protect\citeauthoryear{Chiappori, McCann, and Nesheim}{Chiappori et~al.}{2010}]{chiapporimccannnesheim}
\textsc{Chiappori, P.-A., R.~J. McCann, and L.~P. Nesheim} (2010): \enquote{Hedonic Price Equilibria, Stable Matching, and Optimal Transport: Equivalence, Topology, and Uniqueness,} \emph{Economic Theory}, 42, 317--354.

\bibitem[\protect\citeauthoryear{Chiappori, Salani{\'e}, and Weiss}{Chiappori et~al.}{2017}]{chiappori2017partner}
\textsc{Chiappori, P.-A., B.~Salani{\'e}, and Y.~Weiss} (2017): \enquote{Partner choice, investment in children, and the marital college premium,} \emph{American Economic Review}, 107, 2109--2167.

\bibitem[\protect\citeauthoryear{Cho and Russell}{Cho and Russell}{2023}]{chorussel2023}
\textsc{Cho, J. and T.~M. Russell} (2023): \enquote{Simple Inference on Functionals of Set-Identified Parameters Defined by Linear Moments,} \emph{Journal of Business \& Economic Statistics}, 0, 1--16.

\bibitem[\protect\citeauthoryear{Cobzas}{Cobzas}{2017}]{cobzas2017lipschitz}
\textsc{Cobzas, S.} (2017): \enquote{Lipschitz properties of convex mappings,} \emph{Advances in Operator Theory}, 2, 21--49.

\bibitem[\protect\citeauthoryear{Conway}{Conway}{1994}]{conway1994course}
\textsc{Conway, J.~B.} (1994): \emph{A course in functional analysis}, vol.~96, Springer Science \& Business Media.

\bibitem[\protect\citeauthoryear{Cucker and Zhou}{Cucker and Zhou}{2007}]{cucker2007learning}
\textsc{Cucker, F. and D.~X. Zhou} (2007): \emph{Learning theory: an approximation theory viewpoint}, vol.~24, Cambridge University Press.

\bibitem[\protect\citeauthoryear{De~Paula, Richards-Shubik, and Tamer}{De~Paula et~al.}{2018}]{de2018identifying}
\textsc{De~Paula, {\'A}., S.~Richards-Shubik, and E.~Tamer} (2018): \enquote{Identifying preferences in networks with bounded degree,} \emph{Econometrica}, 86, 263--288.

\bibitem[\protect\citeauthoryear{Ditzian and Totik}{Ditzian and Totik}{2012}]{ditzian2012moduli}
\textsc{Ditzian, Z. and V.~Totik} (2012): \emph{Moduli of smoothness}, vol.~9, Springer Science \& Business Media.

\bibitem[\protect\citeauthoryear{Ekeland and Temam}{Ekeland and Temam}{1999}]{ekeland1999convex}
\textsc{Ekeland, I. and R.~Temam} (1999): \emph{Convex analysis and variational problems}, SIAM.

\bibitem[\protect\citeauthoryear{Gafarov}{Gafarov}{2024}]{gafarov2024simple}
\textsc{Gafarov, B.} (2024): \enquote{Simple subvector inference on sharp identified set in affine models,} \emph{arXiv e-prints}, arXiv--1904, conditionally Accepted at Journal of Econometrics, 2024.

\bibitem[\protect\citeauthoryear{Galichon}{Galichon}{2016}]{galichon2016optimal}
\textsc{Galichon, A.} (2016): \emph{Optimal transport methods in economics}, Princeton University Press.

\bibitem[\protect\citeauthoryear{Galichon and Henry}{Galichon and Henry}{2011}]{galichon2011}
\textsc{Galichon, A. and M.~Henry} (2011): \enquote{Set Identification in Models with Multiple Equilibria,} \emph{The Review of Economic Studies}, 78, 1264--1298.

\bibitem[\protect\citeauthoryear{Gladkov, Kolesnikov, and Zimin}{Gladkov et~al.}{2019}]{gladkov2019multistochastic}
\textsc{Gladkov, N.~A., A.~V. Kolesnikov, and A.~P. Zimin} (2019): \enquote{On multistochastic Monge--Kantorovich problem, bitwise operations, and fractals,} \emph{Calculus of Variations and Partial Differential Equations}, 58, 173.

\bibitem[\protect\citeauthoryear{Gladkov, Kolesnikov, and Zimin}{Gladkov et~al.}{2022}]{gladkov2022multistochastic}
---\hspace{-.1pt}---\hspace{-.1pt}--- (2022): \enquote{The multistochastic Monge--Kantorovich problem,} \emph{Journal of Mathematical Analysis and Applications}, 506, 125666.

\bibitem[\protect\citeauthoryear{Gu, Russell, and Stringham}{Gu et~al.}{2025}]{gurussell}
\textsc{Gu, J., T.~M. Russell, and T.~Stringham} (2025): \enquote{Counterfactual Identification and Latent Space Enumeration in Discrete Outcome Models,} \emph{The Review of Economic Studies}, rdaf058.

\bibitem[\protect\citeauthoryear{Gunsilius}{Gunsilius}{2025}]{gunsilius2025primer}
\textsc{Gunsilius, F.~F.} (2025): \enquote{A primer on optimal transport for causal inference with observational data,} \emph{arXiv preprint arXiv:2503.07811}.

\bibitem[\protect\citeauthoryear{Harbrecht, Multerer, Schenk, and Schwab}{Harbrecht et~al.}{2024{\natexlab{a}}}]{harbrecht2024multiresolution}
\textsc{Harbrecht, H., M.~Multerer, O.~Schenk, and C.~Schwab} (2024{\natexlab{a}}): \enquote{Multiresolution kernel matrix algebra,} \emph{Numerische Mathematik}, 156, 1085--1114.

\bibitem[\protect\citeauthoryear{Harbrecht, Multerer, Schenk, and Schwab}{Harbrecht et~al.}{2024{\natexlab{b}}}]{multires_matrix}
---\hspace{-.1pt}---\hspace{-.1pt}--- (2024{\natexlab{b}}): \enquote{Multiresolution kernel matrix algebra,} \emph{Numerische Mathematik}, 156, 1085--1114.

\bibitem[\protect\citeauthoryear{Hettich and Kortanek}{Hettich and Kortanek}{1993}]{hettich1993semi}
\textsc{Hettich, R. and K.~O. Kortanek} (1993): \enquote{Semi-infinite programming: theory, methods, and applications,} \emph{SIAM review}, 35, 380--429.

\bibitem[\protect\citeauthoryear{Holland}{Holland}{1988}]{holland1988causal}
\textsc{Holland, P.~W.} (1988): \enquote{Causal inference, path analysis and recursive structural equations models,} \emph{ETS Research Report Series}, 1988, i--50.

\bibitem[\protect\citeauthoryear{Honor{\'e} and Lleras-Muney}{Honor{\'e} and Lleras-Muney}{2006}]{honoreadriana2006}
\textsc{Honor{\'e}, B.~E. and A.~Lleras-Muney} (2006): \enquote{Bounds in competing risks models and the war on cancer,} \emph{Econometrica}, 74, 1675--1698.

\bibitem[\protect\citeauthoryear{Honor{\'e} and Tamer}{Honor{\'e} and Tamer}{2006}]{honoretamer2006}
\textsc{Honor{\'e}, B.~E. and E.~Tamer} (2006): \enquote{Bounds on parameters in panel dynamic discrete choice models,} \emph{Econometrica}, 74, 611--629.

\bibitem[\protect\citeauthoryear{Horowitz}{Horowitz}{2011}]{horowitz2011applied}
\textsc{Horowitz, J.~L.} (2011): \enquote{Applied nonparametric instrumental variables estimation,} \emph{Econometrica}, 79, 347--394.

\bibitem[\protect\citeauthoryear{Hsiao and Wendland}{Hsiao and Wendland}{2008}]{hsiao2008boundary}
\textsc{Hsiao, G.~C. and W.~L. Wendland} (2008): \emph{Boundary integral equations}, Springer.

\bibitem[\protect\citeauthoryear{Hundrieser, Klatt, Munk, and Staudt}{Hundrieser et~al.}{2024}]{hundrieser2024unifying}
\textsc{Hundrieser, S., M.~Klatt, A.~Munk, and T.~Staudt} (2024): \enquote{A unifying approach to distributional limits for empirical optimal transport,} \emph{Bernoulli}, 30, 2846--2877.

\bibitem[\protect\citeauthoryear{Hyt{\"o}nen, Van~Neerven, Veraar, and Weis}{Hyt{\"o}nen et~al.}{2016}]{hytonen2016analysis}
\textsc{Hyt{\"o}nen, T., J.~Van~Neerven, M.~Veraar, and L.~Weis} (2016): \emph{Analysis in Banach spaces}, vol.~12, Springer.

\bibitem[\protect\citeauthoryear{Kallenberg}{Kallenberg}{2002}]{kallenberg2002foundations}
\textsc{Kallenberg, O.} (2002): \emph{Foundations of Modern Probability}, Probability and Its Applications, New York, NY: Springer, 2nd ed.

\bibitem[\protect\citeauthoryear{Kolomoitsev and Tikhonov}{Kolomoitsev and Tikhonov}{2020}]{KOLOMOITSEV2020105423}
\textsc{Kolomoitsev, Y. and S.~Tikhonov} (2020): \enquote{Properties of moduli of smoothness in Lp(Rd),} \emph{Journal of Approximation Theory}, 257, 105423.

\bibitem[\protect\citeauthoryear{Laff{\'e}rs}{Laff{\'e}rs}{2019}]{laffers2019}
\textsc{Laff{\'e}rs, L.} (2019): \enquote{Bounding average treatment effects using linear programming,} \emph{Empirical economics}, 57, 727--767.

\bibitem[\protect\citeauthoryear{Lions and Magenes}{Lions and Magenes}{2012}]{lions2012non}
\textsc{Lions, J.~L. and E.~Magenes} (2012): \emph{Non-homogeneous boundary value problems and applications: Vol. 1}, vol. 181, Springer Science \& Business Media.

\bibitem[\protect\citeauthoryear{Lofstrom and Bergh}{Lofstrom and Bergh}{1976}]{lofstrom1976interpolation}
\textsc{Lofstrom, J. B.-J. and J.~Bergh} (1976): \enquote{Interpolation spaces,} \emph{SpringereVerlag, Newe}.

\bibitem[\protect\citeauthoryear{Luo, Ponomarev, and Wang}{Luo et~al.}{2024}]{luo2024selecting}
\textsc{Luo, Y., K.~Ponomarev, and H.~Wang} (2024): \enquote{Selecting Inequalities for Sharp Identification in Models with Set-Valued Predictions,} Tech. rep., Working Paper, University of Chicago.

\bibitem[\protect\citeauthoryear{Manole and Niles-Weed}{Manole and Niles-Weed}{2024}]{manole2024sharp}
\textsc{Manole, T. and J.~Niles-Weed} (2024): \enquote{Sharp convergence rates for empirical optimal transport with smooth costs,} \emph{The Annals of Applied Probability}, 34, 1108--1135.

\bibitem[\protect\citeauthoryear{Manski and Pepper}{Manski and Pepper}{2000}]{MP2000}
\textsc{Manski, C.~F. and J.~V. Pepper} (2000): \enquote{Monotone Instrumental Variables: With an Application to the Returns to Schooling,} \emph{Econometrica}, 68, 997--1010.

\bibitem[\protect\citeauthoryear{Manski and Pepper}{Manski and Pepper}{2009}]{MP2009}
---\hspace{-.1pt}---\hspace{-.1pt}--- (2009): \enquote{{More on monotone instrumental variables},} \emph{The Econometrics Journal}, 12, S200--S216.

\bibitem[\protect\citeauthoryear{Miyauchi}{Miyauchi}{2016}]{miyauchi2016structural}
\textsc{Miyauchi, Y.} (2016): \enquote{Structural estimation of pairwise stable networks with nonnegative externality,} \emph{Journal of econometrics}, 195, 224--235.

\bibitem[\protect\citeauthoryear{Mogstad, Santos, and Torgovitsky}{Mogstad et~al.}{2018}]{santos}
\textsc{Mogstad, M., A.~Santos, and A.~Torgovitsky} (2018): \enquote{Using instrumental variables for inference about policy relevant treatment parameters,} \emph{Econometrica}, 86, 1589--1619.

\bibitem[\protect\citeauthoryear{Nguyen and Khanh}{Nguyen and Khanh}{2021}]{nguyen2021lipschitz}
\textsc{Nguyen, B.~T. and P.~D. Khanh} (2021): \enquote{Lipschitz continuity of convex functions,} \emph{Applied Mathematics \& Optimization}, 84, 1623--1640.

\bibitem[\protect\citeauthoryear{Nickl and P{\"o}tscher}{Nickl and P{\"o}tscher}{2007}]{nickl2007bracketing}
\textsc{Nickl, R. and B.~M. P{\"o}tscher} (2007): \enquote{Bracketing metric entropy rates and empirical central limit theorems for function classes of Besov-and Sobolev-type,} \emph{Journal of Theoretical Probability}, 20, 177--199.

\bibitem[\protect\citeauthoryear{Ober-Reynolds}{Ober-Reynolds}{2023}]{ober2023estimating}
\textsc{Ober-Reynolds, D.} (2023): \enquote{Estimating functionals of the joint distribution of potential outcomes with optimal transport,} \emph{arXiv preprint arXiv:2311.09435}.

\bibitem[\protect\citeauthoryear{Obradovi{\'c}}{Obradovi{\'c}}{2024}]{obradovic2024identification}
\textsc{Obradovi{\'c}, F.} (2024): \enquote{Identification of Long-Term Treatment Effects via Temporal Links, Observational, and Experimental Data,} \emph{arXiv preprint arXiv:2411.04380}.

\bibitem[\protect\citeauthoryear{Reemtsen and G{\"o}rner}{Reemtsen and G{\"o}rner}{1998}]{reemtsen1998numerical}
\textsc{Reemtsen, R. and S.~G{\"o}rner} (1998): \enquote{Numerical methods for semi-infinite programming: a survey,} in \emph{Semi-infinite programming}, Springer, 195--275.

\bibitem[\protect\citeauthoryear{Rockafellar}{Rockafellar}{1974}]{rockafellar1974conjugate}
\textsc{Rockafellar, R.~T.} (1974): \emph{Conjugate duality and optimization}, SIAM.

\bibitem[\protect\citeauthoryear{Rudin}{Rudin}{1987}]{rudin1987real}
\textsc{Rudin, W.} (1987): \emph{Real and Complex Analysis}, New York: McGraw-Hill, 3rd ed.

\bibitem[\protect\citeauthoryear{Sadhu, Goldfeld, and Kato}{Sadhu et~al.}{2024}]{sadhu2024stability}
\textsc{Sadhu, R., Z.~Goldfeld, and K.~Kato} (2024): \enquote{Stability and statistical inference for semidiscrete optimal transport maps,} \emph{The Annals of Applied Probability}, 34, 5694--5736.

\bibitem[\protect\citeauthoryear{Semenova}{Semenova}{2023}]{semenova2023adaptive}
\textsc{Semenova, V.} (2023): \enquote{Adaptive Estimation of Intersection Bounds: a Classification Approach,} .

\bibitem[\protect\citeauthoryear{Shapiro}{Shapiro}{1998}]{shapiro}
\textsc{Shapiro, A.} (1998): \enquote{First and second order optimality conditions and perturbation analysis of semi-infinite programming problems,} in \emph{Semi-infinite programming}, Springer, 103--133.

\bibitem[\protect\citeauthoryear{Shapiro}{Shapiro}{2001}]{shapiro2001duality}
---\hspace{-.1pt}---\hspace{-.1pt}--- (2001): \enquote{On duality theory of conic linear problems,} \emph{Nonconvex Optimization and its Applications}, 57, 135--155.

\bibitem[\protect\citeauthoryear{Sriperumbudur, Gretton, Fukumizu, Sch{\"o}lkopf, and Lanckriet}{Sriperumbudur et~al.}{2010}]{sriperumbudur2010hilbert}
\textsc{Sriperumbudur, B.~K., A.~Gretton, K.~Fukumizu, B.~Sch{\"o}lkopf, and G.~R. Lanckriet} (2010): \enquote{Hilbert space embeddings and metrics on probability measures,} \emph{The Journal of Machine Learning Research}, 11, 1517--1561.

\bibitem[\protect\citeauthoryear{Steinwart and Ziegel}{Steinwart and Ziegel}{2021}]{steinwart2021strictly}
\textsc{Steinwart, I. and J.~F. Ziegel} (2021): \enquote{Strictly proper kernel scores and characteristic kernels on compact spaces,} \emph{Applied and Computational Harmonic Analysis}, 51, 510--542.

\bibitem[\protect\citeauthoryear{Tamer}{Tamer}{2003}]{tamer2003incomplete}
\textsc{Tamer, E.} (2003): \enquote{Incomplete simultaneous discrete response model with multiple equilibria,} \emph{The Review of Economic Studies}, 70, 147--165.

\bibitem[\protect\citeauthoryear{Tibshirani}{Tibshirani}{1996}]{tibshirani}
\textsc{Tibshirani, R.} (1996): \enquote{Regression Shrinkage and Selection via the Lasso,} \emph{Journal of the Royal Statistical Society. Series B (Methodological)}, 58, 267--288.

\bibitem[\protect\citeauthoryear{Torgovitsky}{Torgovitsky}{2019}]{torgovitsky2019partial}
\textsc{Torgovitsky, A.} (2019): \enquote{Partial identification by extending subdistributions,} \emph{Quantitative Economics}, 10, 105--144.

\bibitem[\protect\citeauthoryear{Torous, Gunsilius, and Rigollet}{Torous et~al.}{2024}]{torous2024optimal}
\textsc{Torous, W., F.~Gunsilius, and P.~Rigollet} (2024): \enquote{An optimal transport approach to estimating causal effects via nonlinear difference-in-differences,} \emph{Journal of Causal Inference}, 12, 20230004.

\bibitem[\protect\citeauthoryear{Van Der~Vaart and Wellner}{Van Der~Vaart and Wellner}{1996}]{van1996weak}
\textsc{Van Der~Vaart, A.~W. and J.~A. Wellner} (1996): \enquote{Weak convergence,} in \emph{Weak convergence and empirical processes: with applications to statistics}, Springer, 16--28.

\bibitem[\protect\citeauthoryear{Villani et~al.}{Villani et~al.}{2008}]{villani2008optimal}
\textsc{Villani, C. et~al.} (2008): \emph{Optimal transport: old and new}, vol. 338, Springer.

\bibitem[\protect\citeauthoryear{Voronin}{Voronin}{2025}]{voronin2025linear}
\textsc{Voronin, A.} (2025): \enquote{Linear programming approach to partially identified econometric models,} \emph{arXiv preprint arXiv:2503.14940}.

\bibitem[\protect\citeauthoryear{Wendland}{Wendland}{2004}]{Wendland_2004}
\textsc{Wendland, H.} (2004): \emph{Scattered Data Approximation}, Cambridge Monographs on Applied and Computational Mathematics, Cambridge University Press.

\bibitem[\protect\citeauthoryear{Yuan and Barwick}{Yuan and Barwick}{2024}]{yuan2024network}
\textsc{Yuan, Z. and P.~J. Barwick} (2024): \enquote{Network Competition in the Airline Industry: An Empirical Framework,} Tech. rep., National Bureau of Economic Research.

\end{thebibliography}
\bibliographystyle{ecta} % Defines your bibliography style

% For citing, please see this sheet: http://merkel.texture.rocks/Latex/natbib.php
% To be added. 

%%%%%%%%%%%%%%%%%%%%%%%%%%%%%%%%%%%%%%%%%%%%%%
%% Example with single Appendix:            %%
%%%%%%%%%%%%%%%%%%%%%%%%%%%%%%%%%%%%%%%%%%%%%%

\newpage
\appendix
\addcontentsline{toc}{section}{Online Appendices} % Add the appendix text to the document TOC
 % Start the appendix part
\part{Online Appendices}
\parttoc
\section{Identification}
\subsection{Proof of Lemma 1}\label{appB1}
    \begin{proof}
        Necessity is obvious. For sufficiency, recall that the elements of the RKHS of a continuous kernel $K: H^2 \to \mathbb{R}$ are continuous functions over $H^2$ (see \citet{Wendland_2004}). A simple continuity argument establishes that for any $f \in C(\text{Int}(H))$, $f = 0 ~ \lambda-$a.e. is equivalent to $f(s) = 0, ~ s \in H$. Setting $f(s) = \E_{\P_1}[K(s,U)] - \E_{\P_2}[K(s,U)]$ and applying Definition 1 yields the result. 
    \end{proof}
\subsection{Proof of Lemma 2}\label{proof_lemma2}
\begin{proof}
Necessity follows, because if $P = P_{I_1} \times P_{I_2}$, we have
\begin{align*}
    \int K(s, (U_{I_1},U_{I_2})) \text{d}P(U_{I_1}, U_{I_2}) & = \int \int K(s, (U_{I_1}, U_{I_2}))\text{d}P_{I_2}(U_{I_2}) \text{d}P_{I_1}(U_{I_1})\\ &= \E_P\left[\int K(s, (U_{I_1}, U_{I_2}))\text{d}P_{I_2}(U_{I_2})\right].
\end{align*} For sufficiency, recall the definition of the map $\Phi(P) =E_P[K(s,U)]$. For any $P \in \mathcal{M}_1(\mathcal{U})$, suppose the following equation holds:
\begin{align*}
    \Phi(P) = \E_P\left[\int K(\cdot,(U_{I_1},U_{I_2}))\text{d}P_{I_2}(U_{I_2})\right].
\end{align*}
Construct a new measure $\tilde{P} = P_{I_1} \times P_{I_2} \in \mathcal{M}_1(\mathcal{U})$, and observe that, under this measure,
\begin{align*}
    \Phi(\tilde{P}) &= \E_{\tilde{P}}\left[ \int K(\cdot,(U_{I_1},U_{I_2}))\text{d}\tilde{P}_{I_2}(U_{I_2})\right] \\
    &= \E_{P_{I_1}}\left[ \int K(\cdot,(U_{I_1},U_{I_2}))\text{d}{P}_{I_2}(U_{I_2})\right]  \\
    &= \Phi(P),
\end{align*}
from where the claim of the Lemma follows by injectivity of $\Phi(P)$ and the continuity argument in the Proof of Lemma 1.
\end{proof}
\subsection{Proof of Lemma 3}
% The argument is the same, as in the Proof of Lemma 2. The only step that is more involved is showing that for $P \in \mathcal{M}_1(\mathcal{U})$, under which $U_{I_1} \indep U_{I_2}|U_{I_3},$ we indeed have
% \begin{align*}
%     \Phi(P) = \E_P[g(s,U_2,U_3)].
% \end{align*}

% By law of iterated expectation,
% \begin{align*}
%     \Phi(P) = \E_P[K(s,(U_1,U_2,U_3))] = \E_P[\E_P[K(s,(U_1,U_2,U_3))|U_2,U_3]].
% \end{align*}
% Now, because $U$ is a Borel random variable, there exists a regular conditional probability measure $P(\cdot|X_2(\omega), X_3(\omega)),$ (see \citet{kallenberg2002foundations}), such that 
% \begin{align*}
%     \E_P[K(s,(U_1,U_2,U_3))|U_2,U_3] = \int K(s, U_1, U_2, U_3) \text{d}P(U_1|U_2,U_3)~ P-\text{a.s.}
% \end{align*}
% Similarly, there exists a regular conditional probability measure $P(\cdot|X_3(\omega)),$ with the property that 
% \begin{align*}
    
% \end{align*}

The argument is the same, as in the Proof of Lemma 2, where one observes that under any $P\in \mathcal{M}_1(\mathcal{U}),$ by law of iterated expectation, the joint MGF of $(U_{I_{j}})^3_{j=1}$ rewrites as $$
M_{P_{I_1 \cup I_2\cup I_3}}(s) = \E_{P}[\exp(\sum^3_{j=2} s'_{I_j}U_{I_j}) \E_{P} [\exp(s'_{I_1}U_{I_1})|U_{I_2 \cup I_3}]],$$
and if $U_{I_1} \indep_P U_{I_2}|U_{I_3},$ then \begin{align*}\E_{P}[\exp(\sum^3_{j=2} s'_{I_j}U_{I_j}) \E_{P} [\exp(s'_{I_1}U_{I_1})|U_{I_2 \cup I_3}]] = \\  \E_{P}[\exp(\sum^3_{j=2} s'_{I_j}U_{I_j}) \E_{P} [\exp(s'_{I_1}U_{I_1})|U_{I_3}]],\end{align*}
where the RHS is the MGF of the measure constructed as $\tilde{P}_{I_1|I_3}\times P_{I_2,I_3}$, under which $U_{I_1} \indep U_{I_2}|U_{I_3}$, where RCDs are well-defined because $U$ is Borel, see \citet{kallenberg2002foundations}.
\section{Estimation}
\subsection{Proof of Theorem \ref{theorem_dual}}
We first construct the dual problem to (P) in the sense of \citet{shapiro}.
\begin{prop}\label{prop1}
    Under Assumption 1.i, the dual problem to $(P)$ in the sense of \citet{shapiro} is given by
\begin{align}\label{dual_glp}
    (D): \quad \beta^*(\theta) \equiv \sup_{\mu \in \mathcal{M}^+(T)} \int b(t) \text{d} \mu(t), \quad \text{s.t.: ~ ~ } c =  \int a(t) \text{d} \mu(t)~  \text{ $\nu-$a.s.}
\end{align}
In particular,
\begin{align*}
        \beta^*(\theta) = \sup_{\mu \in \mathcal{M}^+(T)} \inf_{x \in X} \langle c, x \rangle + \int b(t) - \langle a(t), x \rangle \text{d}\mu(t) 
\end{align*}
\end{prop}
\begin{proof}
Define $Y \equiv C(\mathcal{T})$, and consider the cone of non-positive functions $K \equiv \{f \in Y: f(t) \leq 0, \forall t \in \mathcal{T}\}$. By Riesz Representation Theorem, since $T$ is a compact metric space, the dual space $Y^*$ is naturally identified with the space of finite Borel measures over $T$, $\mathcal{M}(T)$. Moreover, it is clear that the polar cone $K_{-} \equiv \{y^* \in Y^*: \langle y^*, y \rangle \leq 0 ~ \forall y \in Y \} \subseteq Y^*$ is naturally identified with the set of non-negative finite Borel measures over $T$, $\mathcal{M}^+(T) \subseteq \mathcal{M}(T)$. 

First, let us show that because $a: \mathcal{T} \to X$ is continuous, $\sup_{t \in \mathcal{T}} ||a(t)||_2 < \infty$. This follows by triangle inequality. Consider $t, t_1 \in \mathcal{T}$, and observe that 
\begin{align*}
    \left|||a(t_1)||_2 - ||a(t)||_2\right| \leq ||a(t_1) - a(t)||_2 \to 0,
\end{align*}
as $t_1 \to t$. So, $t \to ||a(t)||_2$ is also continuous over $\mathcal{T}$, and thus $\sup_{t \in \mathcal{T}} ||a(t)||_2 < \infty$ by Weirstrass theorem. 

The restrictions of the primary problem can be rewritten as $b - Ax \in K$ for a linear operator $A: X \to Y$, which acts as $x \to \langle x, a(\cdot) \rangle$. It is straightforward to observe that $A$ is continuous when $a$ is continuous, as $||A (x + h) - Ax ||_\infty = ||A h||_\infty = \sup_{t \in \mathcal{T}}|\langle h, a(t) \rangle| \leq \sup_{t \in \mathcal{T}}  ||h||_2 ||a(t)||_2 \leq ||h||_2 \sup_{t \in \mathcal{T}} ||a(t)||_2 \to 0$ as $||h||_2 \to 0$, because $\sup_{t \in \mathcal{T}} ||a(t)||_2 < \infty$.

Let us characterize the adjoint of $A$, namely the operator $A^*: Y^* \to X^*$, such that $\langle y^*, Ax \rangle = \langle A^*y^*, x\rangle$ for all $x \in X, y^* \in Y^*$. The LHS can be written as $\int \langle a(t), x\rangle \text{d}\mu(t)$, where $\mu$ represents $y^*$. We wish to show that the operator $A^*$ maps $\mu \to \int a(t) \text{d}\mu(t)$. 

We consider the latter integral in Bochner sense. For that, observe that $a(t)$ is strongly measurable by Pettis measurability theorem, because i) it is weakly measurable: $t\to\langle a(t), x^*\rangle$ is continuous for any $x^* \in X^*$; and ii) $X$ is separable. Moreover, by the fact that $\int ||a(t)|| \text{d}|\mu|(t) < \infty$, $a$ is then also Bochner-integrable (see Chapter 1 in \citet{hytonen2016analysis}). Thus, $\int a(t) \text{d}\mu(t) \in X$ is well-defined\footnote{if $\mu$ is a signed measure, define the integral as $\int f \text{d}\mu_{+} - \int f \text{d}\mu_{-}$}. To prove that it is indeed the sought adjoint, we need the following lemma. 
    \begin{lemma}
        Consider a measurable space $(\Omega, \mathcal{F})$ and consider two signed measures $\mu, \nu$ over it. Moreover, suppose that $V$ is a Banach space. If a strongly measurable $f: \Omega \to V$ is $|\mu| + |\nu|$-Bochner-integrable, then $\int f \text{d}\mu + \int f \text{d}\mu = \int f \text{d} (\mu + \nu)$. 
    \end{lemma}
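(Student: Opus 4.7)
The plan is to reduce the claim to the standard additivity of the Bochner integral with respect to a positive measure, and then peel off the signs via Jordan decomposition. First I would establish linearity for positive finite measures: if $\mu,\nu \geq 0$ are finite and $f$ is $(\mu+\nu)$-Bochner-integrable, then $\int f \, d(\mu+\nu) = \int f \, d\mu + \int f \, d\nu$. For simple functions $f = \sum_i v_i \mathbf{1}_{A_i}$ this is immediate from finite additivity of $\mu+\nu$. For general $f$, I would pick a sequence of simple functions $f_n$ with $\int \|f - f_n\| \, d(\mu+\nu) \to 0$, which is possible by the Bochner-integrability hypothesis, and use $\int \|f-f_n\|\,d\mu, \int \|f-f_n\|\,d\nu \leq \int \|f-f_n\|\,d(\mu+\nu)$ to pass to the limit in all three integrals simultaneously.

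The next step is to lift from positive to signed measures. Writing $\mu = \mu_+ - \mu_-$ and $\nu = \nu_+ - \nu_-$ via Jordan decomposition, set $\alpha \equiv \mu_+ + \nu_+$ and $\beta \equiv \mu_- + \nu_-$, both of which are positive finite measures with $\alpha + \beta \leq |\mu| + |\nu|$, so $f$ is $(\alpha+\beta)$-Bochner-integrable. The key subtlety is that $\mu+\nu = \alpha - \beta$ is generally \emph{not} the Jordan decomposition of $\mu+\nu$, since $\alpha$ and $\beta$ need not be mutually singular; the Bochner integral of $f$ against $\mu+\nu$ is defined using its own Jordan pair $(\mu+\nu)_+, (\mu+\nu)_-$.

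To bridge this gap, I would prove an auxiliary claim: if $\sigma$ is a signed measure and $\alpha, \beta$ are any positive finite measures with $\sigma = \alpha - \beta$, then $\int f \, d\sigma = \int f \, d\alpha - \int f \, d\beta$. The trick is that $\sigma_+ + \beta$ and $\sigma_- + \alpha$ are positive and equal as measures (since $\sigma_+ - \sigma_- = \alpha - \beta$), so applying the already-established positive-measure additivity to both sides gives
\begin{equation*}
\int f \, d\sigma_+ + \int f \, d\beta \;=\; \int f \, d\sigma_- + \int f \, d\alpha,
\end{equation*}
and rearranging yields the claim. Applying this with $\sigma = \mu+\nu$ and the above $\alpha,\beta$, then splitting $\int f\, d\alpha$ and $\int f\, d\beta$ by positive-measure additivity, delivers
\begin{equation*}
\int f\, d(\mu+\nu) = \int f\, d\mu_+ + \int f\, d\nu_+ - \int f\, d\mu_- - \int f\, d\nu_- = \int f\, d\mu + \int f\, d\nu.
\end{equation*}

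The main obstacle I anticipate is precisely the auxiliary claim, since a casual argument tends to implicitly assume that $\alpha - \beta$ is the Jordan decomposition. Everything else (the simple-function case, the limit passage, and Jordan decomposition of finite signed measures) is standard Banach-valued integration theory from, e.g., \citet{hytonen2016analysis}, and the finite total variation assumption ensures that all intermediate integrals are well-defined.
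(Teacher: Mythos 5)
Your proof is correct, but it is organized differently from the paper's. The paper proves the identity in one pass: it picks simple functions $f_n$ with $\int \|f_n - f\|\,\mathrm{d}(|\mu|+|\nu|) \to 0$, notes that this forces convergence of $\int f_n\,\mathrm{d}\mu_{\pm}$, $\int f_n\,\mathrm{d}\nu_{\pm}$ and $\int f_n\,\mathrm{d}(\mu+\nu)_{\pm}$ to the corresponding Bochner integrals, observes that for a \emph{simple} function the identity $\int f_n\,\mathrm{d}\mu + \int f_n\,\mathrm{d}\nu = \int f_n\,\mathrm{d}(\mu+\nu)$ is pure algebra (each integral is a finite linear combination of measure values, and signed measures add setwise), and concludes by continuity of $(x,y,z)\mapsto x+y-z$. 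By working at the level of simple functions with the signed measures directly, the paper never has to confront the fact that $\mu_+ + \nu_+$ and $\mu_- + \nu_-$ are not the Jordan pair of $\mu+\nu$. You instead prove the positive-measure case first and then reduce the signed case to it, which is exactly where that non-canonical-decomposition issue surfaces; your auxiliary claim (that $\int f\,\mathrm{d}\sigma = \int f\,\mathrm{d}\alpha - \int f\,\mathrm{d}\beta$ for any positive $\alpha,\beta$ with $\sigma = \alpha-\beta$, via $\sigma_+ + \beta = \sigma_- + \alpha$) resolves it cleanly, and your integrability bookkeeping ($\alpha+\beta = |\mu|+|\nu|$ and $(\mu+\nu)_{\pm} \le |\mu|+|\nu|$) is in order. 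The trade-off: your argument is more modular and makes explicit a subtlety the paper's terse notation ($\int f_n\,\mathrm{d}\mu_{\pm} + \int f_n\,\mathrm{d}\nu_{\pm} = \int f_n\,\mathrm{d}(\nu+\mu)_{\pm}$) arguably glosses over, while the paper's single limit passage is shorter because additivity in the measure costs nothing for simple functions regardless of sign.
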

    \begin{proof}
        Construct a sequence of simple functions $\{f_n\}$, such that $\int ||f_n - f|| \text{d} (|\mu| + |\nu|) \to 0$. By the properties of Lebesgue integral and positivity of $|\nu|, |\mu|$, it then follows that both $\int ||f_n - f|| \text{d} |\mu| \to 0$, and $\int ||f_n -f|| \text{d}|\nu| \to 0$. This implies $\int ||f_n - f|| \text{d} \mu_{\pm} \to 0$ and $\int ||f_n - f|| \text{d} \nu_{\pm} \to 0$. Per usual arguments involved in the constructions of a Bochner integral (see \citet{hytonen2016analysis}), each of the sequences $\int f_n  \text{d} \mu_{\pm}, \int f_n  \text{d} \nu_{\pm}, \int f_n  \text{d} (\nu + \mu)_{\pm}$ then converges to a respective element in $V$. Because linearity holds for simple functions, we have $\int f_n  \text{d} \mu_{\pm} + \int f_n  \text{d} \nu_{\pm}  = \int f_n  \text{d} (\nu + \mu)_\pm$ for each $n \in \mathbb{N}$. The continuity of $g(x,y,z) = x + y - z$ in $V^3$ yields the result. 
    \end{proof}
    The above Lemma establishes additivity of $A^*$. Proving that $A^*$ commutes with scalar multiplication is trivial.  Boundedness is obtained via $||A^*\mu|| = ||\int a(t) \text{d} \mu(t)|| \leq \int ||a(t)|| \text{d} |\mu|(t) \leq \sup_{t \in \mathcal{T}} ||a(t)|| \cdot ||\mu||_{TV}$. Thus, $||A^*||_{op} \leq \sup_{t \in \mathcal{T}} ||a(t)|| < 0$. So, $A^* \in \mathcal{L}(Y^*, X^*)$. Finally, the discussion on page 15 in \citet{hytonen2016analysis} allows to conclude that, indeed,
    \begin{align*}
        \langle A^* y^*, x\rangle = \langle \int a(t)\text{d}\mu(t), x\rangle = \int \langle a(t), x\rangle \text{d}\mu(t) = \langle y^*, Ax\rangle.
    \end{align*}
    Finally, it follows from equation 2.6 in \citet{shapiro}\footnote{This equation contains a typo: $\min$ should be $\max$, see the rest of the chapter and the enclosing book.}, that the dual to 
    \begin{align*}
        \inf_{x \in X} \langle c,x \rangle, \quad \text{s.t.: } b - Ax \in K,
    \end{align*}
    is
    \begin{align*}
        \sup_{y^* \in K_{-}} \langle y^*, b\rangle, \quad \text{s.t.: } A^* y^* = c.
    \end{align*}
    Plugging in the definitions yields the claim of the proposition. 
\end{proof}
We now proceed to prove the main result. 
\begin{proof}
    We first show that Slater Condition (see \citet{shapiro}) holds in (P) problem under Assumption 1. Showing this amounts to proving that $\exists x \in X$, such that
    \begin{align*}
        \langle a(t),x \rangle - b(t) > 0 ~ \forall t \in \mathcal{T}.
    \end{align*}
    Observe that, for any $x \in X$, $\langle a(t),x \rangle = x(s^*) + \int_{S \setminus \{s^*\}} a(t)(s) x(s)\text{d}\nu(s)$. Consider $x = \mathds{1}_{\{s^*\}} x^*$. As $s^* \in \mathcal{S},$ such $x$ is measurable. Moreover, since $\nu(\cdot)$ is finite by assumption, $\nu(\{s^*\}) < \infty$, and so $x \in X$. We get 
    \begin{align*}
        \langle a(t),x \rangle - b(t) = x^* \nu(\{s^*\}) - b(t).
    \end{align*}
    By compactness of $\mathcal{T}$ and 1.i, $\sup_{t \in \mathcal{T}} b(t) < \infty$, so $x^* = \frac{\sup_{t \in \mathcal{T}} b(t)}{\nu(\{s^*\})} + 1$ yields the required point. Applying Theorem 2.4. in \citet{shapiro} establishes strong duality $\beta(\theta) = \beta^*(\theta)$, and the equivalence $|\beta^*(\theta)| < \infty \iff \mathcal{A}^*(\theta) \ne \emptyset$. 
    
    The rest is established as follows. Let $\mathcal{L}_\infty(x;\theta) \equiv \langle c, x\rangle + \left(\max_{t \in \mathcal{T}} b(t) - \langle a(t), x \rangle \right)^+$ for any $x \in X$, and $\theta$ satisfying Assumption 1. Denote the feasible set of $(P)$ by $\Theta_I(\theta)$. We have $\mathcal{L}_{\infty}(x, \theta) = \langle c,x \rangle$ for any $x \in \Theta_I(\theta)$. Because $\Theta_I(\theta) \subseteq X$, we have
    \begin{align}\label{one_direction_}
        \beta_\infty(\theta) = \inf_{x \in X} \mathcal{L}_\infty(x;\theta) \leq \inf_{x \in \Theta_I(\theta)} \mathcal{L}_{\infty}(x;\theta) = \inf_{x \in \Theta_I(\theta)} \langle c,x \rangle = \beta(\theta).
    \end{align}
    Now suppose that $\mathcal{A}^*(\theta)$ is non-empty, which is equivalent to Assumption 1.iii by the above result. By definition of the dual program (see e.g. \citet{shapiro}), the dual problem (D) is equivalent to 
    \begin{align*}
        \sup_{\mu \in \mathcal{M}^+(T)} \inf_{x \in X} \langle c, x \rangle + \int b(t) - \langle a(t), x \rangle \text{d}\mu(t)  
    \end{align*}
    So, for any $\mu^* \in \mathcal{A}^*(\theta)$ we have
    \begin{align}
        \beta^*(\theta) = \inf_{x \in X} \langle c, x \rangle + \int b(t) - \langle a(t), x \rangle \text{d}\mu^*(t),
    \end{align}
    implying that
    \begin{align}\label{other_direction_}
        \beta^*(\theta) \leq \langle c, x \rangle + \int b(t) - \langle a(t), x \rangle \text{d}\mu^*(t) ~ \forall x \in X
    \end{align}
    Recalling that $\mu^* \in \mathcal{M}^+(T)$ and using properties of Lebesgue integral, we obtain
    \begin{align}\label{ineqs_}
        \int b(t) - \langle a(t), x \rangle \text{d}\mu^*(t) &\leq  \int (b(t) - \langle a(t), x \rangle)^+ \text{d}\mu^*(t) \leq \\ \notag \int \left[\sup_{t \in \mathcal{T}} (b(t) - \langle a(t), x \rangle)^+ \right]\text{d}\mu^*(t) 
        &\leq \sup_{t \in \mathcal{T}} (b(t) - \langle a(t), x \rangle)^+ \int \text{d}\mu^*(t),
    \end{align}
    for any $x \in X$. Using Assumption 1.ii, we observe that, for any $\mu^*$ satisfying the constraints of (D), $\int \text{d}\mu^* = 1$. Adding $\langle c, x\rangle$ to both sides of \eqref{ineqs_}, taking infinum over $X$ and combining with \eqref{other_direction_}, yields 
    \begin{align*}
        \beta^*(\theta) \leq \beta_\infty(\theta),
    \end{align*}
    which, combined with \eqref{one_direction_}, concludes the proof of the Lemma.
\end{proof}
\subsection{Proof of Lemma \ref{lemma_weak_compactif}}
Fix some $\theta$ satisfying Assumption 1. Because $\mathcal{L}_\infty(x;\theta)$ is a continuous convex functional of $x \in X$, and $X$ is a Hilbert space (thus reflexive), part i) follows from Proposition 1.2 on p.35 in \citet{ekeland1999convex}, which is a consequence of Banach-Alaoglu Theorem. Part ii) follows directly from the definition of $\Delta(\cdot)$. For Part iii), consider the problem in \eqref{def_unc}. By Theorem \ref{theorem_dual}, there exists $\{x_n\} \subset X$, such that $\mathcal{L}_\infty(x_n;\theta) \downarrow \beta^*(\theta)$. Setting $\gamma_n = ||x_n||,$ and using Part ii) yields the result. 

\subsection{Proof of Theorem \ref{theor_general}}
\begin{proof}
    By definition, 
    \begin{align*}
        \beta_{J}(\hat{\theta};\gamma) - \beta(\theta) & = \underbrace{\beta_{J}(\hat{\theta}; \gamma) - \beta_{J}(\theta; \gamma)}_{I_1} \\
        &+ \underbrace{\beta_{J}(\theta; \gamma) - \beta_{\infty}(\theta;\gamma)}_{I_2} \\
        &+ \Delta(\theta; \gamma),
    \end{align*}
    where the last two terms are positive, so
    \begin{align}\label{eq_withterms}
\beta_{J}(\hat{\theta};\gamma) - \beta(\theta) \in [-|I_1|; |I_1| + I_2 + \Delta(\theta;\gamma)]
    \end{align}
    For the first term, by Cauchy-Schwarz and the triangle inequality,
    \begin{align}\notag
        |I_1| &= \left|\inf_{x \in B_{\gamma} \cap \mathbf{P}_{J}} \mathcal{L}_\infty(x;\hat{\theta}) - \inf_{x \in B_{\gamma} \cap \mathbf{P}_{J}} \mathcal{L}_\infty(x;\theta)\right| \\\notag
         &\leq \sup_{x \in B_{\gamma}}\left|\mathcal{L}_\infty(x;\hat{\theta}) - \mathcal{L}_\infty(x;\theta)\right|  \\\notag
         &\leq \sup_{x \in B_{\gamma}}\left| \langle \hat{c} - c, x\rangle\right| + \sup_{t \in \mathcal{T}} \left|\hat{b}(t) - b(t) + \langle a(t) - \hat{a}(t),x\rangle\right| \\\notag
         &\leq \sup_{x \in B_{\gamma}} ||x||\left(\sup_{t \in \mathcal{T}} ||\hat{a}(t) - a(t)||+ ||\hat{c} - c||\right) + \sup_{t \in \mathcal{T}}||\hat{b}(t) -b(t)|| 
         \\ \label{I1term}
         &= \gamma(\delta^a + \delta^c) + \delta^b.
    \end{align}
    Combining \eqref{eq_withterms} with \eqref{I1term} and using \eqref{polynoms_approx_1} for $I_2$ yields the first claim of the theorem. 

    For the second part, let us state two auxiliary results.
\begin{lemma}\label{lemma_aux1}
    Suppose $d, K, s \in \mathbb{N}$, with $K \geq s$, and $f \in \mathcal{L}^2([-1,1]^d)$. There exists a constant $M > 0$, independent of $K$ and $f$, such that 
    \begin{align*}
        ||(I - \mathbf{p}_K)f||_2 \leq M \omega^{s}(f,K^{-1})_2,
    \end{align*}
    where $\omega^{r}(f, K^{-1})_2$ is the $\mathcal{L}^2-$integrated modulus of smoothness of $f$ of degree $r$, with the step size $K^{-1}$.
\end{lemma}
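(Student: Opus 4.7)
The plan is to exploit that $\mathbf{p}_K$ is the orthogonal projection onto the tensor-product polynomial space $\mathbf{P}_K([-1,1]^d)$, so by the Hilbert projection theorem
$$\|(I-\mathbf{p}_K)f\|_2 = \inf_{p \in \mathbf{P}_K([-1,1]^d)} \|f - p\|_2.$$
The statement thus reduces to a classical multivariate Jackson-type inequality for best $\mathcal{L}^2$ polynomial approximation on the cube (see, e.g., DeVore--Lorentz, \emph{Constructive Approximation}, Ch.~7). One could simply cite this, but for self-containedness I would derive it from the univariate Jackson--Favard theorem via a coordinatewise argument.

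Concretely, let $\mathbf{p}^{(i)}_K$ denote the $\mathcal{L}^2$-orthogonal projection in the $i$-th coordinate onto univariate polynomials of degree at most $K$, acting as the identity in the remaining coordinates. These operators are contractions, commute pairwise, and their composition equals $\mathbf{p}_K$ because $\mathbf{P}_K([-1,1]^d)$ is the tensor product of the univariate polynomial spaces. The telescoping identity
$$I - \mathbf{p}_K = \sum_{i=1}^d \Bigl(\prod_{j<i}\mathbf{p}^{(j)}_K\Bigr)\bigl(I - \mathbf{p}^{(i)}_K\bigr)$$
combined with $\|\prod_{j<i}\mathbf{p}^{(j)}_K\|_{\mathrm{op}}\leq 1$ and the triangle inequality yields $\|(I - \mathbf{p}_K)f\|_2 \leq \sum_{i=1}^d \|(I - \mathbf{p}^{(i)}_K)f\|_2$. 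For each $i$, I would apply the classical univariate Jackson--Favard theorem fibrewise: for almost every choice of the other coordinates, the best $\mathcal{L}^2([-1,1])$ approximation of the slice $f(\,\cdot\,)$ by polynomials of degree $\leq K$ is bounded by $M_0 \,\omega^s_i(f(\cdot), K^{-1})_2$, where $M_0$ depends only on $s$. Squaring and integrating in the remaining variables via Fubini (using that the pure $e_i$-directional difference operator commutes with integration over other axes) converts the slicewise bound into $\|(I-\mathbf{p}^{(i)}_K)f\|_2 \leq M_0 \,\omega^s(f,K^{-1};e_i)_2$ for the partial modulus in direction $e_i$.

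The main (and essentially only) obstacle is matching the definition of the multivariate modulus $\omega^s(f, K^{-1})_2$ used in the lemma to the directional moduli produced by the coordinatewise argument. If $\omega^s$ is defined isotropically via step vectors $h$ with $|h| \leq K^{-1}$, each axis-aligned partial modulus is dominated by $\omega^s(f, K^{-1})_2$ via the choice $h = K^{-1}e_i$; if it is defined as a sum or maximum of partial moduli, the comparison is immediate. Summing over $i$ then yields the claim with $M = d\,M_0$, which depends only on $d$ and $s$ and is therefore independent of $K$ and $f$, as required. The condition $K \geq s$ enters only to ensure that the univariate Jackson--Favard constant can be taken uniform in $K$.
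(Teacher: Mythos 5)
The paper disposes of this lemma with a one-line citation to Theorems 4.1.1 and 12.1.1 of Ditzian and Totik's \emph{Moduli of Smoothness}, so your self-contained tensorization argument is a genuinely different route. Your overall skeleton is sound: identifying $\|(I-\mathbf{p}_K)f\|_2$ with the best-approximation error, the telescoping decomposition $I-\mathbf{p}_K=\sum_i\bigl(\prod_{j<i}\mathbf{p}^{(j)}_K\bigr)(I-\mathbf{p}^{(i)}_K)$, and the contraction bound are all correct for the tensor-product polynomial space (which is what the paper intends, given its dimension count $k_i^d$).

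There is, however, one genuine gap in the Fubini step, and it is not merely the definitional matching issue you flag at the end. The univariate Jackson--Favard theorem applied fibrewise gives, for a.e.\ $x_{-i}$, a bound by the slicewise modulus $\sup_{0<h\le K^{-1}}\|\Delta^s_{h}f(\cdot,x_{-i})\|_{L^2(dx_i)}$. When you square and integrate over $x_{-i}$, you obtain $\int \sup_h\|\Delta^s_{he_i}f\|^2\,dx_{-i}$, whereas the directional modulus you want is $\sup_h\int\|\Delta^s_{he_i}f\|^2\,dx_{-i}$. The integral of a supremum dominates the supremum of the integral, so the inequality runs the wrong way and the slicewise bound does not convert into a bound by $\omega^s(f,K^{-1};e_i)_2$ as written. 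The standard repair is to replace the sup-based modulus in the univariate step by the averaged modulus $\bigl(t^{-1}\int_0^t\|\Delta^s_h g\|_2^2\,dh\bigr)^{1/2}$, which is equivalent to $\omega^s(g,t)_2$ up to constants depending only on $s$ and does commute with integration over the remaining coordinates; alternatively, one can use a concrete near-best linear operator whose error is controlled by such an integral-type quantity. With that modification (and the resulting constant absorbing the equivalence constants), your argument goes through and yields $M=dM_0$ as claimed.
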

\begin{proof}
    The claim follows from Theorems 4.1.1. and 12.1.1 in \citet{ditzian2012moduli}. 
\end{proof}
\begin{lemma}\label{lemma_aux2}
    For some $C > 0$ independent of  $f \in W^{s,2}((-1,1)^d)$, we have $\omega^{s}(f,t)_2 \leq t^{s} C ||f||_{W^{s,2}}$ for all $t > 0$. 
\end{lemma}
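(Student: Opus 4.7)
The plan is to reduce the bound on $\omega^s(f,t)_2$ to a pointwise identity expressing the $s$-th symmetric difference as an iterated integral of $s$-fold directional derivatives, and then to use Minkowski's integral inequality to pass the $L^2$ norm inside that integral. Concretely, for $h \in \mathbb{R}^d$ and $x$ such that the line segment $[x, x+sh]$ lies in the domain, one has the representation
\begin{equation*}
\Delta_h^s f(x) \;=\; \int_{[0,1]^s} \partial_h^s f\!\left(x + (u_1+\cdots+u_s)\,h\right) du_1\cdots du_s,
\end{equation*}
where $\partial_h^s$ is the $s$-th directional derivative along $h$. This identity is standard (it follows by iterating the fundamental theorem of calculus applied to $\Delta_h$). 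Since $f \in W^{s,2}((-1,1)^d)$, all derivatives of order $\leq s$ exist in the weak sense and the identity extends by density from $C^\infty$ functions.

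Next, applying Minkowski's integral inequality to the above identity yields
\begin{equation*}
\|\Delta_h^s f\|_{L^2(\Omega_{s|h|})} \;\leq\; \int_{[0,1]^s} \|\partial_h^s f(\cdot + (u_1+\cdots+u_s)h)\|_{L^2(\Omega_{s|h|})}\,du_1\cdots du_s \;\leq\; \|\partial_h^s f\|_{L^2(\Omega)},
\end{equation*}
using translation invariance of $L^2$ (after suitable restriction of the shifted domain) under each shift. The multinomial expansion $\partial_h^s f = \sum_{|\alpha|=s}\binom{s}{\alpha} h^\alpha \partial^\alpha f$, combined with Cauchy--Schwarz over the finite index set $\{\alpha : |\alpha|=s\}$, then gives
\begin{equation*}
\|\partial_h^s f\|_{L^2(\Omega)} \;\leq\; |h|^s \cdot C_{s,d}\, \|f\|_{W^{s,2}(\Omega)},
\end{equation*}
with a constant $C_{s,d}$ depending only on $s$ and $d$ (via the multinomial coefficients). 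Taking the supremum over $|h| \leq t$ on the left-hand side yields $\omega^s(f,t)_2 \leq t^s C \|f\|_{W^{s,2}}$, as claimed.

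The main technical nuisance will be the boundary of $(-1,1)^d$: the integral representation above is only valid at points $x$ whose orbit $\{x + kh\}_{k=0}^s$ remains inside the domain, so the left-hand $L^2$ norm is taken over a subdomain $\Omega_{s|h|}$ rather than all of $\Omega$. This is precisely how the modulus $\omega^s(f,\cdot)_2$ is defined on bounded domains (cf. \citet{ditzian2012moduli}, Ch.~12), so the inequality we prove is directly the one stated. Alternatively, if one prefers a cleaner argument, one can invoke a Sobolev extension theorem (Stein) to extend $f$ to $\tilde f \in W^{s,2}(\mathbb{R}^d)$ with $\|\tilde f\|_{W^{s,2}(\mathbb{R}^d)} \leq C'_{s,d}\|f\|_{W^{s,2}(\Omega)}$, apply the argument above on all of $\mathbb{R}^d$ (where translations are genuine isometries and no boundary issues arise), and absorb $C'_{s,d}$ into the final constant. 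Either route yields the same conclusion.
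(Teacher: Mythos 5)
Your proposal is correct, and your primary route is genuinely different from the paper's. The paper's proof is essentially a two-line citation argument: extend $f$ to $\tilde f\in W^{s,2}(\mathbb{R}^d)$ via the Sobolev extension theorem (the cube being a Lipschitz domain), note $\omega^s(f,t)_2\le\omega^s(\tilde f,t)_2$, and invoke a known whole-space estimate (Property 4 in \citet{KOLOMOITSEV2020105423}) for $\omega^s(\tilde f,t)_2\le \tilde C t^s\|\tilde f\|_{W^{s,2}(\mathbb{R}^d)}$. Your ``alternative'' route at the end is exactly this argument. Your main route instead proves the underlying estimate from scratch on the cube: the iterated fundamental-theorem-of-calculus representation of $\Delta_h^s f$ as an $s$-fold average of $\partial_h^s f$, Minkowski's integral inequality to pull the $L^2$ norm inside, and the multinomial expansion of $\partial_h^s$ to land on $\|f\|_{W^{s,2}}$. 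This is valid: the representation extends from smooth functions by density (smooth functions are dense in $W^{s,2}$ of the cube), the shifted restrictions $\Omega_{s|h|}+ch\subseteq\Omega$ justify the translation step, and your handling of the boundary via the restricted domain in the definition of the modulus is exactly how \citet{ditzian2012moduli} sets things up. What your direct route buys is self-containment and an explicit, elementary constant $C_{s,d}$ (multinomial coefficients only), at the cost of a slightly longer argument; what the paper's route buys is brevity, outsourcing both the boundary issue and the whole-space modulus bound to standard references. Either proof is acceptable.
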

\begin{proof}
    Because the cube is a Lipschitz domain, we can use the Sobolev extension theorem to extend $f$ to $\tilde{f}$ on $\mathbb{R}^d$, with $||\tilde{f}||_{W^s(\mathbb{R}^d)} \leq C ||f||_{W^s((-1,1)^d)}$ for an absolute constant $C > 0$. Since $\tilde{f}$ coincides with $f$ on $(-1,1)^d,$ it follows that $\omega^s(f, t)_2 \leq \omega^s(\tilde{f}, t)_2$, where the latter is upper bounded by $\tilde{C} t^s||\tilde{f}||_{W^{s,2}(\mathbb{R}^d)}$ for an absolute constant $\tilde{C} > 0$ by Property 4 in \citet{KOLOMOITSEV2020105423} for any $t > 0$. 
\end{proof}
The bound on $\delta^p$ follows by combining Lemmata \ref{lemma_aux1},\ref{lemma_aux2} with \eqref{polynoms_approx_1} and Assumption 3. 
\end{proof}
\subsection{Proof of Theorem \ref{theor_main_cons}}
\begin{proof}
    To prove the Theorem, we need to show that, firstly, the construction in Assumption 4 satisfies, uniformly over $\P,$ $\delta_n^{k} = O_\P(1/{\sqrt{n}})$ for $k = a,c$. 
    
    We first consider $a$. Observe that under the constructions in Section 2.3 $a$ is fixed, so the corresponding increments are $0$. For Section 2.4, it suffices to study one independence restriction. Suppose $T = (T_1, T_2),$ where one assumes $T_1 \indep T_2$, and $W_i$ for $i \in [n]$ are copies of $T_2$. Denote the supports of $T_j$ by $\mathcal{T}_j$. Plugging in the construction, with $S = H_d(0)$, and $\nu = \lambda$, 
    \begin{align*}
        \sqrt{n}\delta^a_n = \sup_{t \in \mathcal{T}} ||\sqrt{n}(\hat{a}(t) - a(t))|| = \sup_{t_1 \in \mathcal{T}_1}\left(\int_{H_d(0)} (\sqrt{n}(\P_{2n} - \P_2)K(s,t_1, \cdot))^2\text{d}s\right)^{1/2}
    \end{align*}
    For a Matérn Kernel $K: \mathbb{R}^{2d} \to \mathbb{R}$, it is well known that the map $t \to K(s,t)$ is in $H^{d/2 + \nu}(\mathbb{R}^d)$ for any $s \in \mathbb{R}^d$ (see e.g. \citet{multires_matrix}). By definition of Matérn kernel, the restriction of this map to the map $t_2 \to K(s,(t_1, t_2))$ rewrites as 
    \begin{align*}
        K((s_1,s_2),(t_1, t_2)) = C_\nu (||s_1 - t_1||^2 + ||s_2 - t_2||^2)^{\nu/2}K_\nu\left(\sqrt{||s_1 - t_1||^2 + ||s_2 - t_2||^2 }\right). 
    \end{align*}
    If $s_1 = t_1$, this clearly degenerates to the Matérn kernel over $R^{2d_2}$, so the map is in $H^{d_2/2 + \nu}(\mathbb{R}^{d_2})$. Otherwise, $||s_1 - t_1||^2 > 0$, and so the above map has an even higher smoothness, because $K_\nu(r)$ is analytic apart from $r = 0$. Thus, for any $s, t_1$ the map $t_2 \to K(s,(t_1, t_2))$ is in $H^{d_2/2 + \nu}(\mathbb{R}^{d_2})$. By continuity of $K(s,t),$ the non-empty set
    \begin{align*}
        \mathcal{F} \equiv \{t_2 \to \xi(t_2)K(s,(t_1, t_2)): s \in H_d(0), t_1 \in \mathcal{T}_1\} \subseteq H^{d_2/2 + \nu}(\mathbb{R}^{d_2})
    \end{align*}
    is bounded for an appropriate analytic mollifier $\xi(t_2) \in [0,1]$ that vanishes on $\mathcal{T}_2^{\varepsilon}\setminus \mathcal{T}_2$, and becomes $\xi(t_2) = 1$ on $t_2 \in \mathcal{T}_2$, for some $\varepsilon > 0$. Define $\mathcal{F}|{\mathcal{T}_2}$ to be the set of restrictions of functions from $\mathcal{F}$ to the support of $T_2$. Corollary 2 in \citet{nickl2007bracketing} establishes that over $\mathcal{P}$, we have, for some constant $C_2 > 0$,
\begin{align}
    \sup_{\P_2 \in \mathcal{P}_2} \log N_{[]}(\varepsilon, \mathcal{F}|{\mathcal{T}_2})  \leq C_2 \varepsilon^{-d_2/(d_2/2 + \nu)},  
\end{align}
    which satisfies the entropy bound condition in Theorem 2.8.4 in \citet{van1996weak} for any $\nu > 0$. Thus, the class $\mathcal{F}|\mathcal{T}_2$ is uniformly Donsker and uniformly pre-Gaussian, and so
    \begin{align*}
        \mathbb{G}_{n}\equiv\sqrt{n}(\P_{2n} - \P_2)
    \end{align*}
    converges weakly in $\ell^\infty(\mathcal{F}|\mathcal{T}_2)$ to a tight, Borel measurable version of the Brownian bridge $\mathbb G_{\P}$, uniformly in $\P \in \mathcal{P}$. The conclusion then follows by e.g. bounding $\sqrt{n}\delta^a_n$ asymptotically with a supremum over a Brownian bridge. The argument for $\delta^a_n$ is identical if $K$ is the exponential kernel, observing that it gives rise to analytic functions. The result for $\hat{c}$ is established similarly. The results in the Theorem then follow by combining the rates for $\delta_n^{\bullet}$ with the bounds in Theorem \ref{theor_general}.
\end{proof}
\subsection{Proof of Lemma \ref{lemma_highdim}}
\begin{proof}
    Part i) of the Lemma follows by observing that $\mathcal{L}_\infty(x;\theta_n)$ is a continuous convex functional over $X_n$, and $B_\gamma$ is a closed convex subset of $X_n$, so Proposition 1.2 on p.35 in \citet{ekeland1999convex} yuelds the result.  

    To prove part ii), observe that $\Delta(\theta_n, \gamma) = f(n, \overline{\gamma}, \tilde{\gamma})$ for some fixed $f: \mathbb{R}^3 \to \mathbb{R}$. We need to show that $f$ is non-increasing in each of its arguments. First, fix some $\gamma$. For any $n \in \mathbb{N}$, by Assumption 5.ii, $\beta(\theta_{n-1}) =\beta(\theta_n)$, whereas $\beta(\theta_n;\gamma) \leq \beta(\theta_{n-1}; \gamma)$ by Assumption 5.iii. For a fixed $n$ and $\gamma_1 \geq \gamma_0$, we have $\beta(\theta_n;\gamma_1) \leq \beta(\theta_n;\gamma_0)$ by definition of $\beta(\cdot;\cdot)$.
    
    Part iii) follows from Parts ii) and i). This concludes the proof of the lemma.
\end{proof}
\subsection{Proof of Theorem \ref{theor_main_cons_highdim}}
\begin{proof}
    It is straightforward to observe that Theorem \ref{theor_main_cons} yields control over $\delta^a_n$ and the fixed continuous part of $\delta^c_n$ in Teorem \ref{theor_general_highdim}. It remains to establish control control over the rate of the high-dimensional part of $\delta^c_n$. This is done via the following well-known consequence of Hoeffding inequality combined with the union bound. 
    \begin{lemma}
        Suppose $\{U_i\}^n_{i=1}\subseteq \mathbb{R}^p$ are i.i.d., with $U_i \in \mathbb{R}^p$, and $||U_i - \E[U_i]|| < u$ a.s. for some $u > 0$ and any $i \in [n]$. Then, for any $\varepsilon \in (0;1]$,
        \begin{align*}
            \P\left[||\frac{1}{n}\sum^n_{i=1}U_i - \E[U_i]||_\infty \geq u \sqrt{2}\sqrt{\frac{\log p + \log 2/\varepsilon}{n}}\right] \leq \varepsilon
        \end{align*}
    \end{lemma}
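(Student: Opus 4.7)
The plan is to reduce the $\ell^\infty$ deviation bound on a random vector to $p$ scalar deviation bounds via the union bound, and then invoke Hoeffding's inequality on each coordinate. The key observation is that the assumed norm bound $\|U_i - \E[U_i]\| < u$ (regardless of whether this is the Euclidean or sup norm) implies the coordinate-wise bound $|U_{ij} - \E[U_{ij}]| < u$ almost surely for each $j \in [p]$, since in either interpretation each coordinate of a vector is dominated by its norm.

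First I would fix an arbitrary coordinate $j \in [p]$ and consider the centered i.i.d. scalar random variables $Z_{ij} \equiv U_{ij} - \E[U_{ij}]$, which satisfy $|Z_{ij}| < u$ a.s. and $\E[Z_{ij}] = 0$. By Hoeffding's inequality applied to the scalar average $\overline{Z}_j \equiv n^{-1} \sum_{i=1}^n Z_{ij}$, for any $t > 0$,
\begin{align*}
    \P\bigl[|\overline{Z}_j| \geq t\bigr] \leq 2 \exp\!\left(-\frac{n t^2}{2 u^2}\right),
\end{align*}
since each $Z_{ij}$ is supported in $[-u, u]$ (so the Hoeffding range parameter equals $2u$, contributing the factor $(2u)^2/4 = u^2$ in the denominator).

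Next I would apply the union bound over the $p$ coordinates to obtain
\begin{align*}
    \P\!\left[\left\|\tfrac{1}{n}\sum_{i=1}^n U_i - \E[U_i]\right\|_\infty \geq t\right] = \P\!\left[\max_{j \in [p]} |\overline{Z}_j| \geq t\right] \leq 2 p \exp\!\left(-\frac{n t^2}{2 u^2}\right).
\end{align*}
Finally, to match the statement I would set the right-hand side equal to $\varepsilon$ and solve for $t$. Taking logs gives $n t^2 / (2 u^2) = \log(2p/\varepsilon) = \log p + \log(2/\varepsilon)$, hence
\begin{align*}
    t = u \sqrt{2} \sqrt{\frac{\log p + \log(2/\varepsilon)}{n}},
\end{align*}
which delivers exactly the stated bound. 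There is no real obstacle: the only subtle point is confirming that the norm in the hypothesis, whatever it is, dominates each coordinate in absolute value, so that scalar Hoeffding applies with range parameter $u$ coordinate-wise; this is immediate for both the Euclidean and the sup norms.
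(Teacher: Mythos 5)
Your proof is correct and follows exactly the route the paper indicates: coordinate-wise Hoeffding (range $2u$) combined with a union bound over the $p$ coordinates, then solving $2p\exp(-nt^2/(2u^2))=\varepsilon$ for $t$. The paper states the lemma as a well-known consequence of precisely these two ingredients without further detail, so your argument supplies the same proof.
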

    In our context, we get, for any $n \in \mathbb{N}$, any $\P \in \mathcal{P}$ and any $\varepsilon \in (0;1]$, 
    \begin{align*}
        \P\left[\lVert\hat{\overline{c}}_n - \overline{c}_n\rVert_\infty \geq \overline{C}\sqrt{2}\sqrt{{\frac{\log l_n + \log 2/\varepsilon}{n}}}\right] \leq \varepsilon,
    \end{align*}
    which yields the result.  
\end{proof}
\section{Bias characterization}
\subsection{Proof of Proposition \ref{theor_charact}}
\begin{proof}
We first prove the first claim of the Proposition. For ease of notation, we normalize throughout $\lambda (\mathcal{T}) = 1$. It will suffice to concentrate on a single-coordinate $\gamma$, as an extension to the procedure in Section \ref{section_highdim} is straightforward.

First, observe that we can treat $A$ as an operator $A: X \to \mathcal{L}^2(\lambda)$. By Assumption 1.i, and because an equivalence class in any $\mathcal{L}^2$ space has at most one continuous representative, we have $\{x \in X: (Ax)(T) - b(T) \geq 0, ~ \forall T \in \mathcal{T}\} = \{x \in X: (Ax)(T) - b(T) \geq 0, ~ \lambda-\text{a.e.}\}$. We can also observe that, by condition iv), $\mathbb{E}_\P\big [(Ax)(T)\big ] = \E_\P\big [[(Ax)(T)]_{\sim}\big]$, where the latter is the expectation of some representative of the equivalence class in $\mathcal{L}^2(\lambda)$. Therefore, we can write 
    \begin{align*}
        (P) \quad &\inf_{x \in X} \mathbb{E}_\P[ (Ax)(T)] ~ \text{ s.t.: } (Ax)(T) \geq b(T), ~ \forall T \in \mathcal{T} = \\
        (P^*) \quad &\inf_{x \in X} \mathbb{E}_\P\big [ [(Ax)(T)]_{\sim}\big] ~ \text{ s.t.: } Ax  \geq b,~  \lambda - \text{a.e.}
    \end{align*}
    We now observe that the value of $(P^*)$ is greater or equal than the value of the following relaxation:
    \begin{align*}
        (P^{**}) \quad \beta^{**}(\theta) \equiv \inf_{f \in \mathcal{L}^2(\lambda)} ~ \mathbb{E}_\P[f(T)], \quad \text{s.t.: } f \in \text{Cl}\left(R(A) \cap \{f \in \mathcal{L}^2(\lambda): ~f \geq b ~ \text{a.e.}\}\right)
    \end{align*}
    To see that the converse also holds, observe that, by definition of infimum, for any $\eta > 0$ there exists a feasible $f$ in $(P^{**})$, such that $\E_\P[f(T)] < \beta^{**}(\theta) + \eta$. Any such $f$ is the $\mathcal{L}^2$ limit of a sequence $Ax_n$ that is feasible in $(P^*)$, and by Cauchy-Schwarz and using assumption C4,
    \begin{align*}
        |\E_\P [f(T)] - \E_\P [(Ax_n)(T)]| \leq  \left \| \frac{\partial \P}{\partial \lambda}\right \|_2  \| f - Ax_n\|_2,
    \end{align*}
    so for any $\eta > 0$ there exists $x \in X$, such that $Ax$ is feasible in $(P^*)$, and $\E_\P[(Ax)(T)] < \beta^{**}(\theta) + 2\eta$. Thus, $\beta(\theta) = \beta^{**}(\theta)$. This concludes the proof of the first part of the proposition.

    For the second claim, fix some $f^*$ as in the statement of the theorem. Because it is feasible in problem \eqref{eq_solvp}, and as the set $\{f \in \mathcal{L}^2(\lambda): f \geq b~ \lambda-\text{a.e.}\}$ is closed in $\mathcal{L}^2$ and, moreover, the closure of the intersection is contained in the intersection of the closures, it must be that $f^* \geq b ~~  \lambda-\text{a.e.}$. By properties of essential supremum, for any $\varepsilon > 0$, one can find $x^* \in B_\gamma$, such that $|\textbf{A}x^* - f^*| < \inf_{x \in B_\gamma} \text{ess} \sup |\textbf{A}x - f^*| + \varepsilon/2$ $\lambda-$a.e. Such $x^*$ then satisfies $\textbf{A}x^* \geq f^* - \inf_{x \in B_\gamma} \text{ess} \sup |\textbf{A}x - f^*| - \varepsilon$. Therefore, because $\textbf{A}x^*$ is a continuous function, $\textbf{A}x^* +  \inf_{x \in B_\gamma} \text{ess} \sup |\textbf{A}x - f^*| + \varepsilon \geq b(T)$. Recall that $S$ contains an atom $s^*$ with $a(\cdot)(s^*) = 1$ under Asssumption 1.ii. It follows that we can increase take $\tilde{x}(s) = x^*(s) + (\inf_{x \in B_\gamma} \text{ess} \sup |\textbf{A}x - f^*| + \varepsilon )\mathds{1}\{s = s^*\},$ which then gives $\textbf{A}\tilde{x} = \textbf{A}x^* + (\inf_{x \in B_\gamma} \text{ess} \sup |\textbf{A}x - f^*| + \varepsilon ) \geq b$, has $||\tilde{x}|| \leq \gamma + (\inf_{x \in B_\gamma} \text{ess} \sup |\textbf{A}x - f^*| + \varepsilon) $ and so is feasible in (P).   
\end{proof}
\subsection{Proof of Lemma \ref{lemma_splines}}
\begin{proof}
We proceed in 5 steps. 

    \paragraph*{Step 1} \citet{ober2023estimating} shows that we may restrict ourselves to dual potentials $f_i^*$ being $L-$Lipschitz, and having the norm $||f^*_i||_\infty \leq 2||b||_\infty$ for $i = 1,2$.
    \paragraph*{Step 2} Using McShane extension, we obtain continuous maps $f_i^*: H_{d_i}(0) \to \mathbb{R}$ that agree with the potentials from Step 1 over $\mathcal{T}_i$, are $L-$Lipschitz over $H_{d_i}(0)$, and have the norm $\sup_{t \in H_{d_i}(0)} |f^*_i(t)| \leq 2||b||_\infty + L \equiv M$. 
    \paragraph*{Step 3} We want to show that 
    \begin{align}
        \min_{\alpha \in \mathbb{R}^{k^{d_i}}: ||\alpha||_1 \leq k^{d_i} M} \sup_{t \in H_{d_i}(0)} |f^*_{i}(t) - \alpha'\mathbf{B}^{(k_i)}_i(t)| \leq \underbrace{\min \left\{\frac{2L\sqrt{d_i}}{k_i - 1},M\right\}}_{\varrho_i}
    \end{align}
     The knots of linear splines partition $H_{d_i}(0)$ into $(k_i-1)^{d_i}$ cubes. We consider the interpolation procedure, under which the interpolant $\hat{f}^*_i(t)$ is equal to $f^*_i(t)$ function on the boundaries of such cubes. For any $t \in H_{d_i}(0)$, take one of the resulting cubes, $Q$, to which it belongs, with vertices $\{t_v\}_{v \in \{0,1\}^{d_i}}$. The multilinear interpolant we get has the form $\hat{f}^*_i(t) = \sum_{v \in \{0,1\}^{d_i}} f^*_i(t_v) B_\nu(t),$ where $B_\nu(t) \in [0,1]$ are products of univariate hat-functions, and $\sum_{\nu \in \{0,1\}^{d_i}} B_\nu(x) = 1$. By triangle inequality, $$|\hat{f}^*_i(t) - f^*_i(t)| \leq \sum_{v \in \{0,1\}^{d}}B_\nu(t)|\hat{f}^*_i(t_v) - f^*_i(t)|,$$
and by Lipschitz continuity and Hölder inequality, we get  $f^*_i$, $$\sum_{v \in \{0,1\}^{d}}B_\nu(t)|\hat{f}^*_i(t_v) - f^*_i(t)| \leq L\sum_{v \in \{0,1\}^{d}}B_\nu(t)||t_v - t|| \leq L h \sqrt{d},$$
where $h = \frac{2}{k_i - 1}$ is the side of the cube. The result follows from observing that in our construction $||\alpha||_1 \leq k^{d_i}||\alpha||_\infty \leq k^{d_i}M$. The minimum with $M$ is from $\alpha = 0$.
   \paragraph*{Step 4} Let $x_0 \in X$ be such that $x_0(s) = 0$ for $s \notin (H \setminus \{s^*\})$ and otherwise equal to the coefficients of the linear interpolation obtained in Step 3 or to $0$ if the corresponding bound $\varrho_i$ is equal to $M$. From Step 1-3 and feasibility of dual potentials it follows that $x_1 = x_0 + (\varrho_1 + \varrho_2)\mathds{1}\{s = s^*\}$ satisfies:
   \begin{align*}
       (\mathbf{A}_n^{(ot)} x_1)(T) \geq b(T) ~ \forall T \in \mathcal{T},
   \end{align*}
   and $||\tilde{x}_1|| = 0,$ and by triangle inequality and conclusion of Step 1, $||\overline{x}_1||_1 \leq ||\overline{x}_0||_1 + \sum^2_{i=1}\varrho_i \leq M(2+ \sum^2_{i=1}k_i^{d_i})$ i.e. $x_1$ is feasible in $(P)$. Moreover, 
   \begin{align*}
       \E_\P[(\mathbf{A}_n^{(ot)} x_1)(T) - f^*_1(T_1) - f_2^*(T_2)] \leq 2 (\varrho_1 + \varrho_2) \leq 2\sum^{2}_{i=1}\frac{L\sqrt{d_i}}{k_i-1}.
   \end{align*}
   \paragraph*{Step 5} The proof is completed by observing that because the kernels $K_i$ used in the construction of $\theta^{(ot)}_n$ are characteristic, it follows from Lemma \ref{lemma_identif}, OT duality, and the results in Section 3 that $\beta(\theta_n^{(ot)})$ is equal to the value of the OT for any $n \in \mathbb{N}$.  
\end{proof}
\end{document}